\newtheorem{assumption}{Assumption}
\newtheorem{lemma}{Lemma}
\definecolor{blue}{HTML}{1F77B4}
\definecolor{orange}{HTML}{FF7F0E}
\definecolor{green}{HTML}{2CA02C}
\pgfplotsset{compat=1.14}
\title{Analyzing Disparity and Temporal Progression of Internet Quality through Crowdsourced Measurements with Bias-Correction}
\author{Hyeongseong Lee, %\and %\inst{1} 
Udit Paul, % \and %
Arpit Gupta, % \and %\inst{3}
Elizabeth Belding, % \and %\inst{4}
 and Mengyang Gu\thanks{Corresponding author (\href{mailto:mengyang@pstat.ucsb.edu}{mengyang@pstat.ucsb.edu})
 }\\ %\footnote{} %\inst{5}
 ~~~
 \\
  University of California, Santa Barbara, California, USA
}
\date{}
\begin{document}
\maketitle              % typeset the header of the contribution

\begin{abstract}
%{\bf abstract}:
 Crowdsourced speed test measurements are an important tool for studying internet performance from the end user perspective.  Nevertheless, despite the accuracy of individual measurements, simplistic aggregation of these data points is problematic due to their intrinsic sampling bias. In this work, we utilize a dataset of nearly 1 million individual Ookla Speedtest\textsuperscript{\tiny\textregistered} measurements, correlate each datapoint with 2019 Census demographic data, and develop new methods to present a novel analysis to quantify regional sampling bias and the relationship of internet performance to demographic profile. We find that the crowdsourced Ookla Speedtest data points contain significant sampling bias across different census block groups based on a statistical test of homogeneity. We introduce two methods to correct the regional bias by the population of each census block group. Whereas the sampling bias leads to a small discrepancy in the overall cumulative distribution function of internet speed in a city between estimation from original samples and bias-corrected estimation, the discrepancy is much smaller compared to the size of the sampling heterogeneity across regions.  Further, we show that the sampling bias is strongly associated with a few demographic variables, such as income, education level, age, and ethnic distribution. Through regression analysis, we find that regions with higher income, younger population, and lower representation of Hispanic residents tend to measure faster internet speeds along with substantial collinearity amongst socioeconomic attributes and ethnic composition. Finally, we find that average internet speed increases over time based on both linear and nonlinear analysis from state space models, though the regional sampling bias may result in a small overestimation of the temporal increase of internet speed. 

%\keywords{bias correction, crowdsourced measurements, data integration, digital equity, speedtest, temporal analysis, internet performance, internet measurement}

\end{abstract}

\section{Introduction}

The allocation of large US federal grants and subsidies to expand Internet infrastructure, such as through the \$42.5 billion Broadband Equity Access and Deployment (BEAD) program initiated in November 2021, is an important step in addressing digital inequity and providing high-quality Internet access to all Americans.  To properly allocate this funding to regions of greatest need, there must first be a methodology for measuring the current state of Internet access and quality at a given location.  Indeed, the US Federal Communications Commission (FCC) recently undertook a significant effort to update publicly available information about US broadband infrastructure.  
First, the FCC worked to improve accuracy in the national Broadband Serviceable Location Fabric, defined as ``a common data set of all residential and business locations (or structures) in the U.S. where fixed broadband internet access service is or can be installed."  Based on this new Fabric, they updated the National Broadband Map with ISP-provided data on the maximum download speed available at each location~\cite{fcc_map}.    
While this is a significant advancement in publicly accessible data on broadband infrastructure, the Fabric and the map  have a number of drawbacks.  
In particular, the Broadband Map does not report actual or average performance but theoretical maximums. It also does not provide information about the reliability or stability of the access over time. Finally, providers have been repeatedly found to overstate coverage claims~\cite{broadband_speed_fcc,major2020}.

To address these drawbacks, the research community, individual users, and governments and policymakers have turned to crowdsourced ``speed test" active measurement data, such as Ookla's speedtest.net~\cite{ookla_main}, Measurement Lab's speed.measurementlab.net~\cite{mlab_speed}, FAST~\cite{fastapi} and Xfinity's speed test~\cite{xfinity_speed}. 
These platforms are in wide use globally.  For instance,  Ookla reports a daily average of over 10 million Speedtests~\cite{aboutookla}.
Speed test platforms offer a variety of measurements, such as instantaneous upload/download speed and latency and loaded latency data, through either an app or a website.  %Aggregated together, these measurements offer  a promising starting point for understanding actual Internet connectivity.
%The current state-of-the-art draws heavily on crowdsourced active network measurements, e.g. ``speed tests".  
As such, these platforms provide an important snapshot of the network state from the vantage point of the end users. Further, because they are active measurements, they provide data on actual performance instead of the theoretical maximum performance reported by the providers. %Popular network speed test platforms, such as Ookla's speedtest.net~\cite{ookla_main}, Measurement Lab's speed.measurementlab.net~\cite{mlab_speed}, FAST~\cite{fastapi} and Xfinity's speed test~\cite{xfinity_speed}, are utilized by Internet users worldwide to conduct these measurements.  
Because of the inherent benefits, numerous U.S. government initiatives (e.g.~\cite{brindisi_report,calspeed,grownorthWI,alabamaSpeed,MImoonshoot,ruralPA}) utilize crowdsourced speed test data to help map the internet access landscape.  %With this data, local governments, community organizations, and others can attempt to discern where to make the economic investment in infrastructure to address digital inequality. {\em Perhaps most critically, the FCC itself has recently specified a  challenge process~\cite{fcc_cognizable}, whereby individual users and communities can gather active measurement data to challenge provider-reported coverage claims.}   

%Internet performance, often referred to as internet speed, has been considered paramount encompassing many realms including business, education, health, and occupational environment. In addition to accessibility to internet, the distribution of internt speed and its inequality across socio-demographic groups have received substantial attention since the COVID-19 pandemic. To measure the quality of internet performance, crowdsource-based speed test program such as Ookla and Measurement Lab (M-Lab) has been heavily used for recent five years. A series of study has been conducted about crowdsourced internet speed tests. For instance,  

%Internet speed tests have been widely used globally. Ookla, for instance, reports a daily average of over 10 million speedtests  \cite{aboutookla}. 
However, despite the broad use speed tests, there are multiple critical limitations that can 
affect the estimation accuracy of these crowdsourced measurements. 
%and usability of the aggregated data. %it is not clearly known to the policy makers on how to perform statistical inference using this data, due to a few inherent limitations~\cite{}.  
In particular, due to the nature of crowdsourced data, the sampling mechanism is often uncontrollable~\cite{saxon2022chicago}.  As a result, inherent in this data is a variety of biases along multiple dimensions that can skew the measurement results in unanticipated ways. For instance, the sampling biases can lead to inappropriate conclusions if they are treated as simple random samples~\cite{meng2018statistical}. Another barrier is that consumers' information, such as broadband plans and demographic profiles, is not directly available from the speed tests. Even if the broadband plans can be correctly inferred for the speed test with the use of other data sets~\cite{paul2023context}, the lack of a demographic profile associated with each measurement makes it challenging to correct the sampling bias directly. Finally, speed test performance may depend on the choice of test server; prior work has found that some servers consistently report speeds 10\% lower than other servers~\cite{macmillan2023comparative}. Nevertheless, the vast number and diversity of speed test data points present abundant information for understanding the disparity of internet quality, particularly as related to different socio-demographic groups and the change/evolution of internet quality over time. 
%, which, to the best of our knowledge, were not studied. 
%Though these caveats were gradually realized by the community, to the best of our knowledge, there is no study to quantify whether the sampling bias leads to 
%there have not been in-depth discussions on inequity of internet performance and its measurement in light with correction of sampling bias. 
%\emb{it would be great if we could have a For instance here with an example.}

It is within this context that we perform our analysis. Specifically, we utilize a vast corpus of nearly 1~million individual Ookla Speedtest\textsuperscript{\tiny\textregistered} measurements, to which we have access through an Ookla for Good\textsuperscript{TM} Data Use Agreement, to correlate network performance measurements and trends with demographic profiles of census block groups from the 2019 U.S. census data.  Our goals are to characterize disparities in internet quality based on socio-demographic groups and to analyze the change in internet quality over time. We illustrate our novel methods using four representative cities and two device types (iOS and Android) as examples, corresponding to 978,101 Speedtest data points over 580 days.
%, corresponding to 80,822 number of tests over 580 days.
Specifically, the contributions of our work are threefold: 
\begin{enumerate}
    \item {We develop novel methods for correcting regional sampling bias and correlating this sampling bias with demographic profiles in the population.} By utilizing the chi-squared test, we find that the proportion of the Speedtests in different census block groups significantly deviates from the baseline proportions in the population. 
    %The sampling is significantly deviant from reference population. 
    We introduce re-weighing and re-sampling methods for correcting the regional sampling bias. Even though we found a visible discrepancy between the original and corrected estimation of the internet distribution, this difference is much smaller than the regional bias, meaning that the sampling bias itself may not substantially affect the estimation of the internet speed\footnote{In this paper, we use the terms ``internet speed" and ``measured internet speed" interchangeably.  In either case, we are specifically referring to the download speed of the connection to the user end device as measured through an Ookla Speedtest.}  distribution in a city. 
    %However, this bias does not severely distort the estimation of the  internet speed distribution. 
    We study the underlying reasons for disproportionate sampling and find that the sampling bias is strongly associated with a few demographic variables, including
income, education level, age, and ethnic distribution.
%is related to regional asymmetry of demographic features. Regions with higher median income, overall education level, more white residents and less black residents are likely to submit more speed tests.

    %To do so, we perform exploratory data analysis and run statistical tests, such as the chi-squared test, to compare the distributions of the speed tests and the proportion of the population for hundreds of census blocks of multiple cities, which contain millions of internet speed test measurements. The results of the test show extremely strong heterogeneity of upload and download speed of the Ookla sample across different census blocks in each city, which means that a large subsampling bias is contained in the data. We further developed a subsampling approach to correct the regional bias within the sample and estimate the cumulative distribution function of the original sample and bias-corrected sample for each city. The result shows the noticeable differences between the bias-corrected and original samples, but the differences are not large.

    \item {We introduce variable selection and regression techniques to fuse data at different spatial granular levels for understanding how internet quality varies across demographic profiles.
    %Analysis and reappraisal of internet speed correlated with demographic attributes with regional bias correction.
    } By studying the individual and aggregated data, we show that the use of individual-level measurements is more statistically efficient in estimating the variability of the speed test measurements than the aggregated data. Through backward variable selection for the original and bias-corrected samples, we find that regions with higher income, younger populations, and smaller Hispanic populations tend to have higher internet speeds. Furthermore, we find strong collinearity between a few demographic variables, which tend to affect the outcomes of the fitted model jointly. 
    %regions with higher income and education level, younger population and smaller Hispanic population tend to have higher internet speed. Furthermore, we find strong collinearity between a few demographic variables, which  tend to jointly affect the outcomes of the fitted model. 
    %Furthermore, Regional sampling bias tends to underestimate the positive correlation of education level with internet speed.

%Our second goal is to correlate the demographic variables at the census block level, such as income and race, to the quality of the internet using both the original sample and bias-corrected sample. 
%We show using individual-level measurements is much more efficient in estimating the variability of the speed test than the aggregated data. Furthermore, utilizing the backward variable selection, we found a variety of variables that are significant to the observations. 

    \item {We analyze temporal changes of measured internet performance over time.} We conduct both a linear regression model and a state space model to study the temporal change of the internet measurements. We utilize the state space representation of the Gaussian process with a Mat{\'e}rn function having a half-integer roughness parameter, which reduces the cost of computing the likelihood and making predictions from $\mathcal O(n^3)$ to $\mathcal O(n)$ operations without approximation, with $n$ being the number of observations. The acceleration with no approximation enables the approach to be computationally feasible with a massive number of crowdsourced measurements. Both linear and nonlinear analyses suggest that internet speed quality gradually improves over time. 
    %regional sampling bias may tends to slightly overestimate the overall gradients.
\end{enumerate}

The rest of this paper is organized as follows. In Section \ref{sec:data}, we introduce the two primary data sets used in this study. In Section \ref{sec:bias-detection},  
%Section~\ref{subsec:sample_balance} 
we investigate the regional sampling bias and the association with demographic variables. We introduce two ways for correcting the sampling bias and compare the difference of estimation between the original and bias-corrected samples. 
Section~\ref{sec:lm} introduces regression analysis of measured internet speed using demographic profiles at different census block groups for both the original and bias-corrected samples. 
Section~\ref{sec:temporal} studies the change of measured internet speeds over time using both linear and nonlinear estimation. We conclude the paper in Section ~\ref{sec:conclusion}. We provide proofs and report additional results in the Appendix.  

\section{Description of data}\label{sec:data}

We combine Ookla\textsuperscript{\tiny\textregistered} internet speed measurement data~\cite{ookla_main} and demography data provided by the U.S. Census Bureau for our study. In the following sections, we describe each dataset in detail.
%For our study, we use Ookla\textsuperscript{\tiny\textregistered} internet speed measurement data~\cite{ookla_main} combined with US census database based on census block groups. We describe each dataset in detail %and explain connections between two datasets 
%in the following sections.
 
\subsection{Ookla's Speedtest}\label{subsec:ookla}

Ookla's Speedtest\footnote{http://speedtest.net} (data provided through Ookla’s Speedtest Intelligence\textsuperscript{\tiny\textregistered})  possesses over 16k measurement servers worldwide~\cite{ookla_server} and allows users to assess the quality of their Internet connection using either a web-based portal or native mobile application~\cite{ookla_main}.  
For each Speedtest, a nearby test server is selected, and  (potentially multiple) TCP connections are used to calculate the throughput of the path. Ookla’s Speedtest Intelligence dataset contains individual Speedtest measurements that include QoS metrics (up/down throughput, latency, packet loss, jitter), as well as meta-features such as ISP, device type, and access type.  Ookla provides performance data aggregated over time and space to the public~\cite{ookla-open-github}.  
However, our Data Usage Agreement (DUA) with Ookla provides us access to nearly 1 million individual Speedtest measurements from four major metropolitan cities in the U.S., which we use for this study. Each of these cities has a population in the range of 370---650k. 

We use the data obtained through our DUA to investigate the download speed data from two different device types---iOS and Android.  We focus on these two device types due to the richness of metadata and geographical information associated with each data point, as well as due to the widespread popularity and usage of these platforms. For demonstration purposes, we use two representative cities for our detailed analysis. Due to our DUA, we maintain the confidentiality of these cities and only refer to them as City A and City B. We also provide numerical analysis of two additional cities, referred to as City C and City D,   in Appendix~\ref{appendix-sec:othercities}. %The Speedtest measurements are collected from 05-31-2020 to 12-31-2021 for City A and B, and from 01-01-2021 to 12-31-2021 for City C and D. 
Since precise geographic location is crucial for our study, we ensure each measurement includes latitude/longitude data recorded at the time the Speedtest is conducted;  truncated GPS coordinates are reported for users
who allow location sharing on their equipment. We utilize the geographic location of each Speedtest to associate the test with the relevant census block group.
Table~\ref{tab:ookla} shows summary statistics for the data used in this paper. We observe that in every listed city, iOS devices outnumber Android devices in our dataset, yielding almost double the number of Speedtest data points from iOS compared to Android. 

\begin{table*}
\centering
  \caption{Overview of Ookla data.}
  \label{tab:ookla}
  \begin{tabular}{ccccc}
    \toprule
    City & Device type & \# of unique devices & \# of tests & Date range\\
    \midrule
    \multirow{2}{*}{A} & iOS & 23,649 & 165,229 & \multirow{2}{*}{May 31, 2020 -- Dec 31, 2021}\\
                       & Android & 8,438 & 73,094 & \\
    \midrule
    \multirow{2}{*}{B}& iOS & 15,507 & 92,526 & \multirow{2}{*}{May 31, 2020 -- Dec 31, 2021}\\
                        & Android & 8,553 & 70,039 & \\
    \midrule
    \multirow{2}{*}{C} & iOS & 34,005 & 213,770 & \multirow{2}{*}{Jan 1, 2021 -- Dec 31, 2021}\\
                       & Android & 14,730 & 117,806 &\\
    \midrule
    \multirow{2}{*}{D}& iOS & 20,607 & 146,703 & \multirow{2}{*}{Jan 1, 2021 -- Dec 31, 2021}\\
                        & Android & 11,122 & 98,934 &\\                    
    \bottomrule
  \end{tabular}
\end{table*}

\subsection{Demographic data from the U.S. Census Bureau}\label{subsec:census}

%We collect demographic data from the 2019 U.S. Census Bureau. The U.S Census Bureau provides extensive demographic profiles across census blocks in the U.S. 
%including but not limited to: average income, bachelor degree, population size, age, sex, and race. 
%Hispanic or Latino origin, household type, relationship to householder, group quarters population, housing occupancy, and housing tenure. 
To study demographic trends with performance data, we leverage data from the U.S. Census Bureau.  In particular, 
we select the 2019 Census demographic information due to the inaccuracy of data since the COVID-19 pandemic~\cite{census2022}.
For City A and B, we obtain population data for each census block group as well as the following demographic attributes: income, age, sex, education level, internet penetration rate, and ethnicity distribution. Income and age are represented by the median household income and median age, respectively. Sex information is characterized by the proportion of males within the total population of each census block group. The level of education is quantified as the proportion of individuals with a bachelor's degree or higher among the population aged 25 and older. The internet penetration rate is the percentage of households with an internet subscription out of the total number of households of each census block group. 
%with presence of internet. 
Ethnic distribution is summarized by the proportion of white, black, Asian, and Hispanic populations within the total population of each census block group.

 The demographic information from the U.S. Census Bureau is only available at the level of the census block group. %That is, we are not able to access individual demographic profiles that corresponds to certain internet speed measurement from Ookla's Speedtest. 
Therefore, we build an integrated dataset with a spatial granular structure, where speed measurements identified in each census block group are correlated with the demographic profiles for that census block group based on U.S. Census Bureau data.

\section{Regional sampling bias detection and correction}\label{sec:bias-detection}

In this section, we explore the presence of sampling bias across regions within cities  A and B. In the context of a given city, a ``region" refers to a census block group, which constitutes a sub-area within that city. If the distribution of the number of samples does not align proportionally with the population, then we say there exists regional sampling bias within our sample, which can render our statistical analysis divergent from the truth at the city level.

\subsection{Comparison of sample and population distributions at the census block group level}
\label{subsec:sample_balance}

Each Ookla Speedtest measurement is associated with a certain census block group based on the geographic location of the test. By comparing the distribution of the sample sizes across census block groups with their population distribution,  we can test whether the sample sizes are homogeneous with the population across regions. As described in Section~\ref{sec:data}, the population of each census block group is based on data from the U.S. Census Bureau. 
We plot the cumulative distributions and probability mass functions between the sample and population at the level of census block group in  Fig.~\ref{fig-pop-sam} for City A and  B, where the census block groups are ordered according to their population sizes. We observe that there is a recognizable discrepancy between the distribution of sample sizes from both Android and iOS in Ookla Speedtests and the population at each census block group. The comparison suggests that there is a regional sampling bias in the Speedtest measurements, as certain census block groups contain disproportionately larger samples than others.  

%Figure shows the distribution heat map of real sample (number of tests) and actual population across all the census block in New Orleans. It turns out that the distribution of sample sizes is different from that of population sizes, regardless of the sort of devices. We can find some census blocks with less or more number of tests than they should have based on the population proportion. 

\begin{figure}[t]
  \centering
  \includegraphics[width=.8\linewidth]{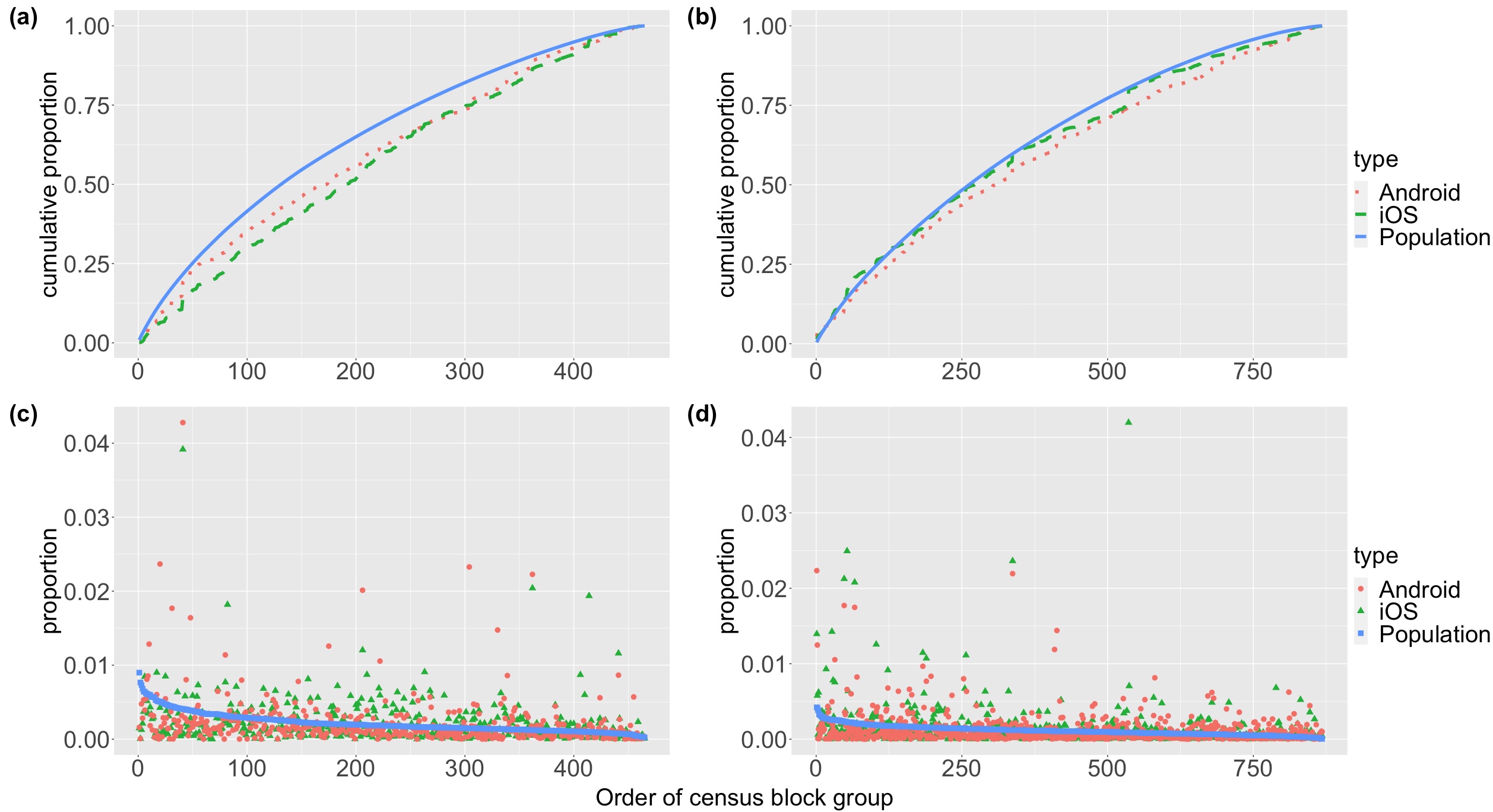}
  \caption{\label{fig-pop-sam}Comparison of the distributions of population and sample sizes. Census block groups are ordered left to right from largest to smallest population.  Cumulative distribution functions are represented for City A and B in (a) and (b), respectively, while probability mass functions for each city are 
   in (c) and (d), respectively. The x-axis indicates the rank of the population size in each census block group, and the census block group with the largest population has rank 1. } 
  %Comparison of the distributions of population and sample sizes for collected regions. Each of plots is a heat map of A-(a) population of New Orleans from iOS devices; A-(b) sample sizes of New Orleans from iOS devices; B-(a) population of New Orleahttps://www.overleaf.com/project/648b6e1c2cdf8ecc153148aans from Android devices; B-(b) sample sizes of New Orleans from Android devices; C-(a) population of Detroit from iOS devices; C-(b) sample sizes  of Detroit from iOS devices; D-(a) population of Detroit from Android devices; and D-(b) sample sizes of Detroit from Android devices.}
  %\Description{A-(a) population heat map of New Orleans from iOS devices; A-(b) sample sizes heat map of New Orleans from iOS devices; B-(a) population heat map of New Orleans from Android devices; B-(b) sample sizes heat map of New Orleans from Android devices; C-(a) population heat map of Detroit from iOS devices; C-(b) sample-size heat map of Detroit from iOS devices; D-(a) population heat map of Detroit from Android devices; D-(b) sample-size heat map of Detroit from Android devices.}
\end{figure}
\vspace*{0.1in}

%\subsubsection{Statistical test for homogeneity of distribution}
We utilize the chi-squared test, one of the most frequently used tests for homogeneity of the samples, to quantify whether the discrepancy between the population and sample size distributions across census block groups is significant. The two assumptions of the test are summarized as follows: 
%Before employing an statistical test to examine whether the discrepancy between sample and population distribution at census blocks is significant,
%is due to the sampling error or due to an actual mechanism of regional bias. 
%we first highlight the assumptions for the test. 
\begin{assumption}\label{assum-1}
The internet speed tests are independent within and between census block groups.    
\end{assumption}

\begin{assumption}\label{assum-2}
The internet speed tests are identically distributed within each census block group, and they are representative for each census block group.
%(census block groups).
\end{assumption}

Assumption \ref{assum-1} approximately holds in the data set, but it can be affected by internet traffic and availability. For instance, an internet outage event may affect more than one census block group, making speed tests not independent. Second, certain groups of internet users may be more likely to utilize Speedtest; the relationship between internet speed and demographic profiles of the census block groups will be studied in Section \ref{subsec:lm}. 

We collect the sample of Ookla Speedtests from $k$ census block groups. For the $i$-th census block group, define $n_{i}$ and $N_{i}$ as the sizes of its sample and population, respectively, for $i=1,\cdots,k$. Furthermore, let $n$ and $N$ respectively be the total number of samples and the population among all the regions, i.e., $n = \sum_{i=1}^{k}n_{i}$ and $N = \sum_{i=1}^{k}N_{i}$. When the samples are drawn according to the population distribution, i.e., the null distribution,  the expected number of samples in the $i$-th region is $\frac{nN_i}{N}$. 
Then, the $\chi^{2}$ test statistic is defined by: 
\begin{equation}
    W = \sum_{i=1}^{k}\frac{(n_i-nN_i/N)^2}{nN_i/N}.
\end{equation}
%When there is no regional bias of among the samples,
Under the null hypothesis, where there is no regional sampling bias across different census block groups, $W\sim \chi^{2}_{k-1}$, i.e., the test statistic $W$ is distributed as a chi-squared distribution with $k-1$ degrees of freedom. 
 %under Assumptions~\ref{assum-1} and~\ref{assum-2} with degree of freedom $k-1$, which is represented by $\chi^{2}_{k-1}$. 
 %Since we are comparing two distributions (sample versus population), the degree of freedom is determined only through the number of categories in the distribution, i.e., regions (census block groups). As there are $k$ different regions, the degree of freedom is given by $k-1$.
 %At 95\% significance level, if the test statistic $w$ (realized value of the test statistics) is greater than $\chi^{2}_{k-1, 0.95}$ with $\chi^{2}_{k-1, 0.95}$ being the number such that $\mathbb{P}\left(W<\chi^{2}_{k-1, \alpha}\right) = \alpha$, or if the p-value is less than 0.05, then we reject the null hypothesis, which leads us to the conclusion that the distribution of sample sizes are  statistically different from that of population sizes.

%\MG{Any result on chi-squared test?}

\vspace*{0.1in}
\begin{table*}
\centering
  \caption{The results from $\chi^{2}$ homogeneity test.}
  \label{tab:chi-test}
  \begin{tabular}{ccccc}
    \toprule
    City & \# of regions & Device type & Test statistics $W$ & p-value\\
    \midrule
    \multirow{2}{*}{A} & \multirow{2}{*}{465} &                   iOS & 342,661 & $<10^{-16}$\\
              &       & Android & 169,012 & $<10^{-16}$\\
    \midrule
    \multirow{2}{*}{B}& \multirow{2}{*}{868} &                  iOS & 398,812 &                                 $<10^{-16}$\\
                &    & Android & 167,411 &     $<10^{-16}$\\
    \bottomrule
  \end{tabular}
\end{table*}
We find that the disparity observed in Fig.~\ref{fig-pop-sam} is in line with the results from the statistical testing as shown in Table~\ref{tab:chi-test}. We observe that the p-values from both iOS and Android devices in City A and B are significantly low, suggesting substantial inhomogeneity in the sample sizes relative to the population across regions.

%In total, City A and B contain 465 and 525 regions,  respectively. In City A, the test statistics turn out to be 342,661 for iOS devices and 169,012 for Android devices with degree of freedom 464. In City B, we obtain 780,884 and 128,545 as the test statistics for iOS and Android devices, respectively. In any case, regardless of cities and device types, p-values are significantly small (less than $10^{-16}$), which implies that there are statistical evidences that the sample size distribution and population distribution are different.

%There are 460, 461, and 376 regions collected for iOS, Android, and desktop, respectively. We find that, for every device type (iOS, Android, and desktop), the sample distribution is significantly different from the population distribution for the collected regions (p-values are less than 0.001 for all three types).

\subsection{Cumulative distribution functions of internet speed: before and after regional bias correction}\label{subsec:cdf}
Because we found significant evidence of %sampling of the Ookla data being 
the Speedtest distribution being
disproportionate across regions, we  introduce 
two ways to correct the regional sampling bias. By comparing the estimation of the cumulative distribution function from the original and bias-corrected samples, we can assess whether this sampling bias affects the estimation of the distribution of internet speed.
%Given that the distribution of internet speed is heterogeneous across regions, the estimation of city-level distribution of internet speed is susceptible to the regional bias. 
%In this section, we correct regional bias by re-sampling and examine whether the estimation of internet speed distribution is different from one before correcting regional bias.

%Note that the sample sizes different from the expectation does not necessarily imply that the sample distribution of internet speed distorts the true distribution. If the distribution of internet speed is homogeneous across the collected regions, then the sample distribution of internet speed would represent its population distribution well regardless of the imbalanced sample sizes against the population. However, it is well documented that there is noticeable regional inequality of internet performance. If it is the case across the census blocks groups in City A and B, then the estimation of internet speed is to be susceptible to the regional bias. As such, we examine the potential impact of the regional bias discussed in Section~\ref{subsec:sample_balance} by comparing the empirical CDF from the sample to distribution functions with bias correction processes applied. We introduce two different methods of bias correction: re-weighing and re-sampling method.

To delineate the empirical CDF of internet speed, let $y_{ij}$ represent the internet speed measurement from the $j$-th unit of the $i$-th census block group for $j=1,\cdots, n_i$  and $i=1,\cdots, k$. 
%($j=1,\cdots, n_i,\ i=1,\cdots, k$). 
Since our estimation covers $k$ different census block groups, we define $y$ as a simple random sample of internet speed from a collection of these $k$ census block groups. Based on  Assumptions~\ref{assum-1} and~\ref{assum-2}, the simple random sample $y$ can be represented by a mixture of $k$ random variables $y_{1},y_{2},y_{3},\cdots, y_{k}$, which denote the internet speed measurement from the 1st, 2nd, 3rd, $\cdots,$ and $k$-th census block group, respectively. Without any regional sampling bias, a simple random sample $y$ of internet speed can be expressed as 
\begin{equation}\label{regional-mixture}
    y =\sum_{i=1}^{k}y_{i}I(z_i=1),\ \text{with }\mathbf{z}=(z_{1},\cdots,z_{k})^{T}\sim\text{Multinom}\left(1, (N_{1}/N, \cdots, N_{k}/N)\right),
\end{equation}
where $I(\cdot)$ refers to an indicator function, $z_{i}$ only takes either 0 or 1 for $i=1,\cdots, k$, $N=\sum_{i=1}^{k}N_{i}$ indicates the total %number of 
population over all $k$ census block groups, and $\sum_{i=1}^{k}z_{i}=1$. A simple random sample is critical for inferring the population, and sampling bias can lead to misleading conclusions, such as those in 2016 U.S. presidential election  \cite{meng2018statistical}.

For a given internet speed $x$, we are interested in the cumulative probability of internet speed for the entire city:
\begin{equation}\label{cum-prob}
    F(x) = \mathbb{P}\left[y\leq x\right].
\end{equation}
Similarly, we define $F_i(\cdot)$ to be the cumulative distribution at the $i$-th census block group, for $i=1,...,k$. The cumulative probability in~\eqref{cum-prob} is a weighted sum of the regional cumulative probability by the corresponding population, as shown in Lemma~\ref{lemma:cum-prob-weight}.

\begin{lemma}\label{lemma:cum-prob-weight}
    The probability in~\eqref{cum-prob} is equal to the weighted sum of $F_{i}(x)$, where $F_{i}(x)=\mathbb{P}\left[y_{i}\leq x\right]$ for $i=1,\cdots,k$, by its population proportion, i.e.
    \begin{equation}
        F(x) = \sum_{i=1}^{k}\frac{N_{i}}{N}F_{i}(x).
        \label{equ:cdf_true}
    \end{equation}
\end{lemma}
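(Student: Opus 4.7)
The plan is to apply the law of total probability by conditioning on the multinomial selector $\mathbf{z}$ that determines the census block group. Because $\sum_{i=1}^{k} z_i = 1$ and each $z_i \in \{0,1\}$, the events $\{z_i = 1\}$ for $i = 1, \ldots, k$ partition the probability space. Therefore I would begin from
\begin{equation*}
F(x) = \mathbb{P}[y \leq x] = \sum_{i=1}^{k} \mathbb{P}[y \leq x,\, z_i = 1].
\end{equation*}

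Next, I would substitute the mixture representation from equation~\eqref{regional-mixture}. On the event $\{z_i = 1\}$ all other $z_j = 0$, so $y$ collapses to $y_i$. Hence $\mathbb{P}[y \leq x,\, z_i = 1] = \mathbb{P}[y_i \leq x,\, z_i = 1]$. Invoking the standard assumption implicit in the mixture construction, namely that the regional selector $\mathbf{z}$ is independent of the per-region speed variables $y_1, \ldots, y_k$, this factors as $\mathbb{P}[y_i \leq x]\,\mathbb{P}[z_i = 1] = F_i(x)\, (N_i/N)$, since the marginal of the multinomial gives $\mathbb{P}[z_i = 1] = N_i/N$.

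Summing over $i = 1, \ldots, k$ then yields $F(x) = \sum_{i=1}^{k} (N_i/N)\, F_i(x)$, which is the claim. There is no real obstacle; the argument is essentially the total-probability decomposition of a finite mixture distribution. The only point worth making explicit is the independence between $\mathbf{z}$ and the $y_i$'s used in the factorization, which is consistent with Assumption~\ref{assum-2} (identically distributed and representative measurements within each region, with regional assignment governed solely by the population weights).
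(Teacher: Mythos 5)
Your proof is correct and follows essentially the same route as the paper's: both decompose $F(x)$ over the partition $\{z_i=1\}$ induced by the multinomial selector, identify the conditional distribution of $y$ given $z_i=1$ with that of $y_i$, and use $\mathbb{P}[z_i=1]=N_i/N$. The only cosmetic difference is that you phrase it as the law of total probability with an explicit independence factorization, whereas the paper writes it as an iterated expectation of the indicator $I(y\leq x)$; the independence of $\mathbf{z}$ and the $y_i$'s that you call out explicitly is used implicitly in the paper's step $\mathbb{E}[I(y\leq x)\mid z_i=1]=\mathbb{E}[I(y_i\leq x)]$.
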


\noindent
The proof is given in Appendix~\ref{appendix-subsec:proof-cdf}.  To estimate the cumulative distribution of the internet speed of a city given in~\eqref{cum-prob}, a conventional way is to use all data points to employ the empirical CDF as:
\begin{equation}\label{emp-cdf}
    \hat{F}_b(x) = \frac{1}{n}\sum_{i=1}^{k}\sum_{j=1}^{n_{i}}I(y_{ij}\leq x),
\end{equation}
where $n= \sum_{i=1}^{k}n_{i}$, indicating the total number of sample. The subscript `$b$' denotes bias, as 
 this empirical CDF in~\eqref{emp-cdf} is a weighted sum of regional empirical CDFs by corresponding sample proportions: 
\begin{equation}\label{emp-cdf-ws}
    \hat{F}_b(x) = \sum_{i=1}^{k}\frac{n_{i}}{n}\hat{F}_{i}(x),
\end{equation}
where $\hat{F}_{i}(x) = \frac{1}{n_{i}}\sum_{j=1}^{n_{i}}I(y_{ij}\leq x)$, indicating the simple empirical CDF of $i$-th census block group for $i=1,\cdots,k$.  Note that $\hat{F}_{i}(x)$ converges to the CDF of the census block group $i$ when the sample size goes to infinity, whereas the $ \hat{F}_b(x)$ typically does not converge to the CDF in \eqref{equ:cdf_true}, when the ratio of the sample $n_i/n$ does not converge to the ratio of the population $N_i/N$ at any census block group $i$.  

%The derivation of~\eqref{emp-cdf-ws} is provided in Appendix. 

%To explain the potential bias in the estimation of~\eqref{cum-prob}, we add an assumption that the heterogeneity of internet speed distributions across different regions.

%\begin{assumption}\label{assum-hetero}
%    For any given $x>0$, the true cumulative probability $\mathbb{P}\left[y_{1}\leq x\right]$, $\mathbb{P}\left[y_{2}\leq x\right]$, $\cdots,\ \mathbb{P}\left[y_{k}\leq x\right]$ are distinct.
%\end{assumption}
%Then a bias in the estimation of~\eqref{cum-prob} is owing to the fact that the distribution of the mixing variable in~\eqref{regional-mixture} depends only through the the population proportions. It follows that the empirical CDF in~\eqref{emp-cdf} is biased for~\eqref{cum-prob}.

\subsubsection{Re-weighing the sample for bias correction}\label{subsubsec:re-weighing}
%Considering the fact that the regional empirical CDF $\hat{F}_{i}(x)$ is unbiased for the regional CDF $F_{i}(x)$, 
A direct approach to correct the regional sampling bias is to construct an unbiased estimator for~\eqref{cum-prob} simply by re-weighing the regional empirical CDFs. The revision of weights shall be based on the population proportions so that the unbiased estimator $\hat{F}_{u}$ for~\eqref{cum-prob} is given by:
\begin{equation}\label{eq-cdf-reweight}
    \hat{F}_{u}(x) = \sum_{i=1}^{k}\frac{N_{i}}{N}\hat{F}_{i}(x).
\end{equation}
One can show that $\mathbb{E}\left[\hat{F}_{u}(x)\right]  = \sum_{i=1}^{k}\frac{N_{i}}{N}F_{i}(x)=F(x)$, meaning that  $    \hat{F}_{u}(x)$ is the unbiased estimator of the CDF for the entire city when the samples at each census block group are simple random samples.
The proof of unbiasedness of  Equation~\eqref{eq-cdf-reweight} is provided in Appendix~\ref{appendix-subsec:unbiased}.

\subsubsection{Re-sampling for bias correction}\label{subsubsec:re-sampling}
In some scenarios, having an unbiased or representative sample is important.
%For instance, an unbiased samples can be used to estimate other statistics, such as variabi
%The re-weighing method in Section~\ref{subsubsec:re-weighing} provides an exact unbiased estimator when samples at each census blocks are simple random samples. But it does not recover a regional-bias-free set of data in order to examine the impact of regional bias on the actual data analysis. As such, 
Thus we introduce a re-sampling approach to provide a set of unbiased samples by correcting the regional sampling bias in the original sample. For the $i$-th census block group, we determine the expected number $n_{i}^{*}$ in the re-sampled data to be:
\begin{equation}\label{eq-def-n-star}
    n_{i}^{*}= \left[\frac{nN_{i}}{N}\right],\ \mbox{ for } i=1\cdots,k,
\end{equation}
where $[x]$ indicates the closest integer of a given real number $x$. In other words, we sub-sample 
%rearrange 
the number of samples at each census block group %as many as it should have been according to 
 based on the census block group's proportion of the population. Then, for the $i$-th census block group, 
%region,
we draw a random sample of internet speed measurement with replacement as many as $n_{i}^{*}$ times. Denote $y_{i1}^{r},\cdots, y_{in_{i}^{*}}^{r}$ as the re-sampled internet speed measurements at $i$-th census block group for $i=1,\cdots,k$. Then, one can define the estimated CDF at a given internet speed $x$ from the re-sampled data as: 

\begin{equation}
    \begin{aligned}
        \hat{F}^{*}_u(x) & = \sum_{i=1}^{k}\sum_{j=1}^{n_{i}^{*}}I\left(y_{ij}^{r}\leq x\right)\\
                        & = \sum_{i=1}^{k}\frac{n_{i}^{*}}{n^{*}}\hat{F}^{*}_{i}(x), \label{eq-cdf-resampled}
    \end{aligned}
\end{equation}
where $n^{*}= \sum_{i=1}^{k}n_{i}^{*}$ and $\hat{F}^{*}_{i}(x)= \sum_{j=1}^{n_{i}^{*}}I\left(y_{ij}^{r}\leq x\right)$. %In the sense that $\hat{F}_{i}^{*}(x)$ still remains unbiased for $F_{i}(x)$ and $n_{i}^{*}/n^{*}\approx N_{i}/N$, 
%The estimator in~\eqref{eq-cdf-resampled} is a desirable approximation for~\eqref{eq-cdf-reweight}. 
The following lemma shows that both the re-sampled estimator in~\eqref{eq-cdf-resampled} and re-weighed estimator in~\eqref{eq-cdf-reweight} can converge to the true CDF of the internet speed in a city when the sample size goes to infinity. 
%asymptotically equivalent as the sample size $n_{i}$ of each region grows to infinity for $i=1,\cdots,k$.

\begin{lemma}\label{lemma::asymptotic}
    Assume that Assumptions~\ref{assum-1} and~\ref{assum-2} hold and that the number $k$ of census block groups and population $N_{i}$ for each $i=1,\cdots,k$ remain fixed as constants. Then, for a given $x\in\mathbb{R}$, we have
    \begin{equation}
        \begin{aligned}
            \hat{F}_{u}(x) & \xrightarrow[]{\mathbb{P}}F(x) \text{ as well as}\\
            \hat{F}^{*}_u(x) & \xrightarrow[]{\mathbb{P}}F(x),
        \end{aligned}
    \end{equation}
    as $n_{i}\rightarrow \infty$ for each $i=1,\cdots,k$ so that $n\rightarrow\infty$ with $\xrightarrow[]{\mathbb{P}}$ denoting convergence in probability.
\end{lemma}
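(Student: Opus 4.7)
The plan is to handle the two estimators $\hat F_u$ and $\hat F^*_u$ separately. The re-weighted case reduces directly to a weak law of large numbers applied within each census block group, whereas the re-sampled case requires additional care to account for the two layers of randomness (the original sampling plus resampling with replacement).

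For $\hat F_u(x) = \sum_{i=1}^k (N_i/N)\hat F_i(x)$, by Assumptions~\ref{assum-1} and~\ref{assum-2} the indicators $\{I(y_{ij}\le x)\}_{j=1}^{n_i}$ are, within each block group $i$, i.i.d.\ Bernoulli random variables with mean $F_i(x)$. A Chebyshev bound using $\mathrm{Var}(\hat F_i(x)) = F_i(x)(1-F_i(x))/n_i \le 1/(4n_i)$ gives $\hat F_i(x) \xrightarrow{\mathbb{P}} F_i(x)$ as $n_i\to\infty$. Because $k$ is fixed and the weights $N_i/N$ are deterministic constants, a finite linear combination of sequences converging in probability converges in probability to the corresponding linear combination of the limits, yielding $\hat F_u(x) \xrightarrow{\mathbb{P}} \sum_{i=1}^k (N_i/N) F_i(x) = F(x)$ by Lemma~\ref{lemma:cum-prob-weight}.

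For $\hat F^*_u(x) = \sum_{i=1}^k (n_i^*/n^*)\hat F^*_i(x)$, I would proceed in three stages. First, since $n_i^* = [nN_i/N]$ differs from $nN_i/N$ by at most one and $n^* = \sum_i n_i^*$, the deterministic weights satisfy $n_i^*/n^* \to N_i/N$ as $n\to\infty$. Second, to show $\hat F^*_i(x) \xrightarrow{\mathbb{P}} F_i(x)$, I would use a tower-property argument: conditional on the original sample, the re-sampled draws $y^r_{ij}$ are i.i.d.\ from $\hat F_i$, so $\mathbb{E}[\hat F^*_i(x)\mid \text{orig.}] = \hat F_i(x)$ and $\mathrm{Var}[\hat F^*_i(x)\mid \text{orig.}] \le 1/(4n_i^*)$. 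The law of total variance then gives $\mathrm{Var}(\hat F^*_i(x)) \le 1/(4n_i^*) + \mathrm{Var}(\hat F_i(x)) \to 0$, while $\mathbb{E}[\hat F^*_i(x)] = \mathbb{E}[\hat F_i(x)] = F_i(x)$, so another Chebyshev bound delivers the desired convergence. Third, combining the two through a Slutsky-type argument for finite linear combinations yields $\hat F^*_u(x) \xrightarrow{\mathbb{P}} \sum_{i=1}^k (N_i/N) F_i(x) = F(x)$.

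The hard part is the second stage of the re-sampled case: one must verify that the extra fluctuation introduced by sampling with replacement vanishes jointly with the variability of the original sample. The conditional variance decomposition above handles this cleanly; an alternative would be to bound $|\hat F^*_u(x) - \hat F_u(x)|$ directly and then invoke the convergence already established for $\hat F_u$, but I expect the tower-property presentation to be the more transparent version to include in the appendix.
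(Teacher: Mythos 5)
Your proposal is correct and follows the same overall architecture as the paper's proof: establish $\hat{F}_{i}(x)\xrightarrow{\mathbb{P}}F_{i}(x)$ block by block, show the deterministic weights $n_{i}^{*}/n^{*}$ converge to $N_{i}/N$ by bounding the rounding error and squeezing, and then pass to the finite linear combination. The one place you genuinely diverge is the convergence of the re-sampled blockwise estimator $\hat{F}^{*}_{i}(x)$: the paper simply asserts this follows ``by the weak law of large numbers,'' which glosses over the fact that draws with replacement from a common finite sample are not unconditionally independent, so the plain i.i.d.\ WLLN does not apply directly. Your tower-property argument --- conditioning on the original sample so that the re-sampled indicators are conditionally i.i.d.\ Bernoulli$(\hat{F}_{i}(x))$, then combining the conditional variance bound $1/(4n_{i}^{*})$ with $\mathrm{Var}(\hat{F}_{i}(x))\to 0$ via the law of total variance and finishing with Chebyshev --- closes that gap cleanly and is the more rigorous presentation; the cost is a slightly longer argument and the need to note that $n_{i}^{*}\to\infty$ because $n\to\infty$ with $N_{i}/N$ fixed. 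The paper's version buys brevity at the price of an unjustified step, so if anything your treatment of the second stage is the one that ought to appear in the appendix.
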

Lemma \ref{lemma::asymptotic} is proved in Appendix~\ref{appendix-asymptotic}. The assessment of the uncertainty of the different methods for estimating the CDF is provided in Appendix~\ref{appendix-subsec::ci}. 
According to Lemma~\ref{lemma::asymptotic}, as long as Assumptions~\ref{assum-1} and~\ref{assum-2} hold, the re-weighted estimator in~\eqref{eq-cdf-reweight} and the re-sampled estimator in~\eqref{eq-cdf-resampled} are both consistent for estimating the CDF of internet speed of a city with a sufficiently large sample for each census block group. This enables us to recover a set of data with regional sampling bias adjusted so that we can easily generate reference results from various sorts of analysis by applying the same analytic tactics to re-sampled data.

%Based on this observation, it is not clear that the inference about the distribution of internet speed is affected by the regional bias found in Section~\ref{subsec:sample_balance}.

%As we see from zoom-in plots in Fig.~\ref{fig-cdf-compare}, the results from three methods are not perfectly identical, but the difference is not remarkable. Rather, for any combination of cities and device types, the graphs from three methods look similar. Although we find evidences for regional bias in both cities and device types, it does not affect the estimation of the marginal distribution of internet speed. 

\begin{figure}[h]
  \centering
  \includegraphics[width=\linewidth]{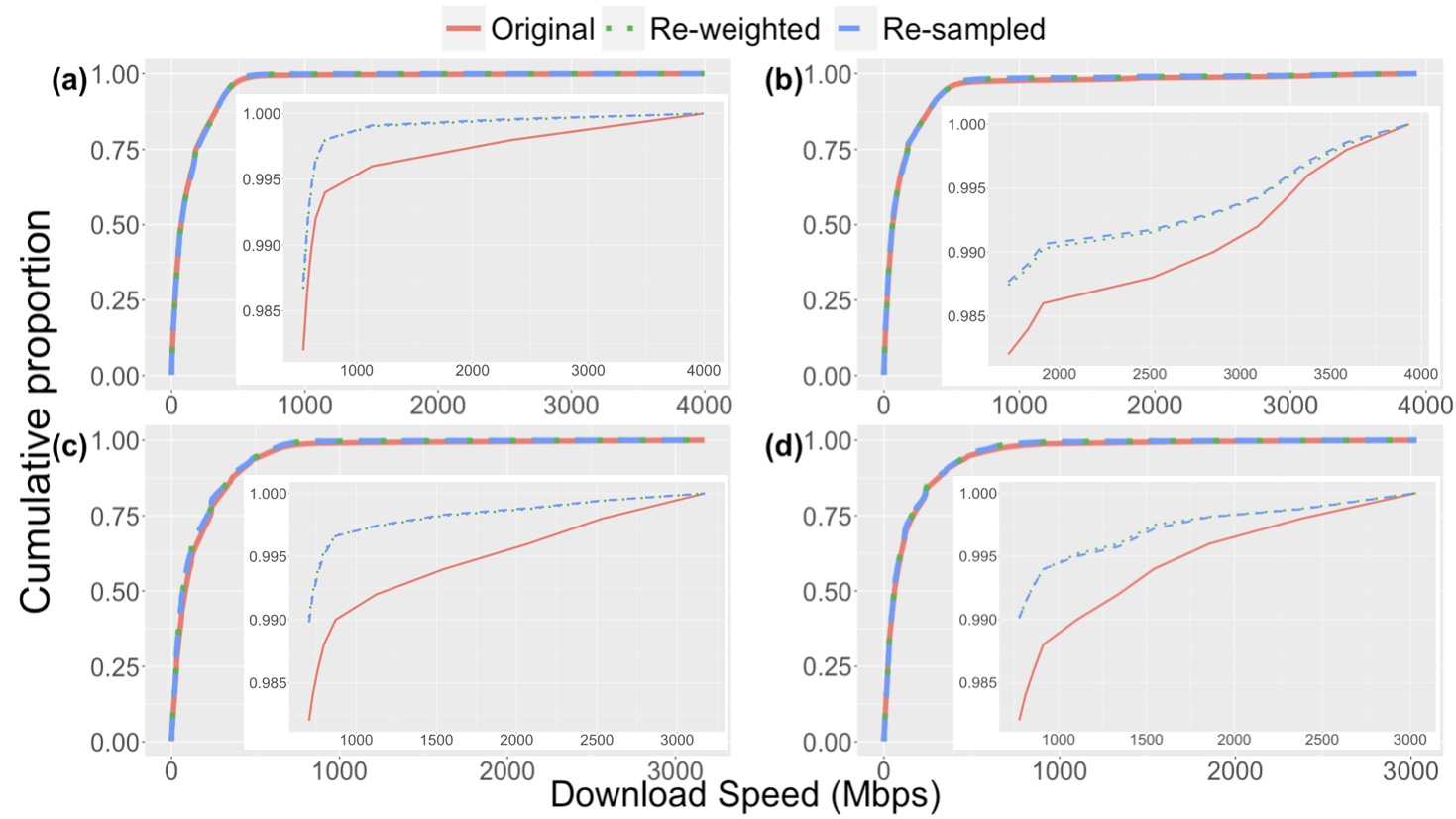}
  \caption{\label{fig-cdf-compare}
  Comparison of empirical CDF from the original samples with re-weighted empirical CDF from original samples and empirical CDF from re-sampled data. (a) iOS devices in City A; (b) Android devices in City A; (c) iOS devices in City B; and (d) Android devices in City B. The insets show zoom-in plots of the estimated CDFs of the internet download speed above the 98$^{th}$ percentile from the original samples.}
\end{figure}

In Fig.~\ref{fig-cdf-compare}, we compare three different CDFs: the empirical CDF from the original samples in \eqref{emp-cdf-ws}, the re-weighted empirical CDF in \eqref{eq-cdf-reweight}, and the empirical CDF from the re-sampled data in \eqref{eq-cdf-resampled}. %For Ookla data from both device types in two cities, 
We find that the estimation from the three methods is not identical, but the differences are much smaller than the discrepancy found from regional sampling bias, as shown in Fig \ref{fig-pop-sam}.  This result indicates that 
even though the Ookla data contains substantial regional sampling bias among census block groups, such bias does not have a large impact on estimating the overall distribution of the internet speed in a city.  

\subsection{Association of regional bias with demographic variables}\label{subsec:boxplot}

While the impact of regional sampling bias over different census block groups on the estimation of the CDF is not as pronounced as the regional bias itself, this impact depends on whether the samples within the census block groups are representative and independent, as outlined in Assumptions~\ref{assum-1}-\ref{assum-2}. 
 These assumptions may not strictly hold in practice.  For instance, if a certain demographic group is over-sampled,  the samples may not be representative in each census block group, violating Assumption \ref{assum-2}. 
%it can serve as a significant indicator of potential distortions in future data analyses. 
Thus, it is crucial to discover potential sampling bias among other sub-groups. %In this regard, it is essential to investigate whether the observed regional bias can be attributed to demographic variables. 
We collect %data %from census bureau 
the demographic variable data for each census block group for the following variables: income, age, gender, educational level, internet penetration rate, and ethnicity. 
The variables that characterize the demographic profile of the $i$-th census block group, for each $i=1,\cdots,k$, are summarized in Table~\ref{tab:demo-list}.

\begin{table*}
\centering
  \caption{List of demographic variables.}
  \label{tab:demo-list}
  \begin{tabular}{c|l}
    \toprule
      Notation & Description \\
    \midrule
    $x_{1i}$ & median household income \\
    $x_{2i}$ & median age  \\
    $x_{3i}$ & \%  of male people \\
    $x_{4i}$ & \% of people with bachelor's or higher degree \\
    $x_{5i}$ & \%  of household with internet subscription plans \\
    $x_{6i}$ & \% of people who are identified as white\\
    $x_{7i}$ & \%  of people who are identified as black\\
    $x_{8i}$ & \% of people who are identified as Asian\\ 
    $x_{9i}$ & \% of people who are identified as Hispanic\\
    \bottomrule
  \end{tabular}
\end{table*}

%\begin{table*}
%\centering
%  \caption{List of demographic variables.}
%  \label{tab:demo-list}
%  \begin{tabular}{cl|cl}
%    \toprule
%      Notation & Description & Notation & Description \\
%    \midrule
%    $x_{1i}$ & median household income & $x_{6i}$ & \% of people who are identified as white\\
%    $x_{2i}$ & median age & $x_{7i}$ & \%  of people who are identified as black\\
%    $x_{3i}$ & \%  of male people & $x_{8i}$ & \% of people who are identified as Asian\\
%    $x_{4i}$ & \% of people with bachelor's or higher degree & 
%    $x_{9i}$ & \% of people who are identified as Hispanic\\
    
 %   $x_{5i}$ & \%  of household with internet subscription plans & & \\
 %   \bottomrule
%  \end{tabular}
%\end{table*}

%\begin{itemize}
%    \item $x_{1i}$: median household income ; 
%    \item $x_{2i}$: median age;
%    \item $x_{3i}$: percentage  of male people;
%    \item $x_{4i}$: percentage of people with bachelor's or higher degree;
%    \item $x_{5i}$: percentage  of household with internet subscription plans;
%    \item $x_{6i}$: percentage of people who are identified as white;
%    \item $x_{7i}$: percentage  of people who are identified as black; 
%     \item $x_{8i}$: percentage of people who are identified as Asian; 
%    \item $x_{9i}$: percentage of people who are identified as Hispanic.
%\end{itemize}

\subsubsection{Characterization and comparison of regions}\label{subsubsec:area-classification}
For each demographic feature $x_{hi}$ where $h=1,\cdots,9$, we have $k$ corresponding values with the index set $\{1,2,3,\cdots,k\}$ as we collect samples from $k$ different census block groups. For each index $i$ where $i=1,\cdots,k$, we classify the census block group into two mutually exclusive groups:
\begin{itemize}
    \item $i$-th census block group is called \textit{over-sampled} if $n_{i}>n^{*}_{i}$; and
    \item $i$-th census block group is called \textit{under-sampled} if $n_{i}<n^{*}_{i}$.
\end{itemize}
We note that there is no census block group where $n_{i}=n_{i}^{*}$ 
%i.e. either $n_{i}<n^{*}_{i}$ of $n_{i}>n_{i}^{*}$ 
in both City A and B, for $i=1,\cdots, k$. Let $\{o_{1},o_{2},\cdots,o_{k_{o}}\}$ and $\{u_{1},u_{2},\cdots,u_{k_{u}}\}$ denote the set of indices for over-/under-sampled regions, with `$o$' and '$u$' denoting over-sampled and under-sampled, respectively, such that $\{o_{1},o_{2},\cdots,o_{k_{o}}\}\cup \{u_{1},u_{2},\cdots,u_{k_{u}}\}=\{1,\cdots,k\}$,  and $\{o_{1},o_{2},\cdots,o_{k_{o}}\}\cap \{u_{1},u_{2},\cdots,u_{k_{u}}\}=\phi$, and $k_{o}+k_{u}= k$. Accordingly, for the $h$-th demographic feature with $h=1,\cdots ,9$, we have two separate groups, denoted as $\{x_{ho_{1}},x_{ho_{2}},\cdots, x_{ho_{k_o}}\}$ and $\{x_{hu_{1}},x_{hu_{2}},\cdots, x_{hu_{k_u}}\}$ for over-/under-sampled census block groups, respectively. 

%comment out as they were discussed later
%We compare the distributions of those groups by box-plots. 
For the $h$-th demographic variable with $h=1,\cdots,9$, define $\bar{x}_{ho} = \frac{1}{k_{o}}\sum_{i=1}^{k_o}x_{ho_{i}}$, and $\bar{x}_{hu} = \frac{1}{k_{u}}\sum_{i=1}^{k_u}x_{hu_{i}}$, which represent the sample mean of the over-/under-sampled regions, respectively. 
To evaluate the difference of location between two distributions, we apply a two-sample t-test with the  test statistics $T$ given by:
%%two sample t test is well-known as it is in stat 101
%Before the application, we assume that, for any $h=1,\cdots,9$, $x_{h1},\cdots, x_{hk}$ are independent, and that, for given $h$, $x_{ho_{1}},x_{ho_{2}},\cdots, x_{ho_{k_o}}$ follow Gaussian distribution with mean $\mu_{ho}$ and variance $\sigma_{h}^{2}$, and $x_{hu_{1}},x_{hu_{2}},\cdots, x_{hu_{k_u}}$ follow Gaussian distribution with mean $\mu_{hu}$ and variance $\sigma_{h}^{2}$. 
%It follows that the test statistics $U$ is given by
\begin{equation}\label{t-stat}
   % U 
    T= \frac{\bar{x}_{ho}-\bar{x}_{hu}}{\sqrt{s^{2}\left(1/k_{o}+1/k_{u}\right)}},
    %\sim t_{k_{o}+k_{u}-2},
\end{equation}
where $s^{2}= \left(\sum_{i=1}^{k_o}(x_{ho_{i}}-\bar{x}_{ho})^{2}+\sum_{i=1}^{k_u}(x_{hu_{i}}-\bar{x}_{hu})^2\right)/(k-2)$. Under the null hypothesis where the distributions of over-/under-sampled census block groups have the same location, the statistic $T \sim t_{k-2} $, a Student's t-distribution~\cite{Casella2002} with $k-2$ degrees of freedom. 
%(which is denoted by a random variable $t_{k-2}$). 

%At 95\% significance level, if the test statistic $u$ (realized value of $U$) is greater than $t_{k-2, 0.975}$ or less than $-t_{k-2, 0.975}$ with $t_{k-2,\alpha}$ being a number such that $\mathbb{P}(U< t_{k-2,\alpha})= \alpha$, then we reject the null hypothesis, which indicates that the population mean values of two groups, $\mu_{ho}$ and $\mu_{hu}$, are distinct. This rejection criterion is equivalent to rejecting the null hypothesis if the p-value is less than 0.05.

\begin{figure}[t]
  \centering
  \includegraphics[width=.9\linewidth]{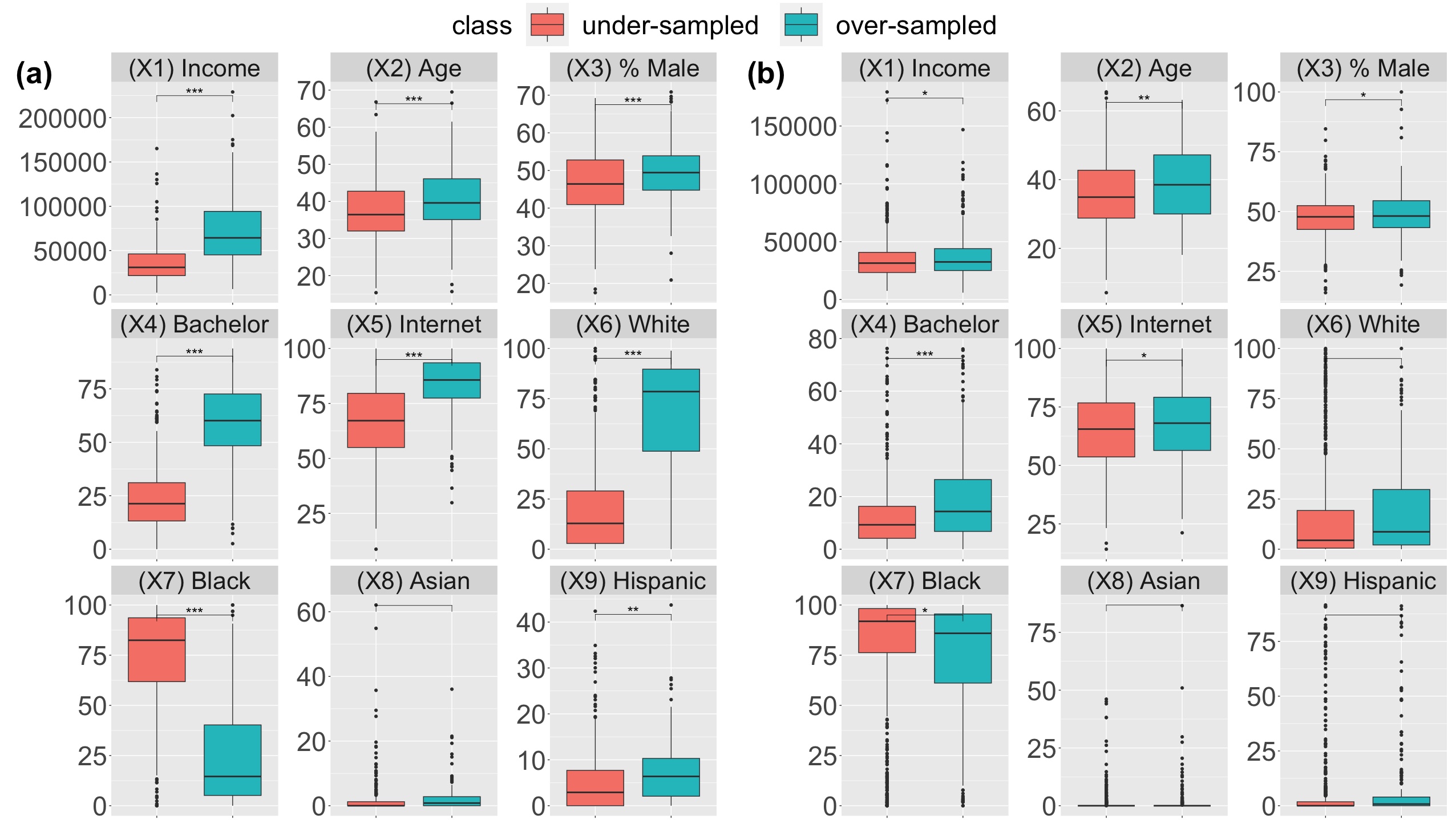}
  \caption{\label{fig-boxplot}
  Comparison of distributions of demographic variables between over-/under-sampled census block groups. Significance is based on the p-values from the two-sample t-test: *** (p-value$<$0.001); ** (p-value$<$0.01); * (p-value$<$0.05); and empty mark for non-significant cases. (a) iOS devices from City A; (b) iOS devices from City B. 
  %The comparison for Android devices is provided in Fig.~\ref{appendix-fig-boxplot-android} of Appendix~\ref{appendix-subsec:bias-detection}.
  }
\end{figure}

\begin{figure}[t!]
  \centering
  \includegraphics[width=.9\linewidth]{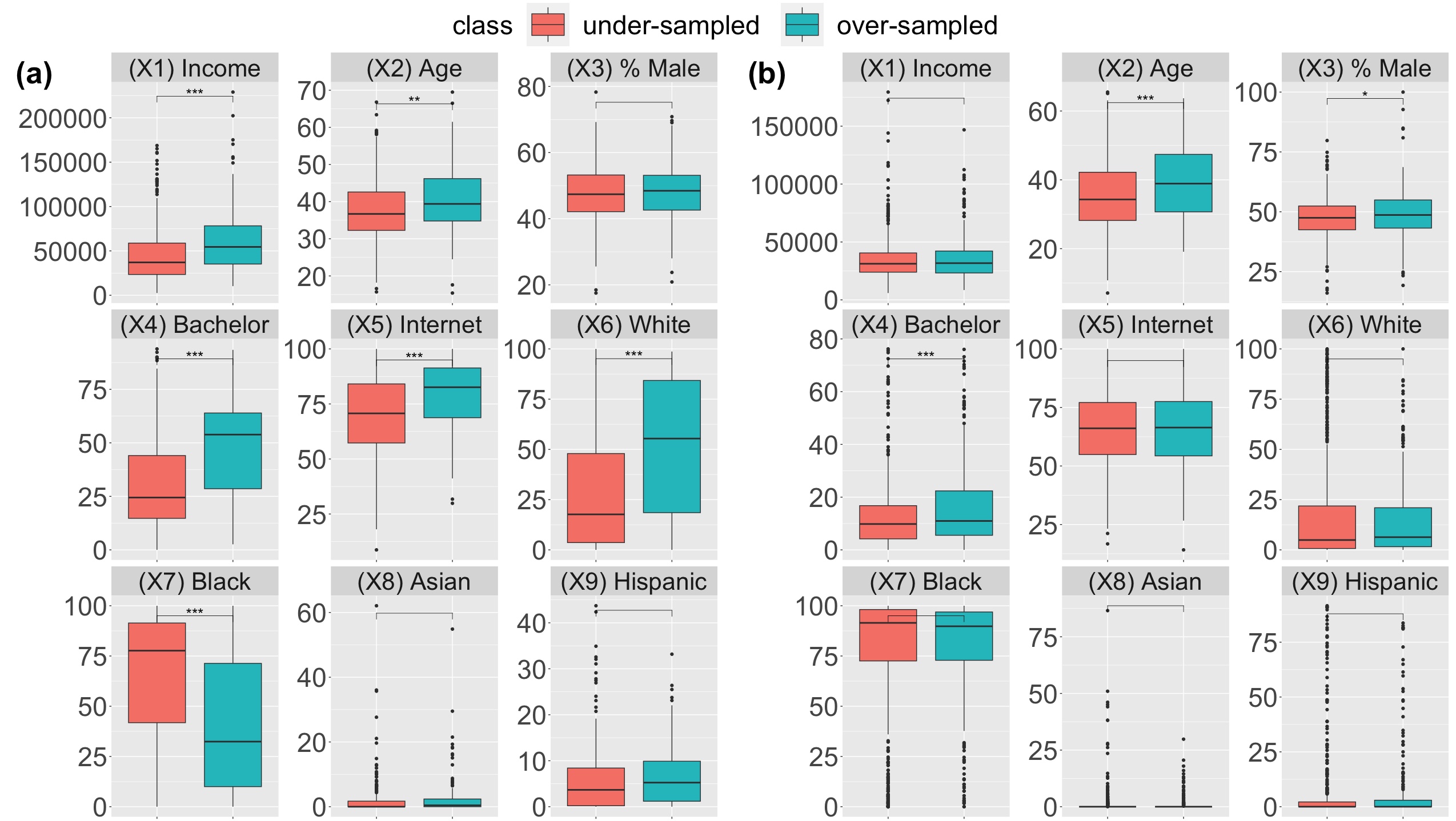}
  \caption{\label{appendix-fig-boxplot-android}
  Comparison of  distributions of demographic variables between over-/under-sampled census block groups. Significance is based on the p-values from the two-sample t-test: *** (p-value$<$0.001); ** (p-value$<$0.01); * (p-value$<$0.05); and empty mark for non-significant cases. (a) Android devices from City A; (b) Android devices from City B. }
\end{figure}

In Fig.~\ref{fig-boxplot} and~\ref{appendix-fig-boxplot-android}, we compare the distributions of demographic variables between over-sampled and under-sampled regions by boxplots, where the stars represent the significance level from the two-sample t-test. In both cities, we find that the sampling of Ookla's Speedtest is indeed inhomogeneous over the demographic variables. Specifically, we observe in both cities that over-sampled census block groups, for either iOS or Android devices, tend to have a greater age and a larger proportion of individuals with bachelor's degrees or higher. However, City A (part (a) in Fig.~\ref{fig-boxplot} and~\ref{appendix-fig-boxplot-android}) presents more prominent asymmetry than City B in many aspects. First, we find in City A that the over-sampled census block groups dominantly have higher income, a larger proportion of the population with a bachelor's degree or higher, and a greater prevalence of households with internet subscriptions. Furthermore, the comparison of Cities A and B reveals the ethnic variety within and between cities. City A shows the contrast between census block groups with a higher percentage of white or black residents, whereas most of the census block groups in City B have a high representation of black residents. This contrast in City A has a relationship with Speedtest sampling; over-sampled regions tend to have a significantly higher representation of white residents and a lower representation of black residents, which is not the case in City B. Additionally, the over-sampled regions for iOS devices in City A tend to have more Hispanic population whereas the difference is not significant in City B.

%indeed associated with the asymmetric sampling structure.
%for each demographic variable with statistical significance from two-sample t-test. 

%and a greater prevalence of households with internet subscriptions.
%On the contrary, and importantly,  under-sampled census block groups are more likely to have a higher representation of black residents compared to over-sampled census block groups.  This is particularly true  in City A. This underscores a significant disparity in sampling patterns among different ethnic groups.

\section{Correlating internet quality with demographic variables}\label{sec:lm}

%In Section~\ref{subsec:cdf}, our observation suggests that the regional bias, as discussed in Section~\ref{subsec:sample_balance}, does not have a critical impact on the estimation of internet speed distribution.
%However, as we delve into 
Section~\ref{subsec:boxplot} showed that Ookla Speedtests from our cities under study are unevenly sampled across different demographic groups.  This result is important if there is a relationship between internet quality (e.g., speed) and demographic variables.  Hence we now ask the question,  
%demographic features of regions could potentially be responsible for the observed regional bias. 
does internet access quality have a relationship to 
demographic variables?
%depend on demographic variables? 
%For instance, regions with higher median household income tend to yield more samples than expected. In this context, it is essential to recognize that the regional bias may indeed influence the joint distribution of internet speed and demographic features. 
To answer this question, we correlate the internet speed with the demographic profiles in each census block group for both the original data set and the re-sampled data set after correcting the regional sampling bias. Because demographic profiles of individuals who conducted Speedtests are not available, we will integrate the Speedtests and demographic variables at two different spatial granular levels in the next section. 
%This interplay is of paramount importance for regression analyses. 
%To illustrate, when implementing the re-sampling method described in Section~\ref{subsubsec:re-sampling} to rectify the regional bias, one might inadvertently end up with a larger number of internet speed samples associated with regions having lower median income, which can change results from regression analysis of internet speed on demographic features. In this section, we examine the implications of regional bias in the context of linear regression analysis.

\subsection{Multiple linear regression}\label{subsec:lm}
%We briefly introduce multiple linear regression models with shared predictors and their relationship in terms of estimating parameters. 
We begin with a multiple linear regression model that relates internet speed $y_{ij}$, the Speedtest $j$ at census block group $i$, to the demographic variable  $\mathbf x_i$:  
\begin{equation}\label{eq-model-indi}
    y_{ij} = \mathbf x_{i}^{T}\boldsymbol \beta + \epsilon_{ij},
\end{equation}
where $\mathbf x_{i} = (1, x_{1i},x_{2i},\cdots,x_{pi})^{T}$ denotes a $(p+1)$-dimensional vector of predictors (such as income, age, level of education, etc.) with $p=9$, $\boldsymbol \beta = (\beta_{0}, \beta_{1}, \cdots, \beta_{p})^{T}\in\mathbb{R}^{(p+1)}$ is a vector of linear coefficients with $p=9$, and $\epsilon_{ij} \sim \mathcal{N}(0,\sigma^{2})$ denotes a Gaussian white noise with variance $\sigma^2$, for $j = 1,\cdots, n_{i},$ and $i=1,\cdots, k$. Note that the observations from the $i^{th}$ census block group, denoted as $y_{i1},\cdots,y_{in_{i}}$, share the same predictor vector $\mathbf x_{i}$ for $i=1,\cdots,k$. It follows that the model~\eqref{eq-model-indi} is equivalent to: 
\begin{equation}\label{eq-model-indi-matrix}
    \mathbf y = \mathbf V \boldsymbol \beta +\boldsymbol \epsilon,
\end{equation}
where $\mathbf y = (y_{11},\cdots,y_{1n_{1}}, \cdots,y_{k1},\cdots,y_{kn_{k}})^{T}\in\mathbb{R}^{n}$ with $n= \sum_{i=1}^{k}n_{i}$, $\mathbf{V} = [\mathbf x_{1} \mathbf{1}_{n_{1}}^{T}  ,\mathbf x_{2} \mathbf{1}_{n_{2}}^{T},\cdots, \mathbf x_{k} \mathbf{1}_{n_{k}}^{T}]^{T}\in\mathbb{R}^{n\times (p+1)}$, $\mathbf{1}_{d}$ denotes a $d$-dimensional column vector with all elements being 1, and \\ $\boldsymbol \epsilon = (\epsilon_{11},\cdots,\epsilon_{1n_{1}},\cdots,\epsilon_{k1},\cdots,\epsilon_{kn_{k}})^{T}\sim\mathcal{MN}(\mathbf 0, \sigma^{2}\mathbf{I}_{n})$ with $\mathbf I_{n}$ being an $n$-dimensional identity matrix.

 %On the other hand, one can construct an aggregated version of model~\eqref{eq-model-indi}, i.e.
 Due to the two different spatial granular levels in model~\eqref{eq-model-indi}, an aggregated version of model~\eqref{eq-model-indi} can be suggested to equalize the granular levels without affecting the estimation of $\boldsymbol \beta$ in model~\eqref{eq-model-indi}.
 \begin{equation}\label{eq-model-agg}
    \bar{y}_{i} = \mathbf x^{T}_{i}\boldsymbol \beta +\epsilon_{i},
\end{equation}
where $\bar{y}_{i} = \frac{1}{n_{i}}\sum_{j=1}^{n_{i}}y_{ij}$, and $\epsilon_{i}$ independently follows $\mathcal{N}(0,\sigma^{2}/n_{i})$ for $i=1,\cdots,k$. Model~\eqref{eq-model-agg} can be written as a  matrix form: 
\begin{equation}\label{eq-model-agg-matrix}
\bar{\mathbf y} = \mathbf X \boldsymbol \beta +\boldsymbol \epsilon_{\text{agg}},
\end{equation}
where $\bar{\mathbf y} = (\bar{y}_{1},\cdots, \bar{y}_{k})^{T}\in\mathbb{R}^{k}$, $\mathbf X = [\mathbf x_{1}, \cdots,\mathbf x_{k}]^{T}\in\mathbb{R}^{k\times (p+1)}$, $\boldsymbol \epsilon_{\text{agg}} = (\epsilon_{1},\cdots,\epsilon_{k})^{T}\sim\mathcal{MN}(\mathbf{0},\sigma^{2}\mathbf W^{-2})$, and $\mathbf{W}$ is a $k\times k$ diagonal matrix with diagonal entries being $\sqrt{n_{1}},\cdots,\sqrt{n_{k}}$. We have the following lemma that connects the individual level model~\eqref{eq-model-indi} and the aggregated level model~\eqref{eq-model-agg}. 

\begin{lemma}\label{lemma::sufficiency}
    Suppose our predictor vectors are given, i.e., we know the vector $\mathbf x_{i}$ for $i=1,\cdots,k$. Let $l$ and $\bar{l}$ be the natural logarithm of the likelihood of model~\eqref{eq-model-indi} and~\eqref{eq-model-agg}, respectively. Then, we have:
    \begin{equation}\label{eq-lemma::suff}
        l(\boldsymbol \beta, \sigma^{2}) = \bar{l}(\boldsymbol \beta, \sigma^{2}) + c_{\sigma^{2}},
    \end{equation}
    where $\bar{l}(\boldsymbol \beta, \sigma^{2}) = k\log\sqrt{2\pi\sigma^{2}} + \sum_{i=1}^{k}\log \sqrt{n_{i}}-\sum_{i=1}^{k}n_{i}(\bar{y}_{i}-\mathbf x_{i}^{T}\boldsymbol \beta)^{2}/(2\sigma^{2})$ and  $c_{\sigma^{2}} =-(n-k)\log\sqrt{2\pi\sigma^{2}}-\sum_{i=1}^{j}\sqrt{n_{i}}-\sum_{i=1}^{k}
    \sum_{j=1}^{n_{i}}(y_{ij}-\bar{y}_{i})^{2}/(2\sigma^{2})$.
\end{lemma}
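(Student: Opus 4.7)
The plan is to expand both log-likelihoods explicitly from the normal density and then show that their difference depends only on $\sigma^2$ and the data (not on $\boldsymbol\beta$). Since under model~\eqref{eq-model-indi} we have $y_{ij}\mid \mathbf x_i \sim \mathcal N(\mathbf x_i^T\boldsymbol\beta,\sigma^2)$ independently, I can write
\begin{equation*}
l(\boldsymbol\beta,\sigma^2)=-n\log\sqrt{2\pi\sigma^2}-\frac{1}{2\sigma^2}\sum_{i=1}^{k}\sum_{j=1}^{n_i}(y_{ij}-\mathbf x_i^T\boldsymbol\beta)^2,
\end{equation*}
with $n=\sum_{i=1}^k n_i$. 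Similarly, since under model~\eqref{eq-model-agg} we have $\bar y_i\sim\mathcal N(\mathbf x_i^T\boldsymbol\beta,\sigma^2/n_i)$ independently, the aggregated log-likelihood $\bar l(\boldsymbol\beta,\sigma^2)$ is a sum of $k$ normal log-densities whose variance depends on $n_i$; expanding this will produce the $-k\log\sqrt{2\pi\sigma^2}$ term, the $\sum_i\log\sqrt{n_i}$ term, and the quadratic term $\sum_i n_i(\bar y_i-\mathbf x_i^T\boldsymbol\beta)^2/(2\sigma^2)$ appearing in the statement.

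The central algebraic step is the standard within/between sum-of-squares decomposition applied to each census block group:
\begin{equation*}
\sum_{j=1}^{n_i}(y_{ij}-\mathbf x_i^T\boldsymbol\beta)^2=\sum_{j=1}^{n_i}(y_{ij}-\bar y_i)^2+n_i(\bar y_i-\mathbf x_i^T\boldsymbol\beta)^2,
\end{equation*}
where the cross term vanishes because $\sum_{j=1}^{n_i}(y_{ij}-\bar y_i)=0$. Substituting this identity into $l(\boldsymbol\beta,\sigma^2)$ splits the quadratic into a piece that matches the $\boldsymbol\beta$-dependent portion of $\bar l$, plus a residual within-group sum of squares that is a function of the data and $\sigma^2$ only.

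After this substitution I will subtract $\bar l(\boldsymbol\beta,\sigma^2)$ term-by-term. The quadratic $\boldsymbol\beta$-dependent terms cancel identically, leaving
\begin{equation*}
l(\boldsymbol\beta,\sigma^2)-\bar l(\boldsymbol\beta,\sigma^2)=-(n-k)\log\sqrt{2\pi\sigma^2}-\sum_{i=1}^{k}\log\sqrt{n_i}-\frac{1}{2\sigma^2}\sum_{i=1}^{k}\sum_{j=1}^{n_i}(y_{ij}-\bar y_i)^2,
\end{equation*}
which depends only on $\sigma^2$ and the observations, so it may be taken as $c_{\sigma^2}$, matching~\eqref{eq-lemma::suff}.

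There is no substantive obstacle here; the argument is essentially the Fisher--Neyman factorization that shows the group means $\{\bar y_i\}$ are sufficient for $\boldsymbol\beta$ given $\sigma^2$. The only thing to be careful about is bookkeeping: keeping track of the $\log\sqrt{2\pi\sigma^2}$ constants (there are $n$ of them in $l$ and $k$ in $\bar l$, producing the $-(n-k)$ coefficient), tracking the $\sqrt{n_i}$ Jacobian factors arising from the reduced variance of $\bar y_i$, and verifying that the cross term in the sum-of-squares decomposition vanishes in each group. Once these pieces are aligned, \eqref{eq-lemma::suff} follows directly and as a corollary the maximum likelihood estimator of $\boldsymbol\beta$ obtained from the aggregated model coincides with that from the individual-level model.
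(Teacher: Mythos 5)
Your proposal is correct and follows essentially the same route as the paper's own proof: expand both Gaussian log-likelihoods, apply the within/between sum-of-squares decomposition $\sum_{j}(y_{ij}-\mathbf x_i^T\boldsymbol\beta)^2=\sum_{j}(y_{ij}-\bar y_i)^2+n_i(\bar y_i-\mathbf x_i^T\boldsymbol\beta)^2$ with the vanishing cross term, and compare term by term. Your careful bookkeeping in fact yields the correct form $c_{\sigma^2}=-(n-k)\log\sqrt{2\pi\sigma^2}-\sum_{i=1}^{k}\log\sqrt{n_i}-\sum_{i=1}^{k}\sum_{j=1}^{n_i}(y_{ij}-\bar y_i)^2/(2\sigma^2)$, which exposes minor typographical slips in the lemma statement itself (the sign of $k\log\sqrt{2\pi\sigma^2}$ in $\bar l$ and the term $\sum_{i=1}^{j}\sqrt{n_i}$, which should read $\sum_{i=1}^{k}\log\sqrt{n_i}$).
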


Lemma~\ref{lemma::sufficiency}  is derived in Appendix~\ref{appendix-suff-proof}. 
%Lemma~\ref{lemma::sufficiency} provides an aspect of justification of model~\eqref{eq-model-agg} as an alternative model for model~\eqref{eq-model-indi}. 
Equation~\eqref{eq-lemma::suff} in Lemma~\ref{lemma::sufficiency} means that the difference between the log-likelihood of model~\eqref{eq-model-indi} and~\eqref{eq-model-agg} does not depend on the linear coefficient $\boldsymbol  \beta$, and only depends on the variance $\sigma^{2}$. 
%Therefore, if we know the value of $\sigma^{2}$ in advance, the estimation of $\boldsymbol \beta$ through model~\eqref{eq-model-indi} is equivalent. 
%from one through model~\eqref{eq-model-agg}. 
Thus the sufficient statistics of the linear coefficients~\cite{Casella2002} is $(\bar{\mathbf y}, \sigma^2)$ with  $\bar{\mathbf y} = (\bar{y}_{1},\cdots, \bar{y}_{k})^{T}$ being the aggregated data. Note the maximum likelihood estimator of linear coefficients does not depend on  $\sigma^2$, whereas the uncertainty of the estimation depends on the noise variance.  
%%This equivalence is related to the concept of statistical sufficiency~\cite{Casella2002}, according to which the sufficient statistic 
%is the aggregated data $\bar{y} = (\bar{y}_{1},\cdots, \bar{y}_{k})^{T}$.
%exhausts all the necessary information in terms of estimating $\boldsymbol \beta$, provided $\sigma^{2}$ known.

 In practice, however, the variance parameter $\sigma^{2}$ of the noise is also unknown. One may be tempted to use the aggregated model (\ref{eq-model-agg-matrix}) to estimate $\sigma^{2}$. 
 The lemma below indicates that the estimation of the variance of the noise by individual-level data is more efficient than that by the aggregated data.
 %Although the assumption of $\sigma^{2}$ being unknown would not change the estimation of $\boldsymbol \beta$, it shows a significant difference in efficiency with respect to the estimation of $\sigma^{2}$.

\begin{lemma}\label{lemma::efficiency}
    Define $\hat{\sigma}^{2} = \mathbf y^{T} (\mathbf I_{n}-\mathbf J)\mathbf y/(n-p-1)$ and $\hat{\sigma}^{2}_{\text{agg}} = \mathbf y^{T}\mathbf W(\mathbf I_{k}-\mathbf H)\mathbf W\mathbf y/(k-p-1)$ where $\mathbf J = \mathbf V(\mathbf V^{T}\mathbf V)^{-1}\mathbf V^{T}$ and $\mathbf H = \mathbf W\mathbf X(\mathbf X\mathbf W^{2}\mathbf X)^{-1}\mathbf X^{T}\mathbf W$ following the notation from~\eqref{eq-model-indi-matrix} and~\eqref{eq-model-agg-matrix}. Note that both $\hat{\sigma}^{2}$ and $\hat{\sigma}^{2}_{\text{agg}}$ are unbiased for $\sigma^{2}$, i.e. $\mathbb{E}\left[\hat{\sigma}^{2}\right] = \mathbb{E}\left[\hat{\sigma}^{2}_{\text{agg}}\right] = \sigma^{2}$. However,
    $$\text{Var}\left[\hat{\sigma}^{2}\right] = \frac{2\sigma^{4}}{n-p-1}<\frac{2\sigma^{4}}{k-p-1} = \text{Var}\left[\hat{\sigma}^{2}_{\text{agg}}\right],$$
    as long as $n>k$.
\end{lemma}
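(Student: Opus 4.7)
The plan is to recognize $\hat{\sigma}^2$ and $\hat{\sigma}^2_{\text{agg}}$ as the standard residual-based variance estimators from an ordinary least squares fit of the individual-level model \eqref{eq-model-indi-matrix} and from a generalized (weighted) least squares fit of the aggregated model \eqref{eq-model-agg-matrix}, respectively, and then invoke the classical chi-squared distributional result for each. The key preliminary facts I would need are that $\mathbf{I}_n - \mathbf{J}$ and $\mathbf{I}_k - \mathbf{H}$ are both idempotent projection matrices (onto the orthogonal complements of the column spaces of $\mathbf{V}$ and $\mathbf{W}\mathbf{X}$, respectively) and that their ranks equal $n-p-1$ and $k-p-1$ when $\mathbf{X}$ has full column rank $p+1$ (which implies $\mathbf{V}$ does, since the rows of $\mathbf{V}$ are $n_i$-fold repetitions of the rows of $\mathbf{X}$).

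For unbiasedness of $\hat{\sigma}^2$, I would write $\mathbf{y} = \mathbf{V}\boldsymbol{\beta} + \boldsymbol{\epsilon}$ and use $(\mathbf{I}_n-\mathbf{J})\mathbf{V} = \mathbf{0}$ to obtain $\mathbf{y}^T(\mathbf{I}_n-\mathbf{J})\mathbf{y} = \boldsymbol{\epsilon}^T(\mathbf{I}_n-\mathbf{J})\boldsymbol{\epsilon}$; taking expectations and using $\mathbb{E}[\boldsymbol{\epsilon}^TA\boldsymbol{\epsilon}]=\sigma^2\,\mathrm{tr}(A)$ with $\mathrm{tr}(\mathbf{I}_n-\mathbf{J}) = n-p-1$ gives $\mathbb{E}[\hat{\sigma}^2] = \sigma^2$. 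The same argument on the aggregated side, after whitening via $\mathbf{u} := \mathbf{W}\bar{\mathbf{y}} = \mathbf{W}\mathbf{X}\boldsymbol{\beta}+\mathbf{W}\boldsymbol{\epsilon}_{\text{agg}}$ with $\mathbf{W}\boldsymbol{\epsilon}_{\text{agg}} \sim \mathcal{N}(\mathbf{0},\sigma^2\mathbf{I}_k)$, yields $\mathbb{E}[\hat{\sigma}^2_{\text{agg}}]=\sigma^2$, using $(\mathbf{I}_k-\mathbf{H})\mathbf{W}\mathbf{X}=\mathbf{0}$ and $\mathrm{tr}(\mathbf{I}_k-\mathbf{H})=k-p-1$.

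For the variances, I would appeal to the standard fact that if $\mathbf{z}\sim\mathcal{N}(\mathbf{0},\sigma^2\mathbf{I}_m)$ and $A$ is an $m\times m$ symmetric idempotent matrix of rank $r$, then $\mathbf{z}^T A\mathbf{z}/\sigma^2 \sim \chi^2_r$, so that $\mathrm{Var}(\mathbf{z}^TA\mathbf{z}) = 2r\sigma^4$. Applied to $A=\mathbf{I}_n-\mathbf{J}$ and $\mathbf{z}=\boldsymbol{\epsilon}$ it gives $\mathrm{Var}(\hat{\sigma}^2) = 2\sigma^4/(n-p-1)$; applied to $A=\mathbf{I}_k-\mathbf{H}$ and $\mathbf{z}=\mathbf{W}\boldsymbol{\epsilon}_{\text{agg}}$ it gives $\mathrm{Var}(\hat{\sigma}^2_{\text{agg}}) = 2\sigma^4/(k-p-1)$. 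The strict inequality then follows immediately from $n > k$, which holds whenever at least one census block group contributes more than one speedtest.

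The only mild obstacle is the bookkeeping around the weighted projection $\mathbf{H}$: I would verify explicitly that $\mathbf{I}_k - \mathbf{H}$ is symmetric and idempotent with respect to the standard inner product (after absorbing $\mathbf{W}$ into the design matrix, so that $\mathbf{H}$ is the orthogonal projection onto the column space of $\mathbf{W}\mathbf{X}$), because the statement as written uses $\mathbf{W}(\mathbf{I}_k-\mathbf{H})\mathbf{W}$ rather than a weighted quadratic form on $\bar{\mathbf{y}}$ directly. Once this whitening step is made explicit, the argument reduces entirely to two applications of the same chi-squared identity, and no further technical work is needed.
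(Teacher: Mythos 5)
Your proposal is correct and follows essentially the same route as the paper's proof: the paper likewise whitens the aggregated model via $\mathbf{W}$, verifies that $\mathbf{I}_n-\mathbf{J}$ and $\mathbf{I}_k-\mathbf{H}$ are orthogonal projections of ranks $n-p-1$ and $k-p-1$ annihilating the mean (so the noncentrality vanishes), and invokes the chi-squared distribution of quadratic forms in Gaussian vectors together with $\mathrm{Var}(\chi^2_d)=2d$. No substantive differences to report.
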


%Lemma~\ref{lemma::efficiency} highlights a significant disparity between model~\eqref{eq-model-indi} and model~\eqref{eq-model-agg}. When estimating $\sigma^{2}$, it proves to be more efficient to utilize model~\eqref{eq-model-indi} rather than model~\eqref{eq-model-agg}, primarily due to the former's reduced uncertainty in the estimation process. 
%This estimation efficiency gains importance because the precision of the estimation of $\sigma^{2}$ directly impacts our ability to gauge the uncertainty surrounding $\boldsymbol \beta$, our parameter of interest.
Thus, we adopt the formulation of model~\eqref{eq-model-indi} to assess the linear relationship between internet download speed (Mbps) and demographic features.
%%this is discussed before
%As described in Section~\ref{subsec:cdf} and~\ref{subsec:boxplot}, we let $y_{ij}$ represent the internet speed measurement of $j$-th unit at $i$-th census block group for $j=1,\cdots, n_{i},$ and $i=1,\cdots,k$, and let $x_{hi}$ denote the $h$-th demographic variable of $i$-th census block group for $h=1,\cdots,9,$ and $ i=1,\cdots,k$. Specifically, recall that
%\begin{itemize}
 %   \item $x_{1i}$: median household income ; 
  %  \item $x_{2i}$: median age;
   % \item $x_{3i}$: percentage (0-100\%) of male people;
    %\item $x_{4i}$: percentage (0-100\%) of people with bachelor's or higher degree;
    %\item $x_{5i}$: percentage (0-100\%) of household with internet subscription plans;
   % \item $x_{6i}$: percentage (0-100\%) of people who are identified as white;
   % \item $x_{7i}$: percentage (0-100\%) of people who are identified as black; 
  %   \item $x_{8i}$: percentage (0-100\%) of people who are identified as Asian; and
  %  \item $x_{9i}$: percentage (0-100\%) of people who are identified as Hispanic.
%\end{itemize}
%We define 10-dimensional vector $\mathbf x_{i} = (1, x_{1i},x_{2i},\cdots,x_{9i})^{T}$ to be our predictor. We apply this notation to model~\eqref{eq-model-indi} to examine the parameter $\boldsymbol \beta = (\beta_0, \beta_1, \cdots, \beta_9)^{T}\in\mathbb{R}^{10}$. 
We apply the regression approaches for both the original samples and the re-sampled data studied in Section~\ref{subsubsec:re-sampling}. 
%To correct the regional bias in samples, we utilize re-sampling method in Section~\ref{subsubsec:re-sampling}.
Similarly, for the re-sampled data, we construct the linear regression as follows:
%model~\eqref{eq-model-indi} as
\begin{equation}\label{eq-model-resample}
    y_{ij}^{r} = \mathbf x_{i}^{T}\boldsymbol \beta^{r}+\epsilon_{ij}^{r},
\end{equation}
where 
%%defined before
%$\mathbf{x}_{i} = (1, x_{1i},\cdots,x_{pi})^{T}\in\mathbb{R}^{(p+1)}$, 
$\boldsymbol \beta^{r} = (\beta_{0}^{r},\beta_{1}^{r},\cdots, \beta_{p}^{r})^{T}\in\mathbb{R}^{(p+1)}$, and $\epsilon_{ij}^{r}$ independently follows $\mathcal{N}(0, \sigma_{r}^{2})$ for  $j=1,\cdots, n_{i}^{*}$, and $i=1,\cdots, k$.

\subsection{Model selection}\label{subsec::model-selection}

Instead of directly comparing the full models presented in~\eqref{eq-model-indi} and~\eqref{eq-model-resample}, we employ a backward model selection procedure to select demographic variables that significantly impact measured internet speeds based on the Akaike Information Criterion (AIC) ~\cite{akaike1998information}.
%%use original reference of AIC instead of a recent review one
%jieding2018modelselection
Let $\mathcal{M}_{m}$ denote a model with index $m$, and define $d_{m}$ as the dimension of model $\mathcal{M}_{m}$. The AIC for $\mathcal{M}_{m}$ is defined as follows:
\begin{equation}\label{eq-AIC}
    \text{AIC}_{m} = -2\hat{l}_{n,m}+d_{m},
\end{equation}
where $\hat{l}_{n,m}$ represents the maximum log-likelihood of model $\mathcal{M}_{m}$ given $n$ observed data. Commencing with a full model featuring $p$ predictors, 
%%shouldn't be M_m 
%denoted as $\mathcal{M}_{m}$,
we consider $p$ distinct submodels created by eliminating one variable at a time. For each submodel, we compute the corresponding AIC using~\eqref{eq-AIC}. If any submodel exhibits a smaller AIC compared to the existing model, we select that model for the subsequent stage and repeat the same process. If no such submodel is found to have a small AIC, we conclude with the existing model 
%$\mathcal{M}_{m}$ 
as the final selected model.

\begin{figure}[t]
  \centering
  \includegraphics[width=\linewidth]{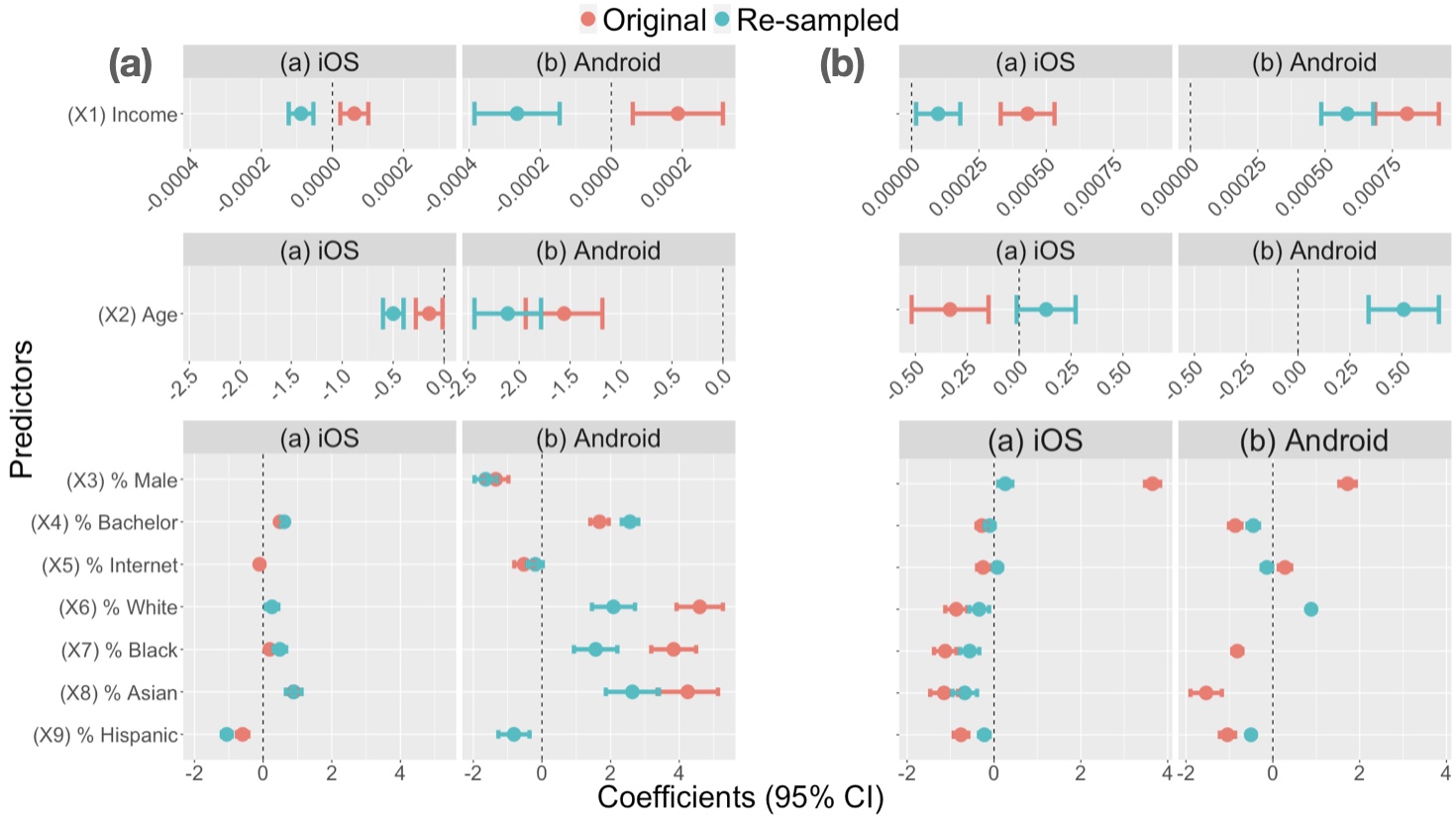}
  \caption{\label{fig-coef-compare}
  Comparison of regression coefficient estimates (dots) and 95\% confidence intervals (bars) from original data and re-sampled data. Multiple linear regression is conducted with backward variable selection by AIC for both iOS and Android from (a) City A, and (b) City B. %95\% confidence interval (CI) is shown with bars. 
  Only the variables selected from the model selection step are shown in the figure.}
\end{figure}

\subsection{Disparity of internet quality among demographic groups}

Fig. \ref{fig-coef-compare} provides estimated linear coefficients from the regression analysis conducted on both the original and re-sampled data after the backward elimination technique using AIC. It presents estimates of coefficients and their corresponding 95\% confidence intervals for variables, whereas the variables dropped from the backward selection procedure are not shown. From the graphs, we can make multiple observations. 

First, there is a significant negative correlation between measured internet speed and age in City A,  City C, and City D shown in Fig \ref{appendix-fig-coef-compare} in the Appendix. This means internet speed is greater in census block groups with a younger population on average, given other estimated variables. City B is an exception, as the effect of age does not seem to be clear. 

Second, regions with a larger percentage of  Hispanic population generally have lower measured internet speed for both cities and device types shown in Fig. \ref{fig-coef-compare}. For the two cities shown in Fig.\ref{appendix-fig-coef-compare} in Appendix~\ref{appendix-subsec:lm}, the effect of the percentage of the Hispanic population is not as clear as the two cities shown in Fig. \ref{fig-coef-compare}; however, after examining the pairwise correlation plot between the covariates in Fig. \ref{appendix-fig-corrplot}-\ref{appendix-fig-corrplot_android_CD}, we find that the percentage of the Hispanic population is negatively correlated with the  percentage of the bachelor's degree and availability of the internet. Thus the effect of Hispanic percentage in population can be partly offset by  these two effects. For instance, the coefficient of bachelor's degree in the re-sampled data is significantly larger than zero in part (a) of Fig. \ref{appendix-fig-coef-compare}, which implicitly suggests that the regions with higher Hispanic percentage may have comparatively lower internet speed, as these regions tend to have a lower percentage of residents with bachelor's degrees. 

Third, we find that the regression coefficients for median income are positive for the original data of both device types in  cities A and B, meaning that the regions with higher income tend to have faster-measured internet speed. This can likely be attributed to the availability of faster internet plans, as well as the higher purchasing power of local residents; prior work found that the median income of the census block groups  play a critical role in determining whether a region gets a fiber deployment and consequently faster internet speeds~\cite{paul2023decoding}. 
The coefficients of median income in the re-sampled data are positive for City B but negative for City A, as shown in Fig.~\ref{fig-coef-compare}. In both cities, the linear coefficients of income in the re-sampled data are smaller than the ones from the original data. To further explore the difference, we find that the pairwise correlation between income and bachelor's degree is strongly positive in both cities, as shown in Fig.~\ref{fig-corrplot}. 
%correlation between income and other variables are plots in Part (a) in Fig. \ref{fig-corrplot}, which shows strong positive correlation between income and bachelor degree.
The estimated linear coefficients of the bachelor's degree in the re-sampled data are larger than those in the original data
for both devices and cities. As  the bachelor's degree and income have strongly positive correlation, the larger coefficient of the bachelor's degree explains the positive impact on the internet speed, which makes the coefficients of the income smaller in the re-sampled data. Note that an estimated coefficient represents the conditional effects of a covariate given all other covariates in multiple linear regression. The effect of a covariate typically depends on the effects of other variables as multicollinearity of the covariates is common in practice \cite{mela2002impact}. 

The regression analysis of Speedtest data from the two cities shows that measured internet speed critically depends on demographic profiles of regions, such as the income, education level and ethnic composition. Future work should study the reasons behind these associations, such as the availability of faster internet plans, and their cost per bit, e.g. carriage value~\cite{paul2023decoding}.

\begin{figure}[t]
  \centering
  \includegraphics[width=.8\linewidth]{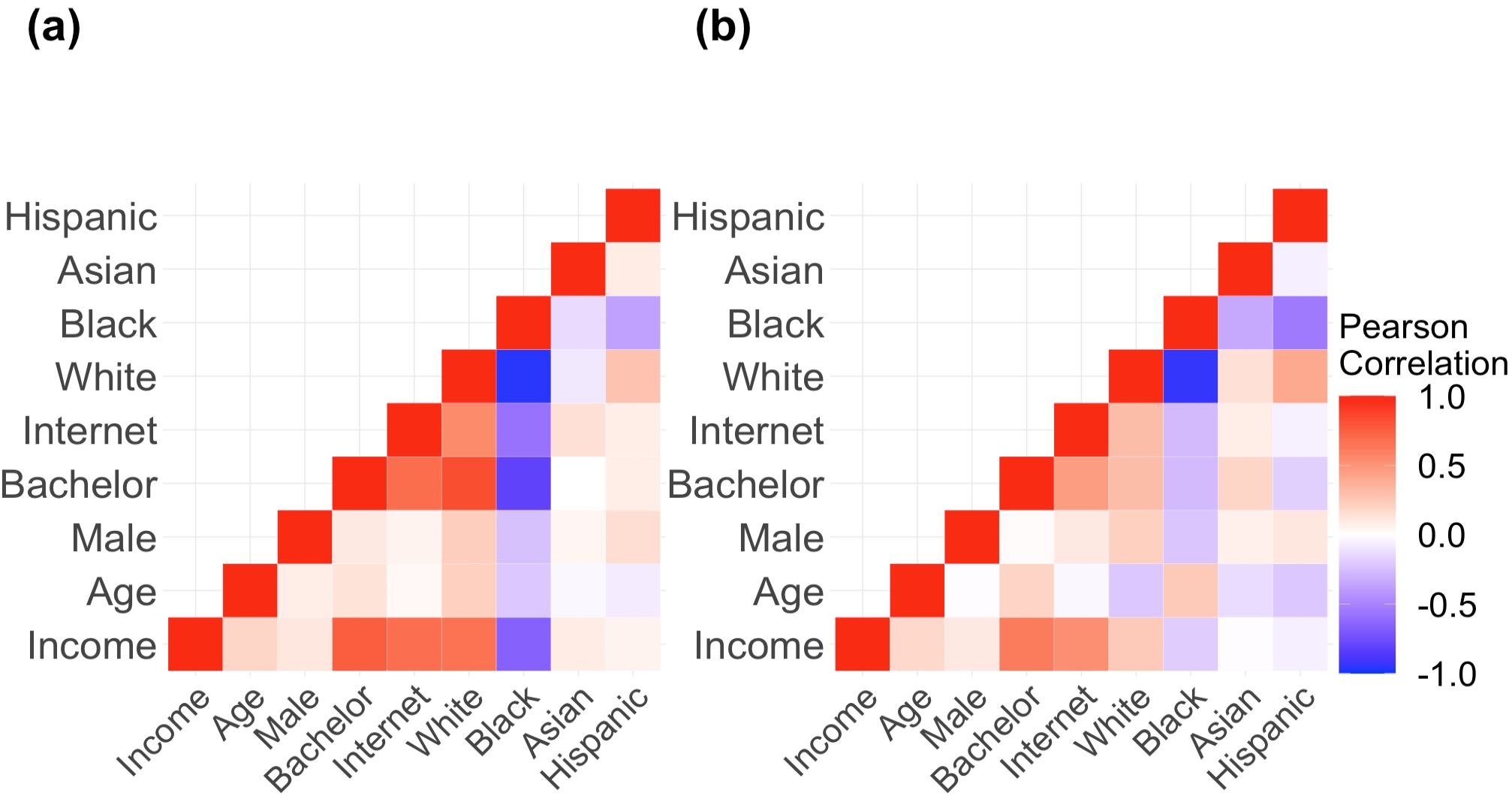}
  \caption{\label{fig-corrplot}
   Heatmap of pair-wise correlation coefficients between demographic covariates from re-sampled data for 
   %from original data. 
   iOS devices in
   (a) City A and (b) City B. 
   %The case of Android devices is provided in Appendix
   }
\end{figure}

\section{Temporal progression of  internet speed}\label{sec:temporal}

Estimating  time-dependent internet speed is important for assessing the change of internet quality over time. In this section, we investigate the temporal trend of measured internet speed using both linear regression analysis and Gaussian processes for modeling time sequences.  We also compare the original samples and the bias-corrected samples to evaluate whether the regional sampling bias affects the estimation of temporal analysis. For both cities, we have data from 05-31-2020 to 12-31-2021. 
%For City A, we have the speed measurements  from 05-31-2020 to 12-31-2021 and for City B, the data from 01-01-2021 to 12-31-2021 are available.

\subsection{Assessing the linear trend of internet speed}\label{subsec:linear-trend}

We first analyze the linear trends of internet speed. Let $y(t_{i})$ denote the measured internet speed at time point $t_{i}$ for $i = 1,\cdots, n$; i.e., we have $n$ distinct time points and each $t_{i}$ corresponds to a positive real number indicating the time lapse from the starting date measured by days.
%how days it lapsed from the date 05-31-2020, 
%measured by seconds. 
The linear regression model of internet speed with an intercept and time as a covariate is given by: 
\begin{equation}\label{eq-time-lm}
    y(t_{i}) = \beta_{0t} + \beta_{1t}t_{i} +  \varepsilon_{i},
\end{equation}
where $\varepsilon_{i}$ represents Gaussian white noise, with variance $\sigma_{\varepsilon}^{2}$ for $i=1,\cdots, n$. We focus on the temporal change rate measured by the linear coefficient over time $\beta_{1t}$, where the maximum likelihood estimator  is the least square estimator below:

\begin{equation}\label{eq-mle-time}
    \hat{\beta}_{1t} = \frac{\sum_{i=1}^{n}(t_{i}-\bar{t})(y(t_{i})-\bar{y})}{\sum_{i=1}^{n}(t_{i}-\bar{t})^{2}},
\end{equation}
where $\bar{t} = \frac{1}{n}\sum_{i=1}^{n}t_{i}$ and $\bar{y} = \frac{1}{n}\sum_{i=1}^{n}y(t_{i})$.

To evaluate the impact of regional sampling bias, we obtain a re-sampled data set with bias-correction introduced in Section~\ref{subsubsec:re-sampling}. Let $t_{i}^{r}$ denote the distinct time point after re-sampling, for $i=1,\cdots, n^{r}$. We define $y(t_{i}^{r})$ as the re-sampled internet speed at time point $t_{i}^{r}$. The linear regression model using the re-sampled data is given by: 
\begin{equation}\label{eq-time-lm-re}
    y(t_{i}^{r}) = \beta_{0t}^{r} + \beta_{1t}^{r}t_{i}^{r} +  \varepsilon_{i}^{r},
\end{equation}
where $\varepsilon_{i}^{r}$ represents a Gaussian white noise with variance $\sigma_{\varepsilon^{r}}^{2}$ for $i=1,\cdots, n^{r}$. The maximum likelihood estimator of $\beta_{1t}^{r}$ in model~\eqref{eq-time-lm-re} follows: 
\begin{equation}\label{eq-mle-time-re}
    \hat{\beta}_{1t}^{r} = \frac{\sum_{i=1}^{n^{r}}(t_{i}^{r}-\bar{t}^{r})(y(t_{i}^{r})-\bar{y}^{r})}{\sum_{i=1}^{n^{r}}(t_{i}^{r}-\bar{t}^{r})^{2}},
\end{equation}
where $\bar{t}^{r} = \frac{1}{n^{r}}\sum_{i=1}^{n^{r}}t_{i}^{r}$ and $\bar{y}^{r} = \frac{1}{n^{r}}\sum_{i=1}^{n^{r}}y(t_{i}^{r})$.
%With regard to the impact of regional bias on analysis of temporal progression of internet speed, we are interested in the comparison of $\hat{\beta}_{1t}$ in~\eqref{eq-mle-time} with $\hat{\beta}_{1t}^{r}$ in~\eqref{eq-mle-time-re}.

%The simple linear regression models of internet speed on time point, as described in~\eqref{eq-time-lm} or in~\eqref{eq-time-lm-re}, can provide condensed snapshots of progression of internet speed, which are summarized by $\hat{\beta}_{1t}$ and $\hat{\beta}_{1t}^{r}$ in~\eqref{eq-mle-time} and~\eqref{eq-mle-time-re}, respectively. 
There are two limitations of the linear regression analysis. First, the estimated linear coefficients in~\eqref{eq-time-lm} and~\eqref{eq-time-lm-re} can only capture average change over a time period. Second, the assumption is that the residuals are independent over time, whereas the Speedtest measurements are temporally correlated. 
%white noises in model~\eqref{eq-time-lm} and~\eqref{eq-time-lm-re} inherently imply independence between observed internet speed measurements, linear regression models does not incorporate any correlation structure within time series of internet speed measurements.
%In the following subsection, we introduce state space models to address correlated speed measurements.
To avoid these limitations, we  introduce Gaussian processes  for modeling the time sequences and accelerate the computation by state space representation without approximation.

\subsection{Modeling the internet speed by state space models}
Internet speeds are temporally correlated. A common way to model the temporal or spatio-temporal data is by Gaussian processes (GPs) \cite{rasmussen2006gaussian}. However, the complexity of computing the likelihood function and making predictions by GPs  increases cubically fast along with the number of observations, due to computing the inversion and log determinant of the covariance matrix.  In our study, the number of measurements for each device in a city is between $10^5$-$10^6$,  which makes directly computing the likelihood  by GPs prohibitively slow. Fortunately, GPs with some widely used covariance functions, such as the Mat{\'e}rn covariance function~\cite{handcock1993bayesian} with half-integer roughness parameters, can be equivalently represented by linear state space models, which makes computational complexity linearly increase with respect to the number of observations without making any approximations \cite{hartikainen2010kalman}.  We briefly introduce a GP model of Speedtest measurements and relate it to the state space model for fast computation for the original Speedtest observations. The fast algorithm through the state space representation can be similarly applied to the regional bias-corrected samples.  

Suppose any internet speed measurement is modeled by a noisy Gaussian process, meaning that any marginal distribution at time $\{t_1,...,t_n\}$ follows a multivariate normal distribution $(y(t_1),...,y(t_n))^T\sim \mathcal{N}(\mathbf 0, \sigma^2 (\mathbf R+\eta \mathbf I_n))$, where $\sigma^2$ and $\eta$ are variance and nugget parameters, respectively, and $\mathbf R$ is a correlation matrix with the $(i,j)$th term parameterized by a kernel function $K(t_i,t_j)$. Denote $d=|t-t'|$ as the distance between any time points $t$ and $t'$. We focus on the Mat{\'e}rn covariance function, which has the expression: 
 \begin{equation}
    \sigma^2 K(d)= \sigma^2 \frac{2^{1-\nu}}{\Gamma(\nu)}\left(\frac{\sqrt{2\nu}d}{\gamma} \right)^{\nu} \mathcal{K}_\nu\left(\frac{\sqrt{2\nu} d}{\gamma} \right), 
 \end{equation}
where $\Gamma(\cdot)$ is the gamma function, $\mathcal{K}_{\nu}(\cdot)$ is the modified Bessel function of the second kind with a positive parameter $\nu$ and $\gamma$ is a range or lengthscale parameter of the correlation. The Mat{\'e}rn covariance has a closed form expression when the roughness parameter is a half-integer, $\nu=\frac{2m+1}{2}$ for $m\in \mathbf N$. For instance, the Mat{\'e}rn with  $\nu=5/2$ has the expression:
\begin{equation}
\begin{aligned}
  \sigma^{2} K(d) = \sigma^{2} \left(1+\frac{\sqrt{5} d}{\gamma}+ \frac{5 d^2}{3 \gamma^2}\right) \exp \left(-\frac{\sqrt{5} d}{\gamma}\right).
    \label{equ:matern_5_2_kernel}
\end{aligned}
\end{equation} 
An appealing feature of the Mat{\'e}rn covariance is that the process is  $\lfloor\nu-1 \rfloor$ mean squared differentiable \cite{Gu2018robustness}, as the smoothness of the process can be controlled by the roughness parameter. 

Suppose we have internet speed measurements at $n$ time points, denoted as $\mathbf y=(y(t_1),...,y(t_n))^T$. A conventional method is to estimate the parameter by the maximum likelihood estimator. Note that given range parameter and nugget parameters $(\gamma,\eta)$, the maximum likelihood estimator for variance is $\hat \sigma^2=S^2/n$ where $S^2=\mathbf y^T \mathbf{\tilde R}^{-1}\mathbf y$ with   $ \mathbf{\tilde R}=\mathbf R+\eta \mathbf I_n$. Plugging the  $\hat \sigma^2$ into the likelihood function, the profile likelihood \cite{Gu2018robustness}   follows: 
\begin{equation}
 p(\mathbf y \mid  \gamma, \eta,\hat \sigma^2)\propto |\mathbf{\tilde R}|^{-1/2}|S^2|^{-n/2}.
 \label{equ:mle}
 \end{equation}
The parameters can be obtained by maximizing the log profile likelihood: $(\hat \gamma, \hat \eta)=\mbox{argmax}_{ \gamma,\eta} \mbox{log}(p(\mathbf y \mid  \gamma, \eta,\hat \sigma^2))$.  After obtaining the MLE,   the predictive distribution at any time point $t$ follows a normal distribution: 
\begin{equation}
 (y(t)\mid \mathbf y, \hat \sigma^2, \hat \eta, \hat \gamma) \sim \mathcal N( \hat y(t), \hat \sigma^2 K^*(t)), 
 \label{equ:pred_dist}
 \end{equation}
 where $\hat y(t)=\mathbf r^T(t) \mathbf{\tilde R}^{-1} \mathbf y$ with $\mathbf r(t)=(K(t,t_1),...,K(t,t_n))^T$ and $K^*(t)=K(t,t)+\eta-\mathbf r^T(t)\mathbf{\tilde R}^{-1} \mathbf r(t)$. The predictive mean $\hat y(t)$ is often used for predicting $y(t)$ and the predictive intervals can be obtained from  (\ref{equ:pred_dist}) for quantifying the uncertainty in prediction. 

\begin{figure}[t]
  \centering
  \includegraphics[width=\linewidth]{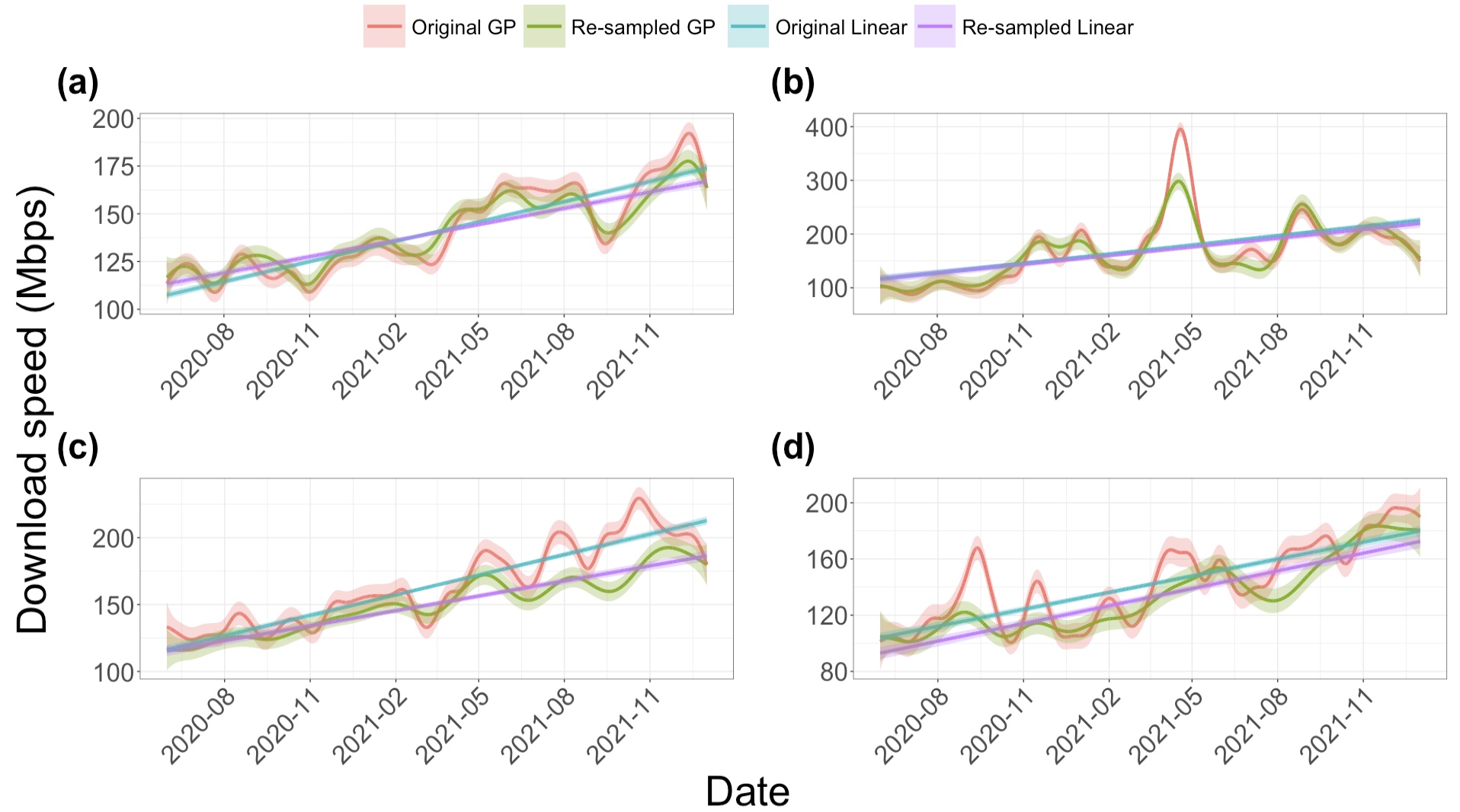}
  \caption{\label{fig-time-trend}
  Comparison of temporal trend of internet download speed from original versus re-sampled data based on linear and Gaussian process regression. (a) iOS devices in City A; (b) Android devices in City A; (c) iOS devices in City B; and (d) Android devices in City B. The solid curves are predictive mean and shaded areas are 95\% percent interval of the estimation.}
\end{figure}

Directly computing the likelihood function or predictive distribution requires inverting an $n\times n$ covariance matrix, which has computational complexity $\mathcal O(n^3)$. Here $n$ can be at the order of $10^6$, prohibiting directly computing GP models. Fortunately, the Mat{\'e}rn covariance with half-integer parameters can be written as a set of stochastic differential equations and the solution follows a continuous-time state space model  \cite{hartikainen2010kalman}. For instance, the GP with a Mat{\'e}rn covariance with roughness 5/2 in (\ref{equ:matern_5_2_kernel}) can be written as a state space model below \cite{gu2020fast}: 
\begin{align}
\label{equ:dlm_matern_5_2}
\begin{split}
y(t_i)&= \mathbf F\bm \theta(t_i) + \epsilon_i, \\
\bm \theta(t_i)&=\mathbf G(t_i) \bm \theta(t_{i-1}) +\mathbf w(t_i), \,    
\end{split}
\end{align}
where $\mathbf F=(1,0,0)$, $\mathbf w(x_i) \sim \mathcal{N}(0, \mathbf W(x_i))$ for $i=2,...,N$, and the initial state follows $\bm \theta(x_1) \sim \mathcal{MN}(\mathbf 0, \mathbf W(x_1))$. The closed-form expression of $\mathbf G(x_i)$ and $\mathbf W(x_i)$ in (\ref{equ:dlm_matern_5_2}) can be found in Appendix A in \cite{gu2022scalable}. 

With the state space representation, we can compute the likelihood function and predictive distribution by Kalman filter \cite{kalman1960new} and RTS smoother \cite{rauch1965maximum}. This algorithm is commonly known as the forward filtering and backward smoothing (FFBS) algorithm \cite{West1997,petris2009dynamic}. The details for computing the likelihood and predictive distribution for state space representation of GP with the covariance in (\ref{equ:matern_5_2_kernel}) are provided in Lemma 2 in \cite{gu2022scalable}. Computing the likelihood in (\ref{equ:mle}) and predictive distribution in (\ref{equ:pred_dist}) using the FFBS algorithm reduces the computational operations from $\mathcal O(n^3)$ to $\mathcal O(n)$ operations without approximation. The computational advance enables us to estimate the nonlinear temporal trend of internet speed with a massive number of crowdsourced  observations.

\subsection{Temporal progression of measured internet speeds}

We plot the estimated temporal progression of  the download speed based on both linear and state space models in Fig.~\ref{fig-time-trend}.
%and compare the estimation from original and re-sampled data for both device types in City A and B. 
 Both models show that download speeds measured by Speedtest improve over time for both cities and device types. However, the improvements of the download speed do not appear to be homogeneous over time. %Although the time range of the available speedtest data are different for City A and City B, 
 We find a comparatively large improvement occurs at the beginning of 2021 for  iOS devices in both cities, shown in parts (a) and (c) in Fig \ref{fig-time-trend}.  Speed improvement can also be found for  Android devices, as shown in parts (b) and (d); however the variation of the estimation for Android devices seems larger, as the sample size from Android devices is much smaller than iOS devices, particularly for City A, as shown in Table \ref{tab:ookla}. The increase may be partly due to the acceleration of the deployment and marketing of fiber internet since 2020 \cite{broadband2021}, as we find that a high proportion of the Speedtest measurements have substantially faster speeds than others since late 2020, and the  proportion grows over time.

Second, the estimation from re-sampled data suggests that the trend from iOS devices tends to be overestimated (part (a) and (c) in Fig. \ref{fig-time-trend}). This overestimation is larger in City B, particularly during 2021, as both linear and state space models shows the noticeable difference of fitting between original and re-sampled data. We suspect that the overestimation is due to more Speedtests from high-speed internet plans, such as fiber internet plan, as subscribers of these plans may tend to submit more Speedtests to validate the speed from these high-speed internet plans. Re-sampling among census block groups can address a part of this bias, as it samples more from regions with relatively smaller numbers of tests compared to their population, which may have lower-speed internet plans.  The temporal trends of internet download speed for the two other cities are plotted in Figure \ref{appendix-fig-time-trend} in Appendix \ref{appendix-subsec:temporal}. The estimation by both the original and bias-corrected samples shows the improvement of download speeds over time, and the difference between the estimation is not large. 

\begin{table*}[t]
\centering
  \caption{Linear trends of measured internet download speed (Mbps) per day.}
  \label{tab:time-lm}
  \begin{tabular}{cccc}
    \toprule
      City & Device type & \thead{Estimates $\hat{\beta}_{1t}$\\(95\% CI)} & \thead{Estimates $\hat{\beta}_{1t}^{r}$\\(95\% CI)} \\
    \midrule
       \multirow{2}{*}{A} 
       &  iOS & 
       \thead{0.1149 \\ (0.1088, 0.1209)} & 
       \thead{0.0928 \\ (0.0857, 0.0999)}\\
    
          & Android & 
          \thead{0.1879 \\ (0.1701, 0.2058)} &  
          \thead{0.1788 \\ (0.1546, 0.2029)}\\
     \midrule
     \multirow{2}{*}{B} 
     &  iOS & 
     \thead{0.1661 \\ (0.1567, 0.1755)} & 
     \thead{0.1225 \\ (0.1110, 0.134)}\\
    
          & Android & 
          \thead{0.1309 \\ (0.1200, 0.1417)} &  
          \thead{0.1370 \\ (0.1232, 0.1509)}\\
    \bottomrule
  \end{tabular}
\end{table*}

Finally, we compare the estimates of linear coefficients $\beta_{1t}$ and $\beta_{1t}^{r}$ for both iOS and Android devices from City A and B in Table~\ref{tab:time-lm}. For both device types, the estimates are positive, which suggests that the download speed increases over time. 
%The estimated coefficients for City A are smaller than that of City B, partly because the time period of City A is larger, whereas the largest increase of download speed seems to appear between March 2021 to May 2021 for both cities. 
The estimates of $\beta_{1t}^{r}$ from the bias-corrected samples are smaller than those of $\beta_{1t}$ in the first three rows, indicating that the improvement of internet speed may be slightly overestimated by the original data for these two cities. The estimates of $\beta_{1t}^{r}$ are similar to  $\beta_{1t}$ for Android device download speeds in City B, whereas the intercept $\beta_{0t}^{r}$ is smaller than $\beta_{0t}$, as shown in part (d) of Fig.  \ref{fig-time-trend}. The results from Fig.~\ref{fig-time-trend} and Table~\ref{tab:time-lm} consistently suggest that measured internet speed improves over the time range. 
%, whereas overestimation is not large for these two cities. 
%The estimation 

\section{Conclusion}
\label{sec:conclusion}

 In this paper, we  integrated Ookla  Speedtest measurements with regional demographic profiles for analyzing disparities of measured internet quality and the temporal evolution of internet speed. 
 We developed  re-weighing and re-sampling methods to %meticulously examine
 correct the large regional sampling bias across census block groups.  
 %and reevaluate the disparities and temporal evolution of internet speed.
 Through regression analysis of integrated data, we found that census block groups with higher income, younger population, and fewer Hispanic residents tend towards higher measured internet speeds. Furthermore, we discerned an encouraging trend of internet speed improvement through temporal modeling of Speedtest measurements. Nevertheless, it is essential to approach these findings with caution, as they are susceptible to different biases inherent in Ookla Speedtest measurements. We anticipate that our new methods can be applied to different crowdsourced data and we outline a few directions for further study. 

First, while our current investigation primarily concentrates on urban areas, the realm of speed test measurements in rural and sparsely populated subareas within cities remains largely unexplored. Consequently, a comprehensive analysis of internet performance profiles  between urban and rural areas presents an intriguing prospect. The principle challenge in this pursuit is the scarcity or absence of crowdsourced samples in rural areas, rendering accurate statistical inference on these regions difficult. 
%rendering the acquisition of sufficient data difficult. 
%To overcome this obstacle, marketing efforts may be spent to obtain more observations on rural areas. 
However, it is likely that spatial proximity and the similarity of demographic profiles can be used to infer internet speed through interpolation. 
%researchers may contemplate inferring sampling information for speed tests in rural areas through interpolation based on spatial proximity or the similarity of demographic profiles. 
%In cases with a limited number of rural subareas, theoretical variance analysis may be employed to assess the heterogeneity of internet speed across regions.

Second, our study 
%rooted in two exemplar cities,
naturally stimulates further inquiry into internet speed across other cities or states. Given the expansive coverage of crowdsourced measurements across the United States, researchers can leverage data on a larger scale to generalize internet speed characteristics throughout the country.
To accommodate a more diverse range of states and cities, one plausible modeling approach involves incorporating mixed effects into the model, accounting for spatial variations. 
The nature of these mixed effects may vary depending on the hierarchical structure amongst states, counties, and census block groups.
%hierarchical structure of spatial information.
Identifying a suitable metric to define correlations between regions may be challenging, but demographic similarity stands as a viable option to gauge correlated structures. This methodology enables the construction of a more comprehensive model alongside city- and state-specific explanatory terms.

Third, our analysis of the temporal progression of internet speed motivates future studies in this direction. Our study employing state space models reveals a notable degree of volatility in the time series of internet speed, suggesting that the distribution of internet speed comprises multiple heterogeneous groups over time. An essential latent factor in this context is the internet subscription plan. To address this, one may consider employing a mixture of Gaussian processes or state space models to analyze the temporal evolution of internet speed while accounting for different subscription plans.

Lastly, it is essential to consider potential sources of bias other than the sampling bias among census block groups. While our study has identified an association between regional sampling bias and demographic disparities, further investigation is needed due to the constraints against the availability of demographic profiles corresponding to individual speed tests. Additional data for calibrating the model can be obtained by anonymous surveys to address this limitation. The collection of paired data encompassing internet speed measurements and demographic data of the speed test taker can be used for regression analysis to understand whether other forms of bias affect the estimation between internet quality and demographic features.

%%
%% The acknowledgments section is defined using the "acks" environment
%% (and NOT an unnumbered section). This ensures the proper
%% identification of the section in the article metadata, and the
%% consistent spelling of the heading.
%%comment acknowledgement out for now for review?
%\begin{acks}
%%To Robert, for the bagels and explaining CMYK and color spaces.
%\end{acks}
%
% ---- Bibliography ----
%
% BibTeX users should specify bibliography style 'splncs04'.
% References will then be sorted and formatted in the correct style.
%
\bibliographystyle{acm}
\bibliography{References_chronical_2022}
%
%\printbibliography

\appendix

\section{Proofs and derivations}
\subsection{Proof of Lemma~\ref{lemma:cum-prob-weight}}\label{appendix-subsec:proof-cdf}
\begin{proof}
Consider a fixed $x\in\mathbb{R}$.
Note that the indicator function $I(y_i\leq x)$ follows Bernoulli distribution with success probability $\mathbb{P}(y_i\leq x)$, for $i=1,\cdots,k$. Based on the law of total expectation, 
%the fact that $I(y_{i}\leq x)$ independently follows Bernoulli distribution with success probability $\mathbb{P}(y_{i}\leq x)$ for $i=1,\cdots,k$, 
we have:
\begin{equation*}
\begin{aligned}
    \mathbb{P}(y\leq x) & = \mathbb{E}\left[I(y\leq x)\right]\\
    & = \mathbb{E}\left[\mathbb{E}\left[I(y\leq x)|\mathbf z\right]\right]\\ %\quad\text{(tower property)}\\
    & = \sum_{i=1}^{k}\mathbb{P}(z_{i}=1)\mathbb{E}\left[I(y\leq x)|z_{i}=1\right]\\
    & = \sum_{i=1}^{k}\mathbb{P}(z_{i}=1)\mathbb{E}\left[I(y_{i}\leq x)\right]\\ %\quad\left(\because y =\sum_{i=1}^{k}y_{i}I(z_{i}=1)\right)\\
    & = \sum_{i=1}^{k}\frac{N_{i}}{N}\mathbb{E}\left[I(y_{i}\leq x)\right]\\ %\quad\left(\because \mathbb{P}(z_{i}=1)=\frac{N_{i}}{N}\right)\\
    & = \sum_{i=1}^{k}\frac{N_{i}}{N}\mathbb{P}(y_{i}\leq x)\\ %\quad\left(\because I(y_{i}\leq x)\sim\text{Bernoulli}(\mathbb{P}(y_{i}\leq x))\right)\\
    & = \sum_{i=1}^{k}\frac{N_{i}}{N}F_{i}(x), %\quad\text{(by definition of }F_{i}(x)).
\end{aligned}
\end{equation*}
where  $\mathbf z$ follows a multinomial  distribution in Equation (\ref{regional-mixture}). 
\end{proof}

\subsection{Proof of unbiased estimator in  Equation~\eqref{eq-cdf-reweight}}\label{appendix-subsec:unbiased}
\begin{proof}
Consider a fixed $x\in\mathbb{R}$. Note that $\hat{F}_{i}(x)$ is unbiased for $F_{i}(x)$  for any census block group $i$, $I(y_{ij}\leq x)$ represents independent Bernoulli process with probability $F_{i}(x) = \mathbb{P}(y_{i}\leq x)$, i.e.
\begin{equation*}
%\label{appendix-eq-unb-region}
\begin{aligned}
    \mathbb{E}\left[\hat{F}_{i}(x)\right] & = \mathbb{E}\left[\frac{1}{n_{i}}\sum_{j}^{n_{i}}I\left(y_{ij}\leq x\right)\right]\\
    & = \frac{1}{n_{i}}\sum_{j=1}^{n_{i}}\mathbb{E}\left[I\left(y_{ij}\leq x\right)\right]\\ %\quad(\text{linearity of expectation operator $\mathbb{E}[\cdot])$})\\
    & = \frac{1}{n_{i}}\sum_{j=1}^{n_{i}}F_{i}(x)\\ %\quad(\text{for given }i,\ I\left(y_{ij}\leq x\right)\sim\text{Bernoulli}(F_{i}(x)),\text{ for } j=1,\cdots,n_{i})\\
    & = \frac{1}{n_{i}}n_{i}F_{i}(x) = F_{i}(x).
\end{aligned}
\end{equation*}
 Based on the linearity of expectation operator $\mathbb{E}[\cdot]$,
we then have: 
%and ~\eqref{appendix-eq-unb-region} enables us to derive
\begin{equation*}
    \mathbb{E}\left[\hat{F}_{u}(x)\right] = \sum_{i=1}^{k}\frac{N_{i}}{N}\mathbb{E}\left[\hat{F}_{i}(x)\right]
    = \sum_{i=1}^{k}\frac{N_{i}}{N}F_{i}(x)=F(x).
\end{equation*}

\end{proof}

\subsection{Proof of Lemma~\ref{lemma::asymptotic}}\label{appendix-asymptotic}
\begin{proof}
    
    By the weak law of large numbers,  we have $\hat{F}_{i}(x)\xrightarrow[]{\mathbb{P}}F_{i}(x)$ and $\hat{F}^{*}_{i}(x)\xrightarrow[]{\mathbb{P}}F_{i}(x)$ for  any real number $x$,  when  $n_{i}\rightarrow\infty$ for $i=1,\cdots,k$. 
    %because of  Assumptions~\ref{assum-1} and~\ref{assum-2}. 
    Since $k$ is a finite number, 
    %$$N_{i}$ is a finite number for all $i=1,\cdots,k$, 
    it follows from Slutsky's Theorem that: 
    $$\hat{F}_{u}(x) = \sum_{i=1}^{k}\frac{N_{i}}{N}\hat{F}_{i}(x)\xrightarrow[]{\mathbb{P}}\sum_{i=1}^{k}\frac{N_{i}}{N}F_{i}(x) = F(x), $$
   when $n_{i}\rightarrow\infty$ at each $i$.  It suffices to show that: 
   \begin{equation}\label{appendix-eq-asym-prop}
      \lim_{n\rightarrow\infty}\frac{n_{i}^{*}}{n^{*}}=\frac{N_{i}}{N}. 
   \end{equation}
    By definition of $n_{i}$ in~\eqref{eq-def-n-star}, we have:
    \begin{equation}\label{appendix-ineq-bound}
        \begin{aligned}
            \frac{nN_{i}}{N}-\frac{1}{2} & \leq n_{i}^{*} = \left[\frac{nN_{i}}{N}\right]\leq \frac{nN_{i}}{N}+\frac{1}{2},\text{ so}\\
            n-\frac{1}{2}k & \leq n^{*} = \sum_{i=1}^{k}\left[\frac{nN{i}}{N}\right]\leq n+\frac{1}{2}k.
        \end{aligned}
    \end{equation}
    Then, from~\eqref{appendix-ineq-bound}, we obtain: 
    \begin{equation}\label{appendix-approx-bound}
        \frac{nN_{i}/N-1/2}{n+k/2} \leq \frac{n_{i}^{*}}{n^{*}}\leq \frac{nN_{i}/N+1/2}{n-k/2}.
    \end{equation}
    Letting $n\rightarrow\infty$, we apply the Squeeze Theorem to the inequality in~\eqref{appendix-approx-bound} to yield the result of~\eqref{appendix-eq-asym-prop}.
\end{proof}

\subsection{Derivation of confidence intervals for the methods in Section~\ref{subsec:cdf}}\label{appendix-subsec::ci}
We derive the confidence intervals of different ways for estimating the CDF. For all methods, we denote $x\in\mathbb{R}$ to be any fixed input. 
%We employ two fundamental theorems in asymptotic theory of statistics: Central Limit Theorem, and Slutsky's Theorem.

\subsubsection{Empirical CDF from original data}\label{appendix-subsubsec:original}
Our empirical CDF for $F_{i}(x)$ is written by: 
\begin{equation}\label{appendix-regional-cdf}
\hat{F}_{i}(x) = \frac{1}{n_{i}}\sum_{j=1}^{n_{i}}I(y_{ij}\leq x).
\end{equation}
For a large number of samples in each region (with sufficiently large $n_{i}$ for every $i$), we apply the Central Limit Theorem under Assumptions~\ref{assum-1} and~\ref{assum-2}. Then,
\begin{equation}\label{appendix-clt-region}
    \hat{F}_{i}(x)\sim^{\text{approx}}\mathcal{N}\left(F_{i}(x), \frac{F_{i}(x)(1-F_{i}(x))}{n_{i}}\right),
\end{equation}
for $i=1,\cdots,k$. Under Assumption~\ref{assum-1} and by the fact that $I(y_{ij}\leq x)$ independently follows a Bernoulli distribution with success probability $F_{i}$ for all $i,j$,  variance of $\hat{F}(x)$ follows: 
\begin{equation}\label{appendix-asym-var}
    \begin{aligned}
    \text{Var}(\hat{F}(x)) & = \text{Var}\left(\sum_{i=1}^{k}\frac{n_{i}}{n}\hat{F}_{i}(x)\right)\\
    & = \sum_{i=1}^{k}\text{Var}\left(\frac{n_{i}}{n}\hat{F}_{i}(x)\right)\\ %\quad\text{(under Assumption~\ref{assum-1}})\\
    & = \sum_{i=1}^{k}\frac{n_{i}^{2}}{n^{2}}\text{Var}\left(\hat{F}_{i}(x)\right)\\
    & = \sum_{i=1}^{k}\frac{n_{i}^{2}}{n^{2}}\frac{F_{i}(x)(1-F_{i}(x))}{n_{i}}\\ %\quad(\forall i,\ I(y_{ij}\leq x)\sim^{\text{indep.}}\text{Bernoulli}(F_{i}(x))\text{ for all }j)\\
    & = \frac{1}{n^{2}}\sum_{i=1}^{k}n_{i}F_{i}(x)(1-F_{i}(x)). 
    \end{aligned}
\end{equation}

Based on the asymptotic normality in~\eqref{appendix-clt-region} and independence based on Assumption~\ref{assum-1}, 
%the asymptotic distribution of a linear combination of~\eqref{appendix-regional-cdf} weighted by the sample proportion,
we have:
\begin{equation}\label{appendix-asym}
    \hat{F}(x)\sim^{\text{approx}}\mathcal{N}\left(\sum_{i=1}^{k}\frac{n_{i}}{n}F_{i}(x), \frac{1}{n^{2}}\sum_{i=1}^{k}n_{i}F_{i}(x)(1-F_{i}(x))\right),
\end{equation}
where its asymptotic variance is obtained by~\eqref{appendix-asym-var}.
Note that $\hat{F}_{i}(x)$ converges in probability to $F_{i}(x)$, i.e. $\hat{F}_{i}(x)\xrightarrow{\mathbb{P}}F_{i}(x)$, under Assumptions~\ref{assum-1} and~\ref{assum-2} for $i=1,\cdots,k$. %From~\eqref{appendix-asym} 
By  Slutsky's theorem,  we have the following expressions:
\begin{equation*}
    \hat{F}(x)\sim^{\text{approx}}\mathcal{N}\left(\sum_{i=1}^{k}\frac{n_{i}}{n}F_{i}(x), \frac{1}{n^{2}}\sum_{i=1}^{k}n_{i}\hat{F}_{i}(x)(1-\hat{F}_{i}(x))\right).
\end{equation*}
Therefore, the 95\% confidence interval for the simple empirical CDF follows: 
\begin{equation*}
    \hat{F}(x)\pm 1.96\frac{\sqrt{\sum_{i=1}^{k}n_{i}\hat{F}_{i}(x)(1-\hat{F}_{i}(x))}}{n}.
\end{equation*}

\subsubsection{Re-weighted empirical CDF from original data}\label{appendix-subsubsec:reweight}

Under Assumption~\ref{assum-1} and by the fact that $I(y_{ij}\leq x)$ independently follows a Bernoulli distribution with success probability $F_{i}$ for all $i,j$, the variance of~(\ref{eq-cdf-reweight}) follows: 
\begin{equation}\label{appendix-asym-var-un}
    \begin{aligned}
        \text{Var}\left(\hat{F}_{u}(x)\right) & = \text{Var}\left(\sum_{i=1}^{k}\frac{N_{i}}{N}\hat{F}_{i}(x)\right)\\
        & =\sum_{i=1}^{k}\text{Var}\left(\frac{N_{i}}{N}\hat{F}_{i}(x)\right)\\ %\quad(\text{under Assumption~\ref{assum-1})}\\
        & = \sum_{i=1}^{k}\frac{N_{i}^{2}}{N^{2}}\text{Var}\left(\hat{F}_{i}(x)\right)\\
        & = \sum_{i=1}^{k}\frac{N_{i}^{2}}{N^{2}}\frac{F_{i}(x)(1-F_{i}(x))}{n_{i}}\\ %\quad(\forall i,\ I(y_{ij}\leq x)\sim^{\text{indep.}}\text{Bernoulli}(F_{i}(x))\text{ for all }j)
    \end{aligned}
\end{equation}

%Again, 
\noindent
Since the empirical CDFs from~\eqref{appendix-regional-cdf} are asymptotically normal, as shown in~\eqref{appendix-clt-region}, with independence between regions (Assumption~\ref{assum-1}), the re-weighted sum of these CDFs are asymptotically normal, i.e.

\begin{equation}\label{appendix-asym-un}
    \hat{F}_{u}(x)\sim^{\text{approx}}\mathcal{N}\left(\sum_{i=1}^{k}\frac{N_{i}}{N}F_{i}(x), \frac{1}{N^{2}}\sum_{i=1}^{k}\frac{N_{i}^{2}}{n_{i}}F_{i}(x)(1-F_{i}(x))\right),
\end{equation}
where the asymptotic variance is obtained by~\eqref{appendix-asym-var-un}.

Since $\hat{F}_{i}(x)$ converges in probability to $F_{i}(x)$, i.e. $\hat{F}_{i}(x)\xrightarrow{\mathbb{P}}F_{i}(x)$, under Assumptions~\ref{assum-1} and~\ref{assum-2} for $i=1,\cdots,k$, we employ Slutsky's theorem to bridge~\eqref{appendix-asym-un} to obtain: 
%the conclusion that
\begin{equation*}
    \hat{F}_{u}(x)\sim^{\text{approx}}\mathcal{N}\left(\sum_{i=1}^{k}\frac{N_{i}}{N}F_{i}(x), \frac{1}{N^{2}}\sum_{i=1}^{k}\frac{N_{i}^{2}}{n_{i}}\hat{F}_{i}(x)(1-\hat{F}_{i}(x))\right).
\end{equation*}
Then the 95\% confidence interval for cdf $F(x)$  is obtained by:
$$\hat{F}_{u}(x)\pm 1.96\frac{\sqrt{\sum_{i=1}^{k}\frac{N_{i}^{2}}{n_{i}}\hat{F}_{i}(x)(1-\hat{F}_{i}(x))}}{N}$$

\subsubsection{Simple empirical CDF from re-sampled data}\label{appendix-subsubsec:resample}

Following the notation in Section~\ref{subsubsec:re-sampling} and the derivation in Appendix~\ref{appendix-subsubsec:original}, we obtain the 95\% confidence interval for $\hat{F}^{*}$ as:

\begin{equation*}
    \hat{F}^{*}(x)\pm 1.96\frac{\sqrt{\sum_{i=1}^{k}n_{i}^{*}\hat{F}_{i}^{*}(x)(1-\hat{F}^{*}_{i}(x))}}{n^{*}}.
\end{equation*}

\subsection{Proof of Lemma~\ref{lemma::sufficiency}}\label{appendix-suff-proof}

\begin{proof}
    %To prove the equality, it suffices to express both likelihood functions, $l(\boldsymbol \beta, \sigma^{2})$ and $\bar{l}(\boldsymbol \beta, \sigma^{2})$ in a relatable way. 
    From models~\eqref{eq-model-indi} and~\eqref{eq-model-agg}, we have $y_{ij}\sim^{\text{indep.}}\mathcal{N}(\mathbf x_{i}^{T}\boldsymbol \beta, \sigma^{2})$ and $\bar{y}_{i}\sim^{\text{indep.}}\mathcal{N}(\mathbf x_{i}^{T}\boldsymbol \beta, \sigma^{2}/n_{i})$ for $j=1,\cdots,n_{i}$ and $i=1,\cdots,k$. First, note that: 
    \begin{equation}\label{appendix-eq-L-bar}
            \bar{l}(\boldsymbol \beta, \sigma^{2}) = -\frac{k}{2}\log(2\pi\sigma^{2})+\frac{1}{2}\sum_{i}^{k}\log n_{i}-\frac{1}{2\sigma^{2}}\sum_{i}^{k}n_{i}(\bar{y}_{i}-\mathbf x_{i}^{T}\boldsymbol \beta)^{2}.
    \end{equation}
    On the other hand, we also have: 
    \begin{equation}\label{appendix-eq-L}
     \begin{split}
          & l(\boldsymbol \beta, \sigma^{2}) = -\frac{n}{2}\log(2\pi\sigma^{2})-\frac{1}{2\sigma^{2}}\sum_{i=1}^{k}\sum_{j=1}^{n_{i}}(y_{ij}-\mathbf x_{i}^{T}\boldsymbol \beta)^{2} \\
        & = -\frac{n}{2}\log(2\pi\sigma^{2})-\frac{1}{2\sigma^{2}}\sum_{i=1}^{k}\sum_{j=1}^{n_{i}}(y_{ij}-\bar{y}_{i}+\bar{y}_{i}-\mathbf x_{i}^{T}\boldsymbol \beta)^{2}\\
        & =  -\frac{n}{2}\log(2\pi\sigma^{2})-\frac{1}{2\sigma^{2}}\sum_{i=1}^{k}\sum_{j=1}^{n_{i}}(y_{ij}-\bar{y}_{i})^{2}-\frac{1}{2\sigma^{2}}\sum_{i=1}^{k}\sum_{j=1}^{n_{i}}(\bar{y}_{i}-\mathbf{x}_{i}^{T}\boldsymbol{\beta})^{2}-\frac{1}{\sigma^{2}}\sum_{i=1}^{k}\sum_{j=1}^{n_{i}}(y_{ij}-\bar{y}_{i})(\bar{y}_{i}-\mathbf x_{i}^{T}\boldsymbol{\beta})\\
        & = -\frac{n}{2}\log(2\pi\sigma^{2})-\frac{1}{2\sigma^{2}}\sum_{i=1}^{k}\sum_{j=1}^{n_{i}}(y_{ij}-\bar{y}_{i})^{2}-\frac{1}{2\sigma^{2}}\sum_{i=1}^{k}n_{i}(\bar{y}_{i}-\mathbf{x}_{i}^{T}\boldsymbol{\beta})^{2}-\frac{1}{\sigma^{2}}\sum_{i=1}^{k}\sum_{j=1}^{n_{i}}(y_{ij}-\bar{y}_{i})(\bar{y}_{i}-\mathbf x_{i}^{T}\boldsymbol{\beta})\\
        & = -\frac{n}{2}\log(2\pi\sigma^{2})-\frac{1}{2\sigma^{2}}\sum_{i=1}^{k}\sum_{j=1}^{n_{i}}(y_{ij}-\bar{y}_{i})^{2}-\frac{1}{2\sigma^{2}}\sum_{i=1}^{k}n_{i}(\bar{y}_{i}-\mathbf{x}_{i}^{T}\boldsymbol{\beta})^{2},
     \end{split}
    \end{equation}
    as $\sum_{j=1}^{n_{i}}(y_{ij}-\bar{y}_{i})(\bar{y}_{i}-\mathbf x_{i}^{T}\boldsymbol{\beta}) = (\bar{y}_{i}-\mathbf x_{i}^{T}\boldsymbol{\beta})\sum_{j=1}^{n_{i}}(y_{ij}-\bar{y}_{i}) =(\bar{y}_{i}-\mathbf x_{i}^{T}\boldsymbol{\beta})(n_{i}\bar{y}_{i}-n_{i}\bar{y}_{i}) = 0 $ for each $i=1,\cdots,k$. 
    Comparing with Equation (\ref{appendix-eq-L-bar}), the results follow. 
    %The results in Lemma then follow. 
%Then, the expression of $c_{\sigma^{2}}$ from~\eqref{eq-lemma::suff} in Lemma~\ref{lemma::sufficiency} naturally follows from the $l(\boldsymbol{\beta},\sigma^{2})-\bar{l}(\boldsymbol{\beta},\sigma^{2})$, the calculation of which can be easily done from~\eqref{appendix-eq-L-bar} and~\eqref{appendix-eq-L}.
    
\end{proof}

\subsection{Proof of Lemma~\ref{lemma::efficiency}}

%We assume that the value of predictors is given. 
We first introduce the following lemma to derive estimators of $\sigma^{2}$ in the model~\eqref{eq-model-indi} and~\eqref{eq-model-agg}, and their variability. Here, for the generality, let $\mathbf x_{i} = (1, x_{i1}, x_{i2}, \cdots, x_{ip})^{T}\in\mathbb{R}^{(p+1)}$ represent the predictor vector of $i$-th region with $p$ different predictors for $i=1,\cdots,k\geq 2$, and $p>2$. We assume that $n>>k>p+1$, indicating that the total number of sample across the collected regions is significantly greater than the number of regions, and the dimension of feature space does not exceed the number of regions.

\begin{lemma}\label{appendix-quadratic}
Let $\mathbf y$ be a $n$-dimensional random vector with $y\sim\mathcal{N}(\boldsymbol \mu, \mathbf I_n)$, where $\boldsymbol \mu \in\mathbb{R}^{n}$ and $\mathbf I_n$ denotes $n\times n$ identity matrix. If $\mathbf M \in\mathbb{R}^{n\times n}$ is an orthogonal projection matrix, then 
$$\mathbf{y}^{T}\mathbf{M}\mathbf y\sim \chi^{2}(r(\mathbf{M}), \boldsymbol \mu^{T}\mathbf M \boldsymbol \mu /2),$$
where $r(\mathbf A)$ indicates the rank of a given square matrix $\mathbf A$ and $\chi^{2}(d,\gamma)$ refers to the noncentral $\chi^{2}$ distribution with degree of freedom $d$ and noncentrality parameter $\gamma$. A noncentral chi-squared distribution $\chi^{2}(d, \gamma)$ is generated by a sum of squared independent Gaussian random variables $z_{1},\cdots,z_{d}\sim\mathcal{N}(\mu, 1)$, i.e. $\sum_{i=1}^{d}z_{i}^{2}$. Here, the noncentrality parameter $\gamma$ is defined by $\gamma = \sum_{i=1}^{d}\mu_{i}^{2}/2$.
\end{lemma}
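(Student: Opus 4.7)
The plan is to reduce the quadratic form $\mathbf y^T \mathbf M \mathbf y$ to a sum of squares of independent univariate normals by exploiting the spectral structure of the orthogonal projection $\mathbf M$. First I would use the defining properties of an orthogonal projection, namely $\mathbf M = \mathbf M^T$ and $\mathbf M^2 = \mathbf M$, to conclude that every eigenvalue of $\mathbf M$ lies in $\{0,1\}$, with exactly $r = r(\mathbf M)$ eigenvalues equal to one. The spectral theorem then gives $\mathbf M = \mathbf P \mathbf D \mathbf P^T$ for some orthogonal $\mathbf P$ and $\mathbf D = \mathrm{diag}(1,\ldots,1,0,\ldots,0)$ with the leading $r$ entries equal to one (relabeling columns of $\mathbf P$ by a permutation if necessary).

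Next I would introduce the rotated vector $\mathbf z = \mathbf P^T \mathbf y$. Since $\mathbf y \sim \mathcal N(\boldsymbol \mu, \mathbf I_n)$ and $\mathbf P$ is orthogonal, affine invariance of the Gaussian yields $\mathbf z \sim \mathcal N(\boldsymbol \nu, \mathbf I_n)$ with $\boldsymbol \nu = \mathbf P^T \boldsymbol \mu$, so the components $z_1,\ldots,z_n$ are independent with unit variance. Rewriting the quadratic form in the rotated coordinates gives
$$\mathbf y^T \mathbf M \mathbf y \;=\; \mathbf y^T \mathbf P \mathbf D \mathbf P^T \mathbf y \;=\; \mathbf z^T \mathbf D \mathbf z \;=\; \sum_{i=1}^{r} z_i^2.$$
By the definition of the noncentral chi-squared distribution recalled in the statement, $\sum_{i=1}^{r} z_i^2$ is distributed as $\chi^2(r, \gamma)$ with noncentrality parameter $\gamma = \sum_{i=1}^{r} \nu_i^2/2$.

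Finally I would identify $\gamma$ with the expression claimed in the lemma. The bilinear form is invariant under orthogonal conjugation, so
$$\sum_{i=1}^{r} \nu_i^2 \;=\; \boldsymbol \nu^T \mathbf D \boldsymbol \nu \;=\; \boldsymbol \mu^T \mathbf P \mathbf D \mathbf P^T \boldsymbol \mu \;=\; \boldsymbol \mu^T \mathbf M \boldsymbol \mu,$$
whence $\gamma = \boldsymbol \mu^T \mathbf M \boldsymbol \mu /2$, completing the argument. The only step that requires care rather than being fully routine is confirming that the noncentrality parameter assembled coordinate-by-coordinate in the rotated frame really equals the coordinate-free expression $\boldsymbol \mu^T \mathbf M \boldsymbol \mu/2$, but this follows transparently from $\mathbf M = \mathbf P \mathbf D \mathbf P^T$; there is no genuine analytic obstacle, and the result is essentially Cochran-type in nature.
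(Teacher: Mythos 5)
Your proof is correct and follows essentially the same route as the paper: the paper factors $\mathbf M = \mathbf B\mathbf B^{T}$ with $\mathbf B$ an $n\times r$ matrix whose columns form an orthonormal basis of $\mathcal{C}(\mathbf M)$ and observes that $\mathbf B^{T}\mathbf y\sim\mathcal N(\mathbf B^{T}\boldsymbol\mu,\mathbf I_r)$, which is exactly your spectral decomposition $\mathbf M=\mathbf P\mathbf D\mathbf P^{T}$ restricted to the $r$ eigenvectors with eigenvalue one. The identification of the noncentrality parameter via $\boldsymbol\mu^{T}\mathbf B\mathbf B^{T}\boldsymbol\mu=\boldsymbol\mu^{T}\mathbf M\boldsymbol\mu$ is likewise the same step in both arguments.
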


\begin{proof}
    Let $r(\mathbf M)=r$ and let $\mathbf b_{1},\cdots,\mathbf b_{r}\in\mathbb{R}^{n}$ be an orthonormal basis for the column space of $\mathbf M$, say $\mathcal{C}(\mathbf M)$. Let $\mathbf B = [\mathbf b_{1},\cdots, \mathbf b_{r}]\in\mathbf{R}^{n\times r}$ so that $\mathbf M =\mathbf B \mathbf B^{T}$. We have $\mathbf y^{T}\mathbf M \mathbf y = \mathbf y \mathbf B \mathbf B^{T} \mathbf y = (\mathbf B^{T}\mathbf y)^{T}(\mathbf B^{T}\mathbf y)$, where $\mathbf B^{T}\mathbf y\sim \mathcal{N}(\mathbf B^{T}\boldsymbol \mu, \mathbf{B}^{T}\mathbf{B})$. Since the columns of $\mathbf{B}$ are orthonormal, $\mathbf{B}^{T}\mathbf B = \mathbf I$. By definition of noncentral $\chi^{2}$ distribution, $(\mathbf B^{T}\mathbf y)^{T}(\mathbf{B}^{T}\mathbf y)\sim \chi^{2}(r, \boldsymbol \mu \mathbf B \mathbf B^{T}\boldsymbol \mu/2$) where $\boldsymbol \mu^{T}\mathbf B \mathbf B^{T}\boldsymbol \mu = \boldsymbol \mu^{T}\mathbf M \boldsymbol \mu$.
\end{proof}

\subsubsection{Unbiasedness of $\hat{\sigma}^{2}$}

%Define a response vector $\mathbf y = (y_{11},\cdots,y_{1n_{1}}, \cdots,y_{k1},\cdots,y_{kn_{k}})^{T}\in\mathbb{R}^{n}$. Let $\mathbf{V} = [\mathbf x_{1} \mathbf{1}_{n_{1}}^{T}  ,\mathbf x_{2} \mathbf{1}_{n_{2}}^{T},\cdots, \mathbf x_{k} \mathbf{1}_{n_{k}}^{T}]^{T}\in\mathbb{R}^{n\times (p+1)}$ where $\mathbf{1}_{d}$ denotes $d$-dimensional column vector with all elements being 1. Define $\boldsymbol \epsilon = (\epsilon_{11},\cdots,\epsilon_{1n_{1}},\cdots,\epsilon_{k1},\cdots,\epsilon_{kn_{k}})^{T}\in\mathbb{R}^{n}$. Then, the linear model in~\eqref{eq-lm-indi} can be expressed as
%\begin{equation}\label{appendix-eq-lm-indi}
%    \mathbf y = \mathbf V \boldsymbol \beta +\boldsymbol \epsilon,
%\end{equation}

%where $\epsilon\sim\mathcal{N}(\mathbf 0, \sigma^{2}\mathbf{I}_{n})$. We may assume that $\mathbf V$ is of full column rank. 
%Consider the model in~\eqref{eq-model-indi-matrix}. We may assume that 
W.l.o.g., we only consider $\mathbf V$ has full column rank, given that distinctive features among regions. Define $\mathbf J = \mathbf V(\mathbf V^{T}\mathbf V)^{-1}\mathbf V^{T}$ as the orthogonal projection matrix onto $\mathcal{C}(\mathbf V)$ with $r(\mathbf J) = p+1$. It follows that $(\mathbf{I}_{n}-\mathbf V)$ is the orthogonal projection matrix onto $\mathcal{C}(\mathbf{V})^{\perp}$ with $r(\mathbf I_{n}-\mathbf J) = n-p-1$. By Lemma~\ref{appendix-quadratic}, we have 
\begin{equation*}
    \mathbf y^{T}(\mathbf{I}_{n}-\mathbf{J})\mathbf y /\sigma^{2}\sim \chi^{2}(n-p-1),
\end{equation*}
where $\chi^{2}(n-p-1)$ is a chi-sqaured distribution with degree of freedom $n-p-1$ and non-central parameter being zero, 
because $(\mathbf I_{n}-\mathbf J)\mathbf{V} = 0$. Then, one can construct an unbiased estimator: 
\begin{equation}\label{appendix-un-indi}
    \hat{\sigma}^{2} = \frac{\mathbf y^{T} (\mathbf I_{n}-\mathbf J)\mathbf y}{n-p-1}, 
\end{equation}
for $\sigma^{2}$ as $\mathbb{E}\left[\mathbf y^{T}(\mathbf{I}_{n}-\mathbf{J})\mathbf y /\sigma^{2}\right] = n-p-1$.

\subsubsection{Unbiasedness of $\hat{\sigma}^{2}_{\text{agg}}$}
%Define $\mathbf X = [\mathbf x_{1}, \cdots,\mathbf x_{k}]^{T}\in\mathbb{R}^{k\times (p+1)}$ and assume that $\mathbf X$ is of full column rank Then, we can write an aggregated version of model~\eqref{appendix-eq-lm-indi} as
%\begin{equation}\label{appendix-eq-lm-agg}
%\bar{\mathbf y} = \mathbf X \boldsymbol \beta +\boldsymbol \epsilon_{\text{agg}},
%\end{equation}
%where $\boldsymbol \epsilon_{\text{agg}}\sim\mathcal{N}(\mathbf{0},\sigma^{2}\mathbf W^{-2})$ and $\mathbf{W}$ is a $k\times k$ diagonal matrix with diagonal entries being $\sqrt{n_{1}},\cdots,\sqrt{n_{k}}$. From the model equation~\eqref{appendix-eq-lm-agg}, 

From the model~\eqref{eq-model-agg-matrix}, we have
$\mathbf W \mathbf y\sim\mathcal{N}(\mathbf W\mathbf X\boldsymbol \beta, \sigma^{2}\mathbf{I}_{k})$. Assume that $\mathbf X$ is of full column rank. Define $\mathbf H = \mathbf W\mathbf X(\mathbf X\mathbf W^{2}\mathbf X)^{-1}\mathbf X^{T}\mathbf W$ so that $\mathbf H$ is the orthogonal projection matrix onto $\mathcal{C}(\mathbf{WX}$) with $r(\mathbf{WX}) = p+1$. Similarly, $(\mathbf{I}_{k}-\mathbf{H})$ is an orthogonal matrix onto $\mathcal{C}(\mathbf{WX})^{\perp}$ with $r(\mathbf{I}_{k}-\mathbf H) = k-p-1$. By Lemma~\ref{appendix-quadratic}, it follows that
\begin{equation*}
    \mathbf y^{T}\mathbf W(\mathbf I_{k}-\mathbf H)\mathbf W\mathbf y/\sigma^{2}\sim\chi^{2}(k-p-1),
\end{equation*}
%where $\chi^2$ distribution with degree of freedom $k-p-1$ with noncentrality parameter zero 
as $(\mathbf I_{k}-\mathbf H)\mathbf W\mathbf X = 0$. Then, we can use an unbiased estimator 
\begin{equation}\label{appendix-un-agg}
    \hat{\sigma}^{2}_{\text{agg}} = \frac{\mathbf y^{T}\mathbf W(\mathbf I_{k}-\mathbf H)\mathbf W\mathbf y}{k-p-1}
\end{equation}
for estimate $\sigma^{2}$ since $\mathbb{E}\left[\mathbf y^{T}\mathbf W(\mathbf I_{k}-\mathbf H)\mathbf W\mathbf y/\sigma^{2}\right] = k-p-1$.

\subsubsection{Quantification of efficiency}
Note that if a random variable $W$ follows a $\chi^2$ distribution with degree of freedom $d$, then $\text{Var}(W) = 2d$. As such, we can conclude that even though the estimators in~\eqref{appendix-un-indi} and~\eqref{appendix-un-agg} are both unbiased for $\sigma^{2}$, the estimator in~\eqref{appendix-un-indi} is more efficient than one in~\eqref{appendix-un-agg} because $\text{Var}(
\hat{\sigma}^{2}) = 2\sigma^{4}/(n-p-1) < 2\sigma^{4}/(k-p-1) = \text{Var}\left(\hat{\sigma}^{2}_{\text{agg}}\right)$, and this difference is noticeable since the number  of regions $k$ is much smaller than the total number of samples $n$.

\begin{figure}[t!]
  \centering
  \includegraphics[width=\linewidth]{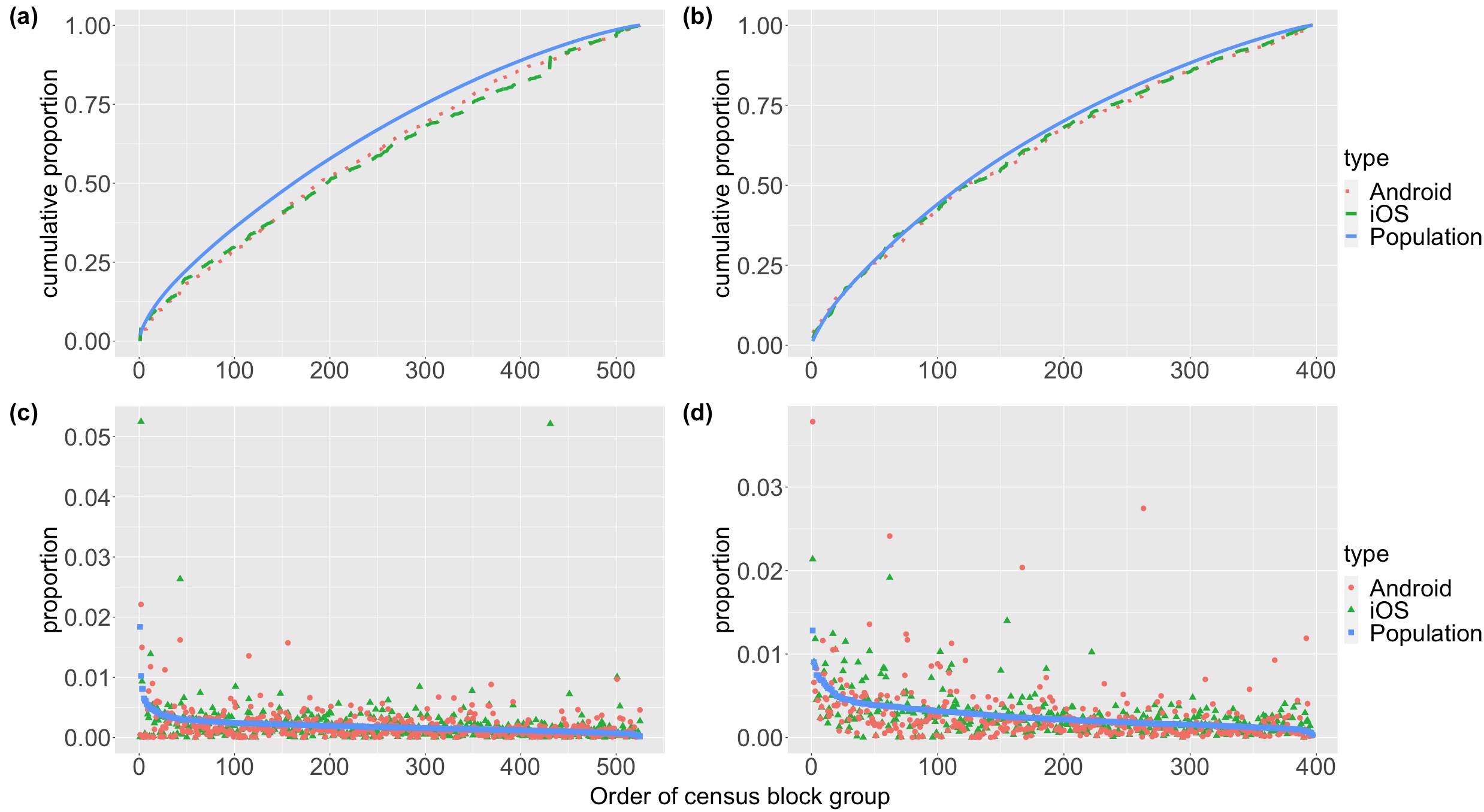}
  \caption{\label{appendix-fig-pop-sam}Comparison of the distributions of population and sample sizes. Census block groups are ordered from the one with the largest population to the one with the smallest population.  Cumulative distribution functions are of population/sample sizes from iOS and Android devices at different census block groups in City C (part (a)) and City D (part (b)). The 
   probability mass functions are of population/sample sizes from iOS and Android devices in City C (part (c)) and City D (part (d)). The number on the x-axis indicates the rank of the population size in each census block group; the census block group with the largest population has rank 1. } 
\end{figure}

\vspace*{0.1in}
\section{Other cities and devices types}\label{appendix-sec:othercities}

\subsection{Regional sampling bias detection and correction}\label{appendix-subsec:bias-detection}

Here we provide additional numerical results of the regional sampling bias in two other cities, referred to as City C and City D. We also examine whether the regional sampling bias affects the estimation of internet speed and its association with regional demographic profiles, as discussed in Section~\ref{sec:bias-detection}. 
%We observe that sample sizes do not align with population of corresponding census block groups, and the difference seems to be more noticeable for City C,  
We compare the cumulative distribution of sample sizes with population in Fig.~\ref{appendix-fig-pop-sam} for City C and D, and the difference seems to be more noticeable for City C. 
%and this deviance is more severe for Android than iOS devices. 

\begin{figure}[t!]
  \centering
  \includegraphics[width=\linewidth]{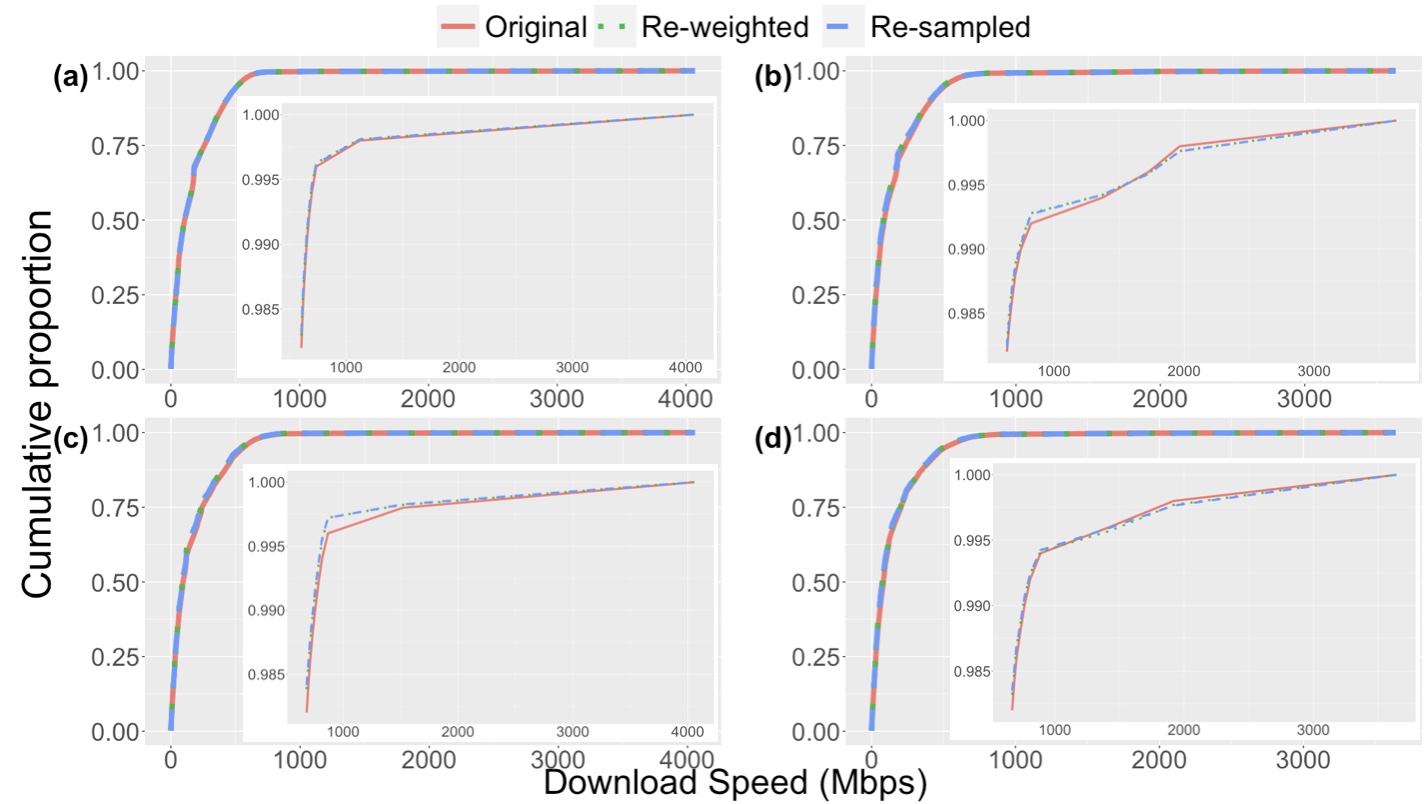}
  \caption{\label{appendix-fig-cdf-compare}
  Comparison of empirical CDFs of the internet download speed from the original samples with re-weighted empirical CDF from original samples, and empirical CDF from re-sampled data. (a) iOS devices in City C; (b) Android devices in City C; (c) iOS devices in City D; and (d) Android devices in City D. The insets show zoom-in plots of the estimated CDFs of the internet download speed above the 98th percentile from the original samples.}
\end{figure}

\begin{table*}[ht!]
\centering
  \caption{The result from $\chi^{2}$ homogeneity test for City C and D.}
  \label{appendix-tab:chi-test}
  \begin{tabular}{ccccc}
    \toprule
    City & \# of regions & Device type & Test statistics $W$ & p-value\\
    \midrule
    \multirow{2}{*}{C} & \multirow{2}{*}{525} &                   iOS & 780,884 & $<10^{-16}$\\
              &       & Android & 128,545 & $<10^{-16}$\\
    \midrule
    \multirow{2}{*}{D}& \multirow{2}{*}{397} &                  iOS & 88,953 &                                 $<10^{-16}$\\
                &    & Android & 142,934 &     $<10^{-16}$\\
    \bottomrule
  \end{tabular}
\end{table*}

This deviance is statistically evaluated by the chi-squared homogeneity test as shown in Table~\ref{appendix-tab:chi-test}. For each device type in either City C or D, we find that the proportion of sample sizes is significantly different from the proportion of the population. We further compare the cumulative distribution functions of internet speed in Fig.~\ref{appendix-fig-cdf-compare} based on three different empirical CDFs introduced in Section~\ref{subsec:cdf}. Neither re-weighing and re-sampling methods  show a notable  difference from the empirical CDF with regional sampling bias, which is consistent with the findings for City A and  B. 
%and this suggests that its impact on the estimation of internet speed distribution is tangential in City C and D.

The demographic disparity between the over- and under-sampled census block groups of City C and City D for iOS devices is visualized in  Fig.~\ref{appendix-fig-boxplot}.
%as discussed in Section~\ref{subsec:boxplot}. 
We find that in both City C and D that the over-sampled regions tend to have higher income, greater age,  a larger proportion of population with a bachelor's degree or higher, a greater percentage of households with internet subscription plans, higher representation of white and Asian residents; and lower representation of black and Hispanic residents. 
%We also compare the over- and under-sampled regions in Fig.~\ref{appendix-fig-boxplot-android-CD} for Android devices. The over-sampled regions in City C are likely to have higher age, and larger population with bachelor's degree or higher, and less Hispanic population whereas the over-sampled regions in City D tend to have less Hispanic population.
The differences by demographic profiles between over-sampled and under-sampled regions across all four cities generally agree with each other.

\begin{figure}[t!]
  \centering
  \includegraphics[width=.9\linewidth]{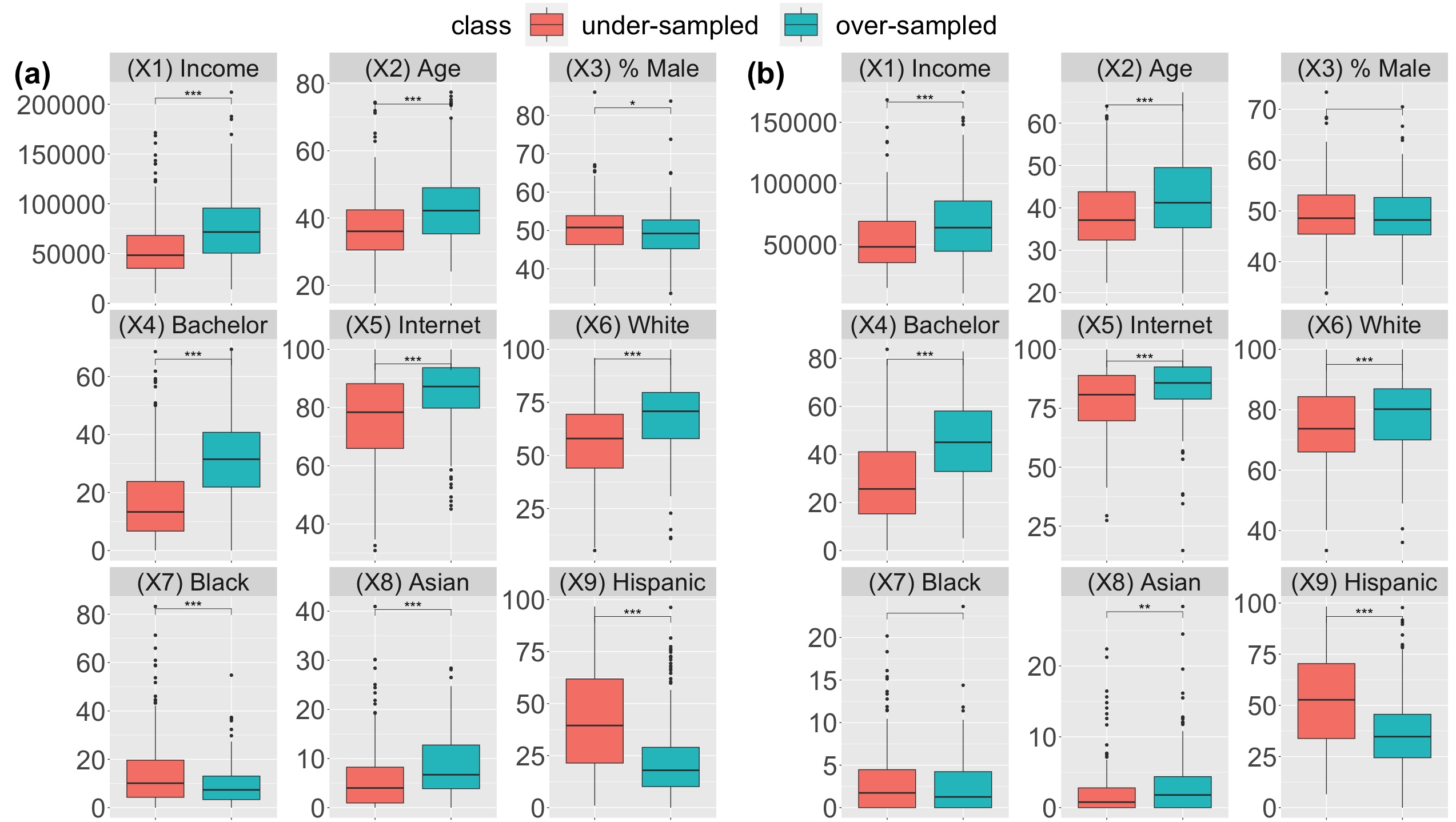}
  \caption{\label{appendix-fig-boxplot}
  Comparison of  distributions of demographic variables between over-/under-sampled census block groups. Significance is based on the p-values from the two-sample t-test: *** (p-value$<$0.001); ** (p-value$<$0.01); * (p-value$<$0.05); and empty mark for non-significant cases. (a) iOS devices from City C; (b) iOS devices from City D. }
\end{figure}

\begin{figure}[t!]
  \centering
  \includegraphics[width=.9\linewidth]{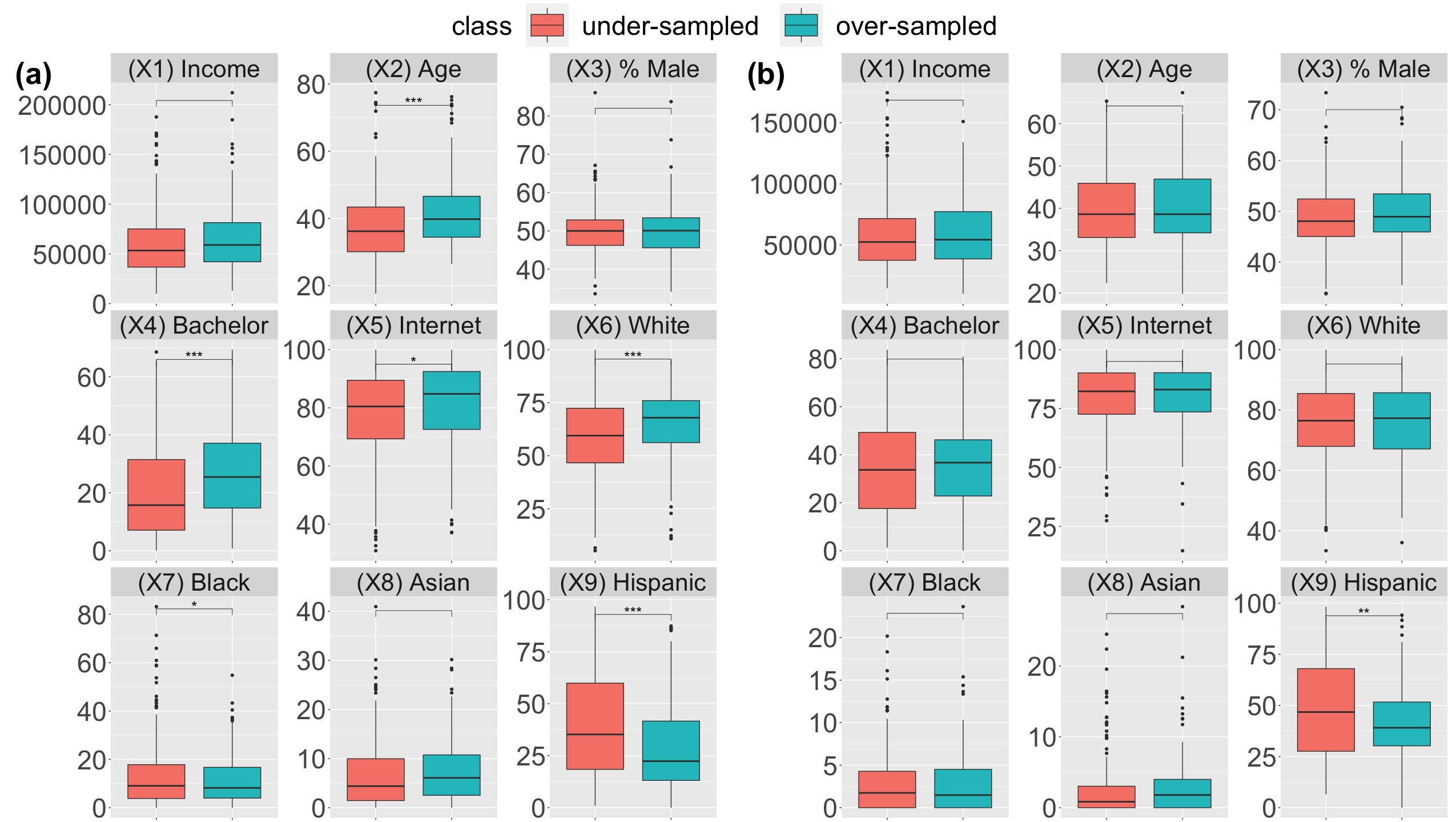}
  \caption{\label{appendix-fig-boxplot-android-CD}
  Comparison of  distributions of demographic variables between over-/under-sampled census block groups. Significance is based on the p-values from the two-sample t-test: *** (p-value$<$0.001); ** (p-value$<$0.01); * (p-value$<$0.05); and empty mark for non-significant cases. (a) Android devices from City C; (b) Android devices from City D. }
\end{figure}

%We complement the analysis in Fig.~\ref{fig-boxplot} of Section~\ref{subsec:boxplot} by adding 
%The comparison of Android devices for the four cities are given in Fig.~\ref{appendix-fig-boxplot-android} and Fig. \ref{appendix-fig-boxplot-android-CD}. 
%to compare over- and under-sampled census block groups for Android devices in City A and B. 
The comparison of Android devices for City C and D is given in Fig.~\ref{appendix-fig-boxplot-android-CD}. 
We find that the over-sampled census block groups tend to have similar demographic characteristics to those of the iOS devices for each city. The difference of demographic profiles between over-sampled regions and under-sampled regions for the Android devices tends to be smaller than those of the iOS devices. 
%in City A and B that the over-sampled census block groups tend to have higher age, larger population with bachelor's or degree or higher, which is in line with the results in Fig.~\ref{fig-boxplot}. 
%The census block groups in City A with more white population and less black residents are likely to have more speed test samples.

\subsection{Correlating internet quality with demographic variables}\label{appendix-subsec:lm}

\begin{figure}[hbt!]
  \centering
  \includegraphics[width=\linewidth]{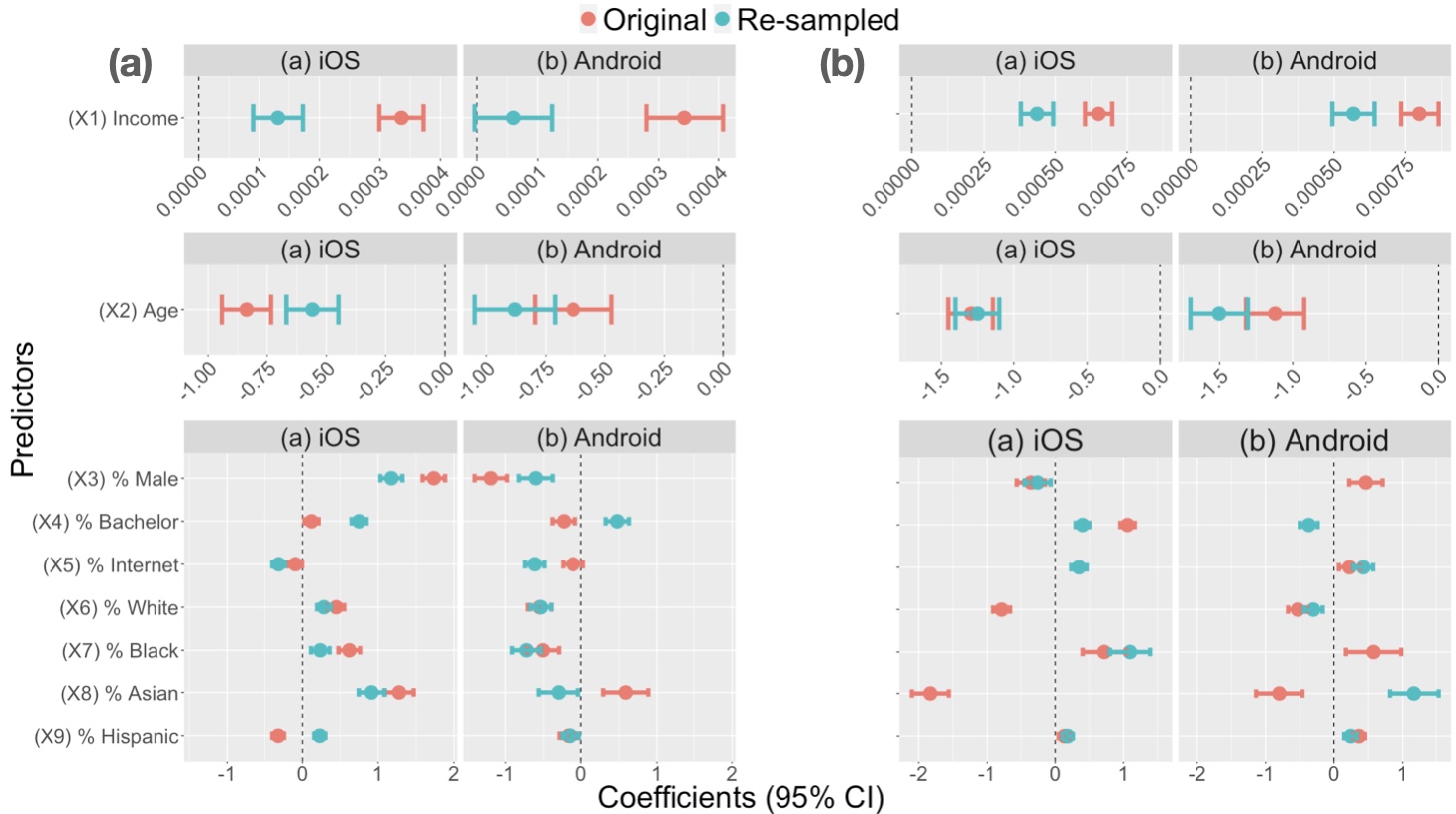}
  \caption{\label{appendix-fig-coef-compare}
  Comparison of regression coefficient estimates (dots) and 95\% confidence intervals (bars) from original data and re-sampled data. Multiple linear regression is conducted with backward variable selection by AIC for both iOS and Android from (a) City C and (b) City D. %95\% confidence interval (CI) is shown with bars. 
  The variables dropped from model selection are not shown in the plot.}
\end{figure}

\begin{figure}[hbt!]
  \centering
  \includegraphics[width=.8\linewidth]{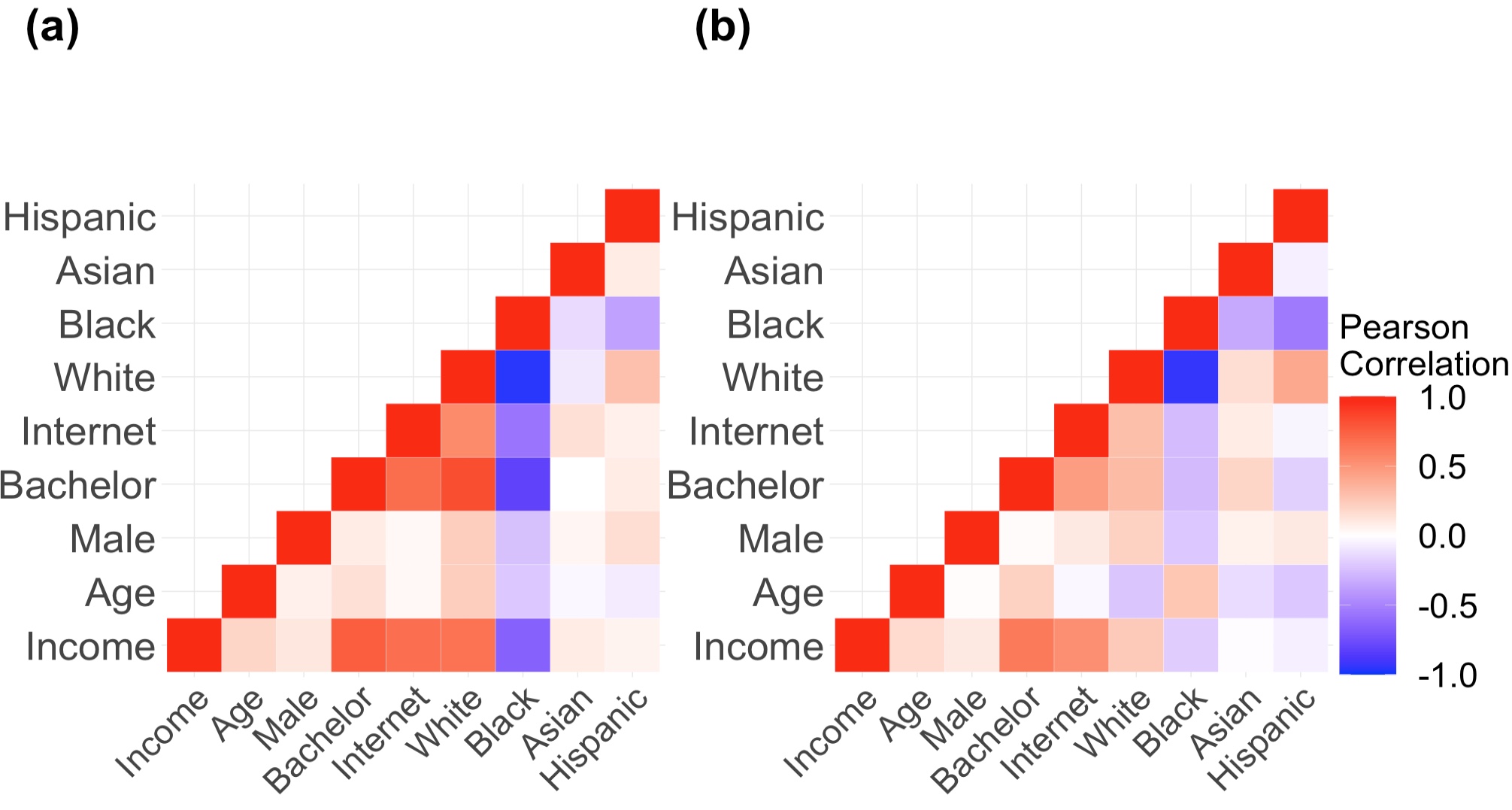}
  \caption{\label{appendix-fig-corrplot_android}
   Heat-map of pair-wise correlation coefficients between demographic covariates from re-sampled data for 
   %from original data. 
   Android devices in
   (a) City A and (b) City B. 
   %The case of Android devices is provided in Appendix
   }
\end{figure}

\begin{figure}[hbt!]
  \centering
  \includegraphics[width=.8\linewidth]{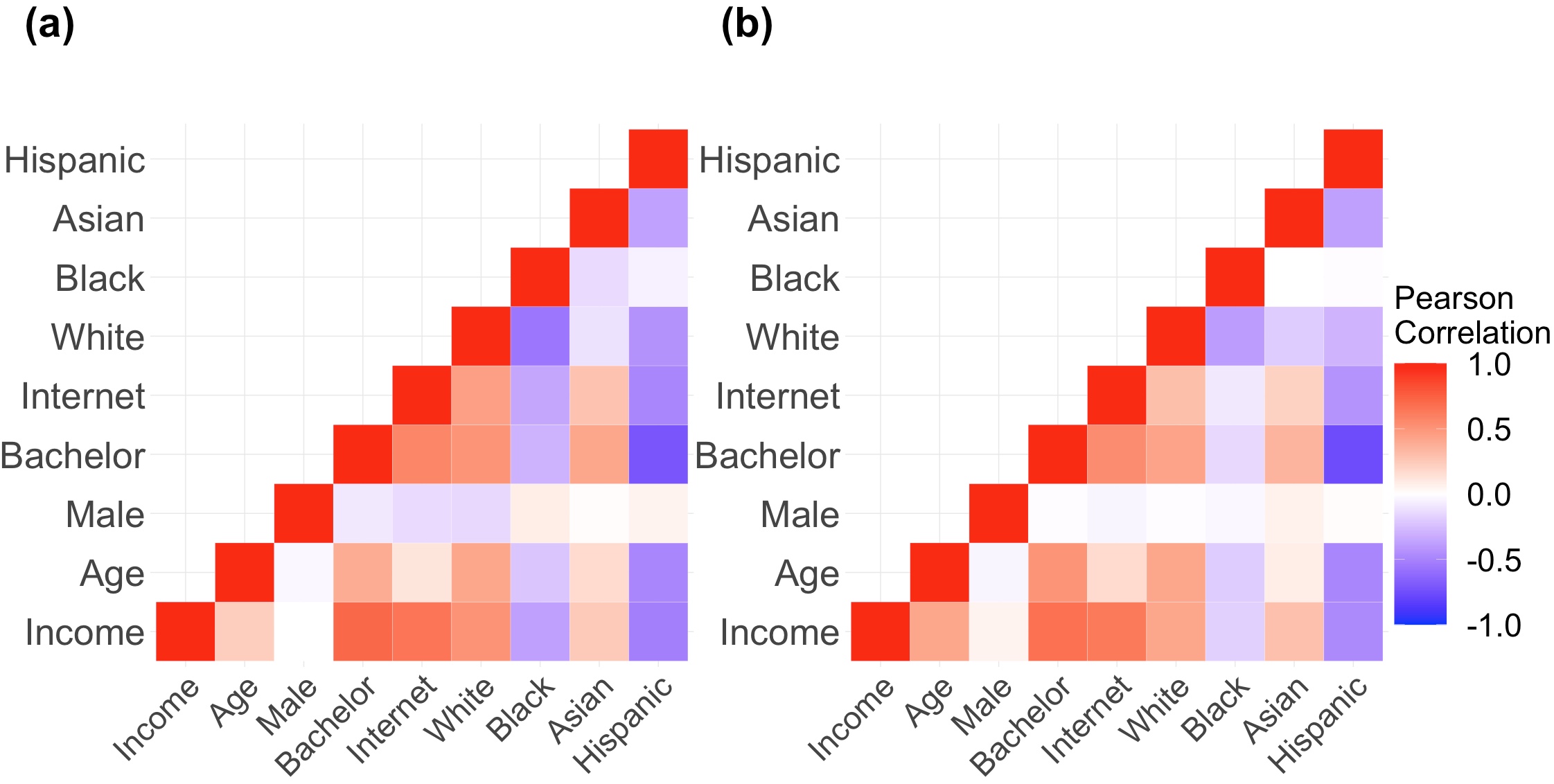}
  \caption{\label{appendix-fig-corrplot}
   Heat-map of pair-wise correlation coefficients between demographic covariates from re-sampled data for 
   %from original data. 
   iOS devices in
   (a) City C and (b) City D. 
   %The case of Android devices is provided in Appendix
   }
\end{figure}

\begin{figure}[hbt!]
  \centering
  \includegraphics[width=.8\linewidth]{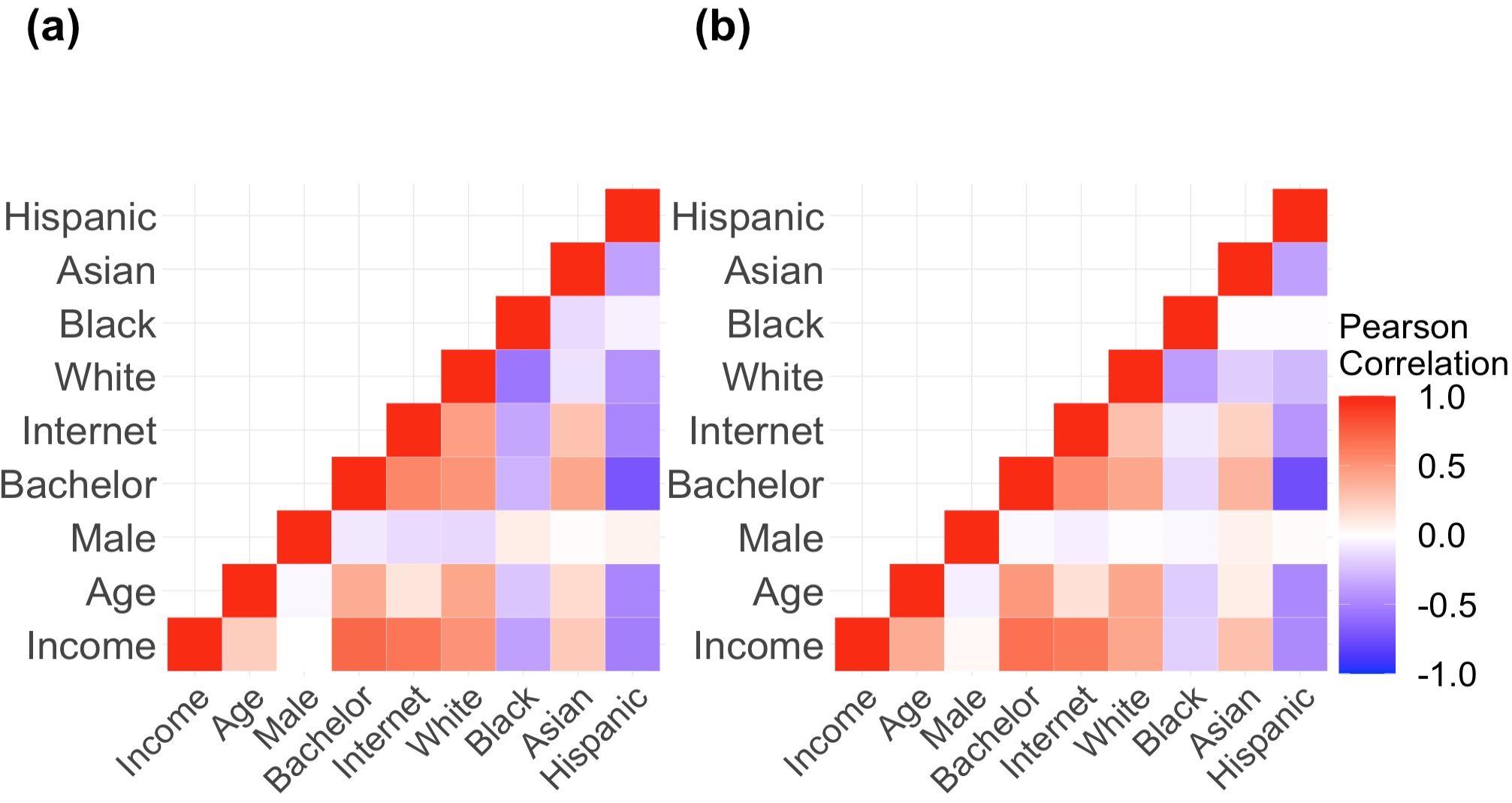}
  \caption{\label{appendix-fig-corrplot_android_CD}
   Heat-map of pair-wise correlation coefficients between demographic covariates from re-sampled data for 
   %from original data. 
   Android devices in
   (a) City C and (b) City D. 
   %The case of Android devices is provided in Appendix
   }
\end{figure}

We applied the regression models~\eqref{eq-model-indi} and~\eqref{eq-model-agg} in Section~\ref{sec:lm} for both iOS and Android devices in City C and D with backward variable selection based on AIC. The estimated coefficients from the selected variables and $95\%$ confidence intervals of the estimation are shown in Fig.~\ref{appendix-fig-coef-compare}. First,  income is positively correlated with internet speed for both cities and devices types, whereas the coefficients of income from the re-sampled data have a smaller impact on the internet speed. 
%This may be explained by the positive correlation between income and proportion of the bachelor degrees shown in Figs \ref{appendix-fig-corrplot} and \ref{appendix-fig-corrplot_android_CD} for iOS and android devices, respectively. As the 
 Second,   the coefficients of age are negative from both the original and re-sampled data in  City C and  D, suggesting that the regions with younger age tend to have faster internet. 
 %To take into account the interplay between predictors, we analyze the result in Fig.~\ref{appendix-fig-coef-compare} along with the correlation heat map of predictors in Fig.~\ref{appendix-fig-corrplot} for iOS devices from both City C and D. 
Third, we observe in part (a) of Fig.~\ref{appendix-fig-corrplot} that the proportion of Hispanic residents is negatively correlated with the percentage of bachelor's degree or higher, which can partially explain the increased coefficient of Hispanic after re-sampling along with the coefficient of bachelor's degree, particularly for iOS devices. Furthermore,  the internet penetration rate has a positive correlation with income and proportion of individuals with bachelor's degrees or higher, as shown in  Fig.~\ref{appendix-fig-corrplot} and Fig. \ref{appendix-fig-corrplot_android_CD}. Such correlation can potentially explain the  coefficients of the internet penetration rates for iOS devices for City C and City D in the re-sampled data, as positive coefficients of bachelor and income offsets the impact of internet penetration rates. 
%This correlation structure is consistent for Android devices as shown in Fig.~\ref{appendix-fig-boxplot-android-CD} and can be relatable to the fact that the predictor for the internet appears after re-sampling with the decreased coefficients of income and bachelor's degree. 

The results from the impact of demographic profiles in Cities C and  D are generally consistent with Cities A and  B.  Census block groups with higher income and younger population tend to have higher internet speed. Such effects can be offset by other positively correlated variables, such as internet penetration rates. 
%%already discuss before
%The correlation heat map of re-sampled data from Android devices in City A and B is provided in Fig.~\ref{appendix-fig-boxplot-android}. We find substantially positive correlation between income and bachelor's degree, which is consistent with the case from iOS devices as shown in Fig.~\ref{fig-corrplot}. We also note that in City A, the percentage of black residents is negatively correlated with bachelor's degree and the proportion of white population.

\subsection{Temporal progression of  internet speed}\label{appendix-subsec:temporal}

\begin{figure}[t!]
  \centering
  \includegraphics[width=\linewidth]{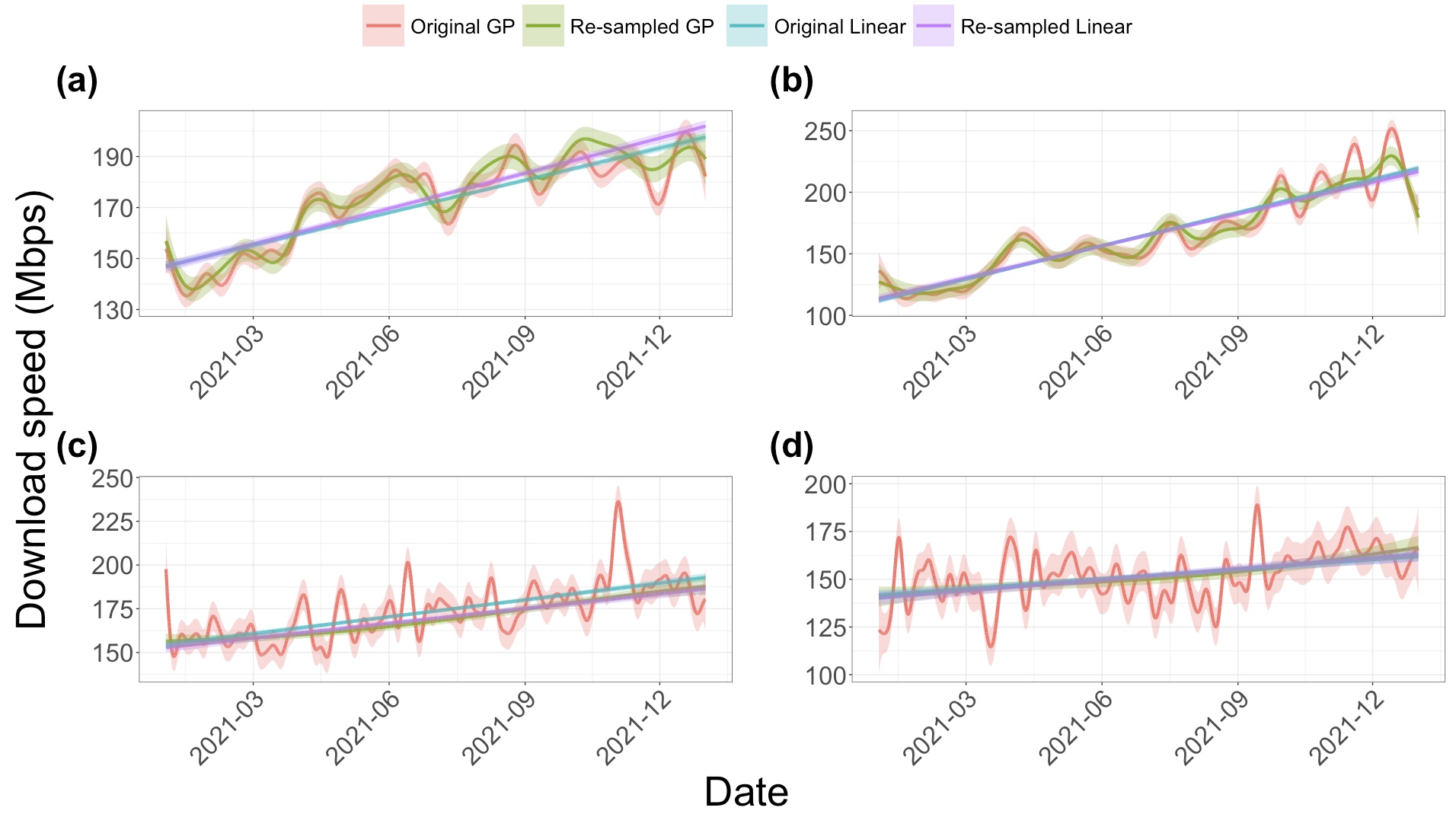}
  \caption{\label{appendix-fig-time-trend}
  Comparison of temporal trend of internet download speed from original versus re-sampled data based on linear and Gaussian process (GP) regression represented by the state space model. (a) iOS devices in City C; (b) Android devices in City C; (c) iOS devices in City D; and (d) Android devices in City D. The solid curves are the predictive mean and the shaded areas are the 95\% percent confidence  interval of the estimation. }
\end{figure}

We analyze the temporal progression of internet speed for Cities C and D based on methods discussed in Section~\ref{sec:temporal}, where in this case the available data spans from 01-01-2021 to 12-31-2021 for both cities. 
 The estimated mean and $95\%$ confidence intervals from both linear models and state space models  are shown in Fig.~\ref{appendix-fig-time-trend}. All models indicate that internet download speed increases over the 1-year period.  %While the models in Fig.~\ref{appendix-fig-time-trend} support the improving trend of internet speed over time, the state space model from original data shows the common surge of internet speed between March 2021 and May 2021. 
 Estimation from original data and re-sampled data seems to be similar for these two cities, indicating that the regional sampling bias across census block groups does not lead to large difference in estimation of temporal trend. 
 %The trends from original data in City C is not significantly different from the re-sampled data for both linear and state space models. We find in City D that space state models from original data shows large volatility over time, suggesting that the original data contains a lot of noise. 
 The data for both cities contains large variability, potentially due to the different internet subscription plans. The availability of fiber internet plan can be one of the main  reasons to drive the increase of internet speed. 
% Section~\ref{sec:conclusion}, internet subscription plans can be a latent factor for this observed noise.

%We first focus on estimating linear trend of temporal improvement of internet speed in Table~\ref{appendix-tab:time-lm}. We found that  increasing trend of internet speed for both iOS and Android devices in City C and D. We observe that re-sampling method tend to alleviate this linear trend except Android devices in City C, and the coefficient decreased significantly for iOS devices in both City C and D.

\begin{table*}[hbt!]
\centering
  \caption{Linear trends of measured internet download speed (Mbps) per day for Cities C and D.}
  \label{appendix-tab:time-lm}
  \begin{tabular}{cccc}
    \toprule
      City & Device type & \thead{Estimates $\hat{\beta}_{1t}$\\(95\% CI)} & \thead{Estimates $\hat{\beta}_{1t}^{r}$\\(95\% CI)} \\
    \midrule
       \multirow{2}{*}{C} 
       &  iOS & 
       \thead{0.1384 \\ (0.1310, 0.1458)} & 
       \thead{0.1509 \\ (0.1403, 0.1615)}\\
    
          & Android & 
          \thead{0.2944 \\ (0.2827, 0.3062)} &  
          \thead{0.2824 \\ (0.2659, 0.2988)}\\
     \midrule
     \multirow{2}{*}{D} 
     &  iOS & 
     \thead{0.1052 \\ (0.0950, 0.1154)} & 
     \thead{0.0919 \\ (0.0786, 0.1053)}\\
    
          & Android & 
          \thead{0.0552 \\ (0.0428, 0.0676)} &  
          \thead{0.0633 \\ (0.0460, 0.0805)}\\
    \bottomrule
  \end{tabular}
\end{table*}

We further compare the estimated linear coefficients  $\hat{\beta}_{1t}$ and $\hat{\beta}_{1t}^{r}$ of the linear trend of internet download speed for City C and City D in Table~\ref{appendix-tab:time-lm}. All estimates are positive for both device types, 
whereas City C may have a faster increasing trend than City D.

\end{document}